\colorlet{drkblue}{blue!61.8!black}
\definecolor{darkorange}{rgb}{0.7, 0.2, 0}
\definecolor{beaublue}{rgb}{0.74, 0.83, 0.9}
\definecolor{blond}{rgb}{0.98, 0.94, 0.75}
\definecolor{buff}{rgb}{0.94, 0.86, 0.51}
\definecolor{amber}{rgb}{1.0, 0.75, 0.0}
\newtheoremstyle{def}
{1pt}
{1pt}
{}
{}
{\bfseries}
{}
{.5em}
{}
\theoremstyle{plain}
\newtheorem{theorem}{Theorem}
\newtheorem*{theorem*}{Theorem}
\newtheorem{lemma}{Lemma}
\newtheorem*{lemma*}{Lemma}
\newtheorem{proposition}{Proposition}
\newtheorem*{proposition*}{Proposition}
\newtheorem*{corollary}{Corollary}
\theoremstyle{definition}
\newtheorem{definition}{Definition}
\newtheorem{example}{Example}
\newtheorem{property}{Property}
\newtheorem{assumption}{Assumption}
\newtheorem*{assumption*}{Assumption}
\newtheorem{condition}{Condition}
\theoremstyle{remark}
\DeclareMathOperator*{\argmax}{argmax}
\newcommand{\rbb}{\mathbb{R}}
\newcommand{\p}{\mathbb{P}}
\newcommand{\E}{\mathbb{E}}
\newcommand{\ecal}{\mathcal{E}}
\newcommand{\acal}{\mathcal{A}}
\newcommand{\mcal}{\mathcal{M}}
\newcommand{\mstar}{\mathcal{M}^*}
\newcommand{\mtilde}{\widetilde{\mathcal{M}}}
\newcommand{\Ggg}{\Gamma_{gg}}
\newcommand{\Ggb}{\Gamma_{gb}}
\newcommand{\Gbg}{\Gamma_{bg}}
\newcommand{\Gbb}{\Gamma_{bb}}
\newcommand{\gtheta}{\theta_{G} }
\newcommand{\otheta}{\overline{\theta} }
\newcommand{\sutheta}{\theta_{*} }
\newcommand{\sotheta}{\theta^{*} }
\newcommand{\vbar}{\overline{V} }
\newcommand{\tbar}{\overline{t} }
\newcommand{\tstar}{t^{*} }
\newcommand{\pistar}{\pi^{*} }
\newcommand{\pibar}{\overline{\pi} }
\newcommand{\expdelta}{\exp(-\Delta(\lambda_g+\lambda_b)) }
\newcommand{\iclh}{(IC$_{lh}$) }
\newcommand{\ichl}{(IC$_{hl}$) }
\newcommand{\irl}{(IR$_{l}$) }
\newcommand{\irh}{(IR$_{h}$) }
\titleformat*{\section}{\Large \bfseries}
\titleformat*{\subsection}{\large \bfseries}
\titleformat*{\subsubsection}{\normalsize \itshape}
\title{Selling Correlated Information Products\thanks{I thank especially Andrzej Skrzypacz and Weijie Zhong for their guidance and support during the early stages of the project and throughout. I am grateful to Ravi Jagadeesan whose comments greatly improved the paper. I am also grateful to Steven Callander, Joey Feffer, Michael Ostrovsky, Frank Yang, Kemal Y{\i}ld{\i}z, and seminar participants at Stanford University for helpful comments and suggestions.}}
\author{Klajdi Hoxha\thanks{Graduate School of Business, Stanford University. Email: hklajdi@stanford.edu.}}
\date{\today}
\begin{document}

\maketitle

\begin{abstract}
    How do consultants price expertise? This paper studies a problem of selling information products (\textit{expertise}) to a buyer (\textit{client}) who faces decision-making problem under uncertainty. The client is privately informed about the type of expertise she needs and her willingness to pay (WTP) for additional information. A monopolist seller (\textit{consultant}) designs and sells information products as Blackwell experiments over the underlying states associated with each \textit{client-specific} desired expertise. Because there is correlation across states, a client with high WTP may find it profitable to purchase information about a low type's state, whenever correlation is sufficiently high. I find that the consultant can extract full (socially efficient) surplus whenever such (marginal) gains do not exceed the (marginal) costs of buying cheaper, but noisier information. Otherwise, unlike typical results in mechanism design, I find that buyers with low and sufficiently high value for information get no information rents, and only the ``middle'' types enjoy positive surplus. Common pricing structures observed in practice, like flat/hourly rates or value-based fees, are obtained as optimal contracts if correlation across states is sufficiently high or low, respectively.\\
    \vspace{0in}\\
    \noindent\textbf{Keywords:} Mechanism design, information design, correlation, horizontal \& vertical differentiation, Blackwell experiments, Markov chain.\\
    \bigskip
\end{abstract}
\thispagestyle{empty} 
\pagebreak \newpage

\section{Introduction}

Consider a firm that needs professional consulting services regarding a new investment project. By downplaying the size of investment or project plan, the firm could potentially avoid paying higher fees to the consultant. If the expertise to be acquired is sufficient  to carry out the intended project goals, the firm would possibly be better off. Similarly, a tutee need not reveal the purpose of private lessons if revealing information about personal goals would only possibly increase price of tutoring, rather than its quality. Faced with such trade-offs, how should a monopolist expert, like a consultant or a tutor, price information?

This paper develops a model of selling information products (expertise) and studies optimal monopoly pricing. An agent (firm) faces a decision-making problem under uncertainty. The firm's private information is the project type (size) $\theta$, but is uncertain about whether its underlying state of uncertainty (project profitability) $\omega_\theta$ is good, $\omega_\theta = g_\theta$, or bad, $\omega_\theta=b_\theta$. The firm's actions are whether to undertake the project, $a_g$, or not, $a_b$, and gets an ex-post payoff of $u(\theta)$ if taking the right action at the right state. There is a common prior $\mu$ over the set of states $\omega=(\omega_\theta)$. The monopolist seller (consultant) designs and sells additional information about state $\omega$ as a Blackwell experiment.

As a simple case, consider zero correlation across states: $\mu=\times \mu_\theta$. No type has incentives to buy information about the other states, and so the seller engages in first-degree price discrimination. On the other extreme, suppose states are \textit{perfectly} correlated. Then, each type is ex-ante symmetric regarding information about own state, $\mu_\theta=\mu_{\theta'}$, so the only source of heterogeneity among the buyers' types is their ex-post payoff $u(\theta)$. Thus, seller's problem reduces to the standard single good screening problem. If instead we assume symmetry in ex-post payoffs $u(\theta)=u(\theta')$, but introduce differences in \textit{interim} marginal prior beliefs, then optimal pricing becomes non-trivial. This case has been extensively studied in \cite{bergemann2018design}, where screening is feasible due to differences in the value for additional information, stemming from asymmetries in buyer's interim beliefs.

In this paper, I consider \textit{imperfect} correlation across states and differences in ex-post payoffs, but switch off ex-ante information asymmetries by considering $\mu \equiv \mu_\theta$ for all types $\theta$. One can view this setting as complementary to \cite{bergemann2018design}, each with distinct screening forces. The first result shows that the seller extracts full, socially \textit{efficient}, surplus whenever the (marginal) gains are lower than (marginal) costs of imitations. Imitation costs are the novel feature of the model which are not present in standard mechanism design problems. If there is low correlation across states, information spillovers are noisy across states, so the seller could offer \textit{customized} information of highest quality, priced at the maximal willingness to pay.

In the general case, the optimal selling strategy features: (i) full, socially efficient, surplus extraction for high types; (ii) leaves positive surplus to the ``middle" types; and (iii) full, but socially \textit{inefficient}, surplus extraction for low types. The intuition behind (i) is that the size of imitation costs depend (here, proportionally) on the value of ex-post payoff $u(\theta)$. With diminishing gains ($u''(\theta)<0$), for sufficiently large $\theta$ the costs eventually exceed the gains of imitations, allowing seller to extract full surplus. Instead, the middle types enjoy positive surplus, so long as they provide sufficient revenue for the seller. And lastly, the seller offers distorted information to low types, which is just enough to make them indifferent between buying or not, and of no value to other types.

As an application to pricing expertise, such as consulting or tutoring services, the model offers a possible explanation of the commonly observed pricing schedules. In particular, a posted price being optimal when states are perfectly correlated is analogous to a flat/hourly rate fee structure. Intuitively, if a tutor offers lessons on economics 101, it is most likely they charge the tutee by the hour given that the contents of the lessons can't be customized. On the other end, a value-based fee is usually common when the client asks for project-specific expertise. In this situation, concealing/changing valuable information about the particulars of project greatly lessens the usefulness of expertise.

\subsection{Related Literature}
This paper relates to the literature on selling information, screening and product differentiation. Methodologically, it contributes to the literature on models of complex environments.

\subsubsection{Sale of Information}
Among the first to analyze the sale of information is \cite{admati1986monopolistic, admati1990direct}, which considers a monopolistic seller providing supplemental information to ex-ante homogeneous strategic traders. The information spillovers arise from market interactions leaking information via prices. Offering noisy and personalized signals to each trader while guaranteeing that in aggregate they don't affect equilibrium prices alleviates the tension and improves seller's profits. There has been more recent work in this direction, with \cite{rodriguezstrategic} considering strategic interactions of ex-ante privately informed buyers. This paper does not consider strategic interactions among buyers, but studies the provision of heterogeneous and correlated information products in a monopolistic market.

Two closely related models are studied in \cite{babaioff2012optimal} and \cite{liu2021optimal}. \cite{babaioff2012optimal} consider a more general setting of selling information: the seller does not commit ex-ante to a selling mechanism, thus making provision of information subject to the privately observed state $\omega$. Similar to our setting, the state $\omega$ is correlated with the buyers' privately observed type $\theta$, and both affect ex-post payoff $u(\omega,\theta)$. However, characterizing the optimal selling mechanism at such level of generality becomes quite intractable, and their results are restricted to algorithmic computations of the optimal mechanism. A more tractable framework is then considered in \cite{liu2021optimal} by studying the case of independent state $\omega$ and type $\theta$, as well as specifying a particular functional form of $u(\omega,\theta)$. The optimal menu takes the form of a threshold experiment that reveals the state $\omega$ to all types after some threshold type $\hat{\theta}$. I also gain tractability by assuming independence of the state $\omega$ and type $\theta$, ex-ante equally informed buyers and ex-ante commitment of the seller to a menu of information products and prices. Yet, I depart from their settings by enriching the state $\omega=(\omega_\theta)$ to be multidimensional (possibly infinite), and have buyers care about a type-specific state $\omega_\theta$. Using tools from the theory of Markov chains to structure the set of feasible priors on $(\omega_\theta)$, allows me to give a complete characterization of the optimal mechanism.

Another body of literature in disclosing information, models buyers as facing uncertainty regarding a single object's true valuation, which are conceptually distinct from a decision-making problem under uncertainty. In \cite{esHo2007price} a consultant can refine a client's estimate of the project only through a given information structure. Nevertheless, it assumes that clients' actions are contractible, which allows them to design an optimal contract as if they could perfectly reveal the project's estimate. I don't allow for contractible actions but do allow seller to design any arbitrary information structure. See also a related article, \cite{esHo2007optimal}, analyzing information disclosure to a number of agents and their \textit{handicap} optimal auction. Lastly, see \cite{bergemann2019markets} for a review of the literature on markets for information.

\subsubsection{Screening and Product Differentiation}

This paper relates to the vast literature on screening and product differentiation. In the simple case of perfect correlation, the model reduces to a standard one good monopolist which screens along the quality dimension (see \cite{mussa1978monopoly} and \cite{maskin1984monopoly}). In the general case, the seller provides information products that are differentiated on the horizontal dimension (type of expertise) and the vertical dimension (quality of expertise). It relates to, yet conceptually distinct from, models of non-linear pricing of product varieties along a quality dimension and brand preference (see \cite{salop1979monopolistic}, \cite{perloff1985equilibrium},  \cite{matthews1987monopoly}, \cite{spulber1989product},  \cite{rochet2002nonlinear}, and also \cite{stole2007price} for a survey). Solving these models of multidimensional private information is in general a difficult task unless there are simplifying assumption on the preferences. In this paper, consumers' private information is one dimensional, but it is possible for the seller to engage in both vertical and horizontal differentiation. I show that it is optimal to only screen vertically. Moreover, there are two instruments seller can use to screen vertically: quality of the signal in the good and the bad state. Nevertheless, I show that distorting only one of the signals is sufficient.

\subsubsection{Complex Environments}
Methodologically, this paper contributes to the literature on models of informationally rich and complex environments. \cite{jovanovic1990long} impose properties on technological productivity, a random productivity function $z(x)$ over technology varieties $x\in[0,1]$, to study productivity growth in industries. Their four properties, (i) continuity, (ii) zero drift, (iii) constant proportional uncertainty and (iv) independent increments, imply that $z(\cdot)$ belongs to the family of a Brownian motion parameterized by zero drift and variance $\sigma^2$, representing technological opportunity. In my model, there is a finite number (\textit{two}) of unknown states $\omega_\theta$ over each expertise variety $\theta$. I impose homogeneity and Markovian properties similar to (iii) and (iv), respectively, which allows us to characterize the common prior as a Markov Chain parameterized by transition rates $\lambda$, representing expertise similarity. The Brownian motion technology reemerges in \cite{callander2008theory}, which has since initiated an influential line of research with applications on experimentation and dynamic learning, and strategic communication.

\section{Model}
A monopolist seller (he) sells information to a buyer (she) who faces a decision-making problem under uncertainty. Buyer's private information is a type $\theta \in \Theta$ which is drawn according to a continuous distribution $F \in \Delta(\Theta)$. Type $\theta$ buyer cares to resolve the uncertainty about a state $\omega_\theta \in \Omega_\theta$, assumed to be either \textit{good} ($\omega_\theta = g_\theta$) or \textit{bad} ($\omega_\theta = b_\theta$).\footnote{The restriction to two states is sufficient for the purpose of economic trade-offs I intend to capture.} The buyer takes an action $a\in A=\{a_g,a_b\}$ and gets an ex-post payoff of $u(\theta)$ if she takes the (good) action $a_g$ in the good state, or the (bad) action $a_b$ in the bad state, and $0$ otherwise. The ex-post payoff function is described as following: \begin{equation*} \label{dfn:expostpayoff}
    \begin{tabular}{c|cc}
		$u(\omega_{\theta},a)$ & $a_g$ & $a_b$ \\
		\cline{1-3} 
		$g_\theta$ & $u(\theta)$ & $0$ \\
        $b_\theta$ & $0$ & $u(\theta)$    
	\end{tabular}
\end{equation*} Let $\omega = (\omega_\theta)$ be the state of nature, which is drawn from the set $\Omega \equiv \prod_{\theta \in \Theta} \Omega_\theta$. The buyer is imperfectly informed about the state of nature $\omega$, so let each type share a common probability space (prior) $\mu = (\Omega,\acal,\p)$. Assume each $\{\omega_\theta\} \times \prod_{\theta' \in \Theta\setminus \{\theta\}} \Omega_{\theta'} \in \acal$, and denote the (marginal) probability on the good state $\omega_\theta=g_\theta$ by $\mu_{\theta}$, that is, \begin{equation*}\label{eqn:prior}
    \mu_{\theta} := \p \left( \{g_\theta\} \times \prod_{\theta' \in \Theta\setminus \{\theta\}} \Omega_{\theta'} \right).
\end{equation*}

The seller can provide additional information, at zero marginal cost, via a Blackwell experiment $E=(S,\pi)$,\footnote{I follow closely the terminology introduced in \cite{bergemann2018design}.} consisting of a (possibly uncountable) set of signals $S$ (also equipped with a $\sigma$-algebra) and an $\acal$-measurable signal function \begin{equation*}\label{dfn:signal}
    \pi : \Omega \to \Delta (S).
\end{equation*} I refer to $E$ as an \textit{information product}. Let $\ecal$ be the collection of all information products $E$. The seller commits ex-ante to offering menu $\mcal = \{\ecal',t\}$ such that $\ecal' \subset \ecal$ with associated tariff $t : \ecal' \to \rbb_+$. I assume that buyer's actions $a$ and state realization $\omega$ are not contractible. The goal of this paper is to give a complete characterization of a revenue-maximizing menu $\mcal$.

\subsection{Value of Information Products}

Under the prior information $\mu_\theta$, the type $\theta$ buyer chooses an action $a^*(\theta)$ that maximizes her expected (reservation) utility \begin{equation*}\label{dfn:maxaction}
    a^*(\theta) \in \argmax_{a\in A} \E_{\mu} \left[ u(\omega_\theta,a)\right].
\end{equation*} Her \textit{reservation utility} (outside option) is therefore \begin{equation} \label{eqn:reservationutility}
    \underline{U}(\theta) := \max \{\mu_{\theta}, 1-\mu_\theta\}\cdot u(\theta).
\end{equation} Without loss of generality, let $\mu_{\theta}\geq 1/2$, thus simplifying \eqref{eqn:reservationutility} to $\underline{U}(\theta) = \mu_{\theta} u(\theta)$.

Consider an information product $E=(S,\pi)$. Let the joint distribution $\nu \in \Delta(S \times \Omega)$, and marginal distributions $\nu_\theta \in \Delta(S\times \Omega_\theta)$ and $\nu_s \in \Delta(S)$ be induced by the signal $\pi$ and prior $\mu$. With small abuse of notation, denote the posterior belief function about the good state, given signal realization $s\in S$ by $\nu_{\theta \mid s}: S \to \Delta(\Omega_\theta)$. Similarly as in \eqref{eqn:reservationutility}, type $\theta$'s expected utility conditional on signal realization $s \in S$ is given by \begin{equation}\label{eqn:condexpectedutility} U(E,\theta \mid s) := \max \{\nu_{\theta \mid s}, 1-\nu_{\theta\mid s}\}\cdot u(\theta).
\end{equation} Integrating over all signal realizations $s\in S$ (with small abuse on notation for measure $d\nu_s$), type $\theta$'s expected utility of buying information product $E$ is \begin{equation} \label{eqn:expectedutility}
U(E,\theta) = \int_{s\in S} U(E,\theta \mid s) d\nu_s.
\end{equation} Therefore, the \textit{value} of information product $E$ for type $\theta$ is defined as \begin{equation}\label{dfn:valueforinformation} V(E,\theta) := U(E,\theta) - \underline{U}(\theta).
\end{equation}

\subsection{Seller's Problem}
\subsubsection{Revelation Principle}
By the revelation principle, the seller can restrict attention to a direct menu $\mcal = \{ E(\theta), t(\theta)\}_\theta$ which offers information product $E(\theta)=(S,\pi_\theta)$ at a price $t(\theta)$, for each $\theta \in \Theta$. The seller's problem is to maximize total expected payments \begin{align} \label{eqn:objective} \tag{Obj} \max_{\{E(\theta),t(\theta)\}} \quad \int \limits_{\theta \in \Theta} t(\theta) dF(\theta)
\end{align} subject to incentive-compatibility constraints \begin{align} \label{eqn:IC} \tag{IC$_{\theta,\theta'}$} V(E(\theta),\theta) - t(\theta) \geq V(E(\theta'),\theta) - t(\theta'), \quad \forall \theta,\theta' \in \Theta,
\end{align} and individual rationality constraints \begin{align} \label{eqn:IR} \tag{IR$_\theta$} V(E(\theta),\theta) - t(\theta) \geq 0, \quad \forall \theta \in \Theta.
\end{align}
\subsubsection{Simple Menus}
I show that the seller can restrict attention to a smaller, and simpler to work with, set of direct menus, without any loss of generality:
\begin{definition}[Simple Menus] \label{dfn:customizedmenu} A direct menu $\mcal$ is \textit{simple} if for each $\theta \in \Theta$, the information product $E(\theta)$ is \begin{enumerate}[label=(\roman*)]
    \item \textit{customized}, if its signal function $\pi_\theta$ can be represented as $\pi_\theta: \Omega_\theta \to \Delta (S)$, and

    \item \textit{responsive}, if each signal $s\in S$ leads to a distinct action $a^*(\theta)$. \footnote{Responsive experiments are also present in \cite{bergemann2018design}.}
    \end{enumerate}
\end{definition}

\begin{proposition} \label{prop:customized} For any IC and IR direct menu $\mcal$, there exists an IC, IR and simple menu $\mcal'$ with the same tariffs $t(\theta)=t'(\theta)$, for all $\theta \in \Theta$. \end{proposition}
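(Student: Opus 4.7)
The plan is to reduce any IC and IR direct menu $\{E(\theta),t(\theta)\}$ to a simple menu with identical tariffs through two successive modifications, each preserving the relevant constraints.

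\emph{Responsive step.} For each type $\theta$, pool all signals of $E(\theta)$ that induce the same optimal action for type $\theta$ into a single label, defining $\hat{E}(\theta)$. Since $|A|=2$, the result has at most two signals and is responsive by construction. This pooling is a deterministic Blackwell garbling, so (i) type $\theta$'s optimal strategy is unaffected and her payoff is preserved, $V(\hat{E}(\theta),\theta)=V(E(\theta),\theta)$, and (ii) for every other type $\theta'\neq\theta$ the coarsened experiment is weakly less informative, $V(\hat{E}(\theta),\theta')\leq V(E(\theta),\theta')$. Consequently, the intermediate menu $\{\hat{E}(\theta),t(\theta)\}$ inherits IR$_\theta$ and every IC$_{\theta',\theta}$ at the original tariffs.

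\emph{Customization step.} After the first step every customized responsive experiment can be parameterized by a pair $(p_\theta,q_\theta)\in[0,1]^2$ with $p_\theta=\pi'_\theta(a_g\mid g_\theta)$ and $q_\theta=\pi'_\theta(a_g\mid b_\theta)$. Under the common prior $\mu$, both the own-type value $V_\theta(p_\theta,q_\theta)$ and any cross-type value $V_{\theta'}(p_\theta,q_\theta)$ are affine in $(p_\theta,q_\theta)$, with coefficients determined by $\mu_\theta$ and by the prior correlation between $\omega_\theta$ and $\omega_{\theta'}$. For each $\theta$, I choose $(p_\theta,q_\theta)$ on the iso-quant $V_\theta(p_\theta,q_\theta)=t(\theta)$, so that IR binds at the original tariff, and among such points pick the one that minimizes each cross value $V_{\theta'}(p_\theta,q_\theta)$. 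Define $E'(\theta)$ to be this customized responsive experiment. By construction, IR$_\theta$ holds with equality and each IC$_{\theta,\tilde\theta}$ reduces to the single inequality $V(E'(\tilde\theta),\theta)\leq t(\tilde\theta)$.

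The main structural observation underlying the cross-IC verification is that for $E'(\tilde\theta)$ customized to depend only on $\omega_{\tilde\theta}$, the posterior on $\omega_\theta$ given the signal is entirely mediated by the prior correlation: $P(\omega_\theta\mid s)=\sum_{\omega_{\tilde\theta}}\mu(\omega_\theta\mid\omega_{\tilde\theta})P(\omega_{\tilde\theta}\mid s)$. Hence the spread of the $\omega_\theta$-posterior relative to its prior is the spread of the $\omega_{\tilde\theta}$-posterior scaled by the coefficient $\mu(g_\theta\mid g_{\tilde\theta})-\mu(g_\theta\mid b_{\tilde\theta})$, whose absolute value is at most one. Combined with the affine form of the value functions along the iso-quant and with the original IC inequalities viewed as feasibility certificates, this scaling is what yields $V(E'(\tilde\theta),\theta)\leq t(\tilde\theta)$ uniformly across type pairs. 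The main obstacle is precisely this feasibility check: ensuring that the minimizing point on the iso-quant always satisfies every cross-IC simultaneously, which requires tracking the affine geometry in the $(p,q)$-plane alongside the correlation-based spread reduction inherited from $\mu$.
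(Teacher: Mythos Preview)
Your responsive step is fine and coincides with the paper's pooling-by-action argument. The gap is in the customization step.

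You try to pick a customized responsive experiment on the iso-quant $V_\theta(p_\theta,q_\theta)=t(\theta)$ and then verify every cross-IC from scratch. Two problems. First, the construction is not well-posed: ``pick the point that minimizes each cross value $V_{\theta'}(p_\theta,q_\theta)$'' presumes a single minimizer works for all $\theta'$, which you never establish; and the cross values are only \emph{piecewise} affine in $(p_\theta,q_\theta)$ because type $\theta'$ need not be responsive to $\pi_\theta$. Second---and this is the real gap---by forcing IR to bind you throw away the original menu's surplus structure, so the cross-IC you must verify, $V(E'(\tilde\theta),\theta)\leq t(\tilde\theta)$, is \emph{not} implied by the original IC, which only says $V(E(\tilde\theta),\theta)-t(\tilde\theta)\leq V(E(\theta),\theta)-t(\theta)$ with a possibly positive right-hand side. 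Your final paragraph gestures at a spread-reduction argument but, as you yourself concede, the feasibility check is left open.

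The paper's route avoids all of this with one move: for each $\theta$, \emph{marginalize} the (already responsive) signal over $\omega_{-\theta}$ under the common prior, i.e.\ set $\pi'_\theta(s_i\mid\omega_\theta)=\int \pi_\theta(S_i\mid\omega_\theta,\omega_{-\theta})\,\mu(d\omega_{-\theta}\mid\omega_\theta)$. This is customized by construction, and the joint law of $(s_i,\omega_\theta)$ is \emph{unchanged}, so type $\theta$'s value is preserved exactly (not reduced to bind IR). For any $\theta'\neq\theta$, the customized signal is a garbling of the original under the common prior, so cross values weakly fall. Then the original IC inequalities carry over verbatim at the same tariffs---no fresh feasibility check is needed. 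The missing idea in your attempt is precisely this marginalization that preserves own value; once you have it, the garbling argument does the rest.
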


Condition i) reduces the complexity of signal functions the seller can use  (only) to truth-telling types, by customizing each signal to buyer's needs. The driving force behind condition i) is the common prior assumption and that seller can always offer a (customized) Blackwell signal $\pi_\theta$ that generates the same posterior belief as any Blackwell signal $\pi$.  Here's a simple example that demonstrates the importance of the common prior assumption. Consider a firm of type $\theta=A,B$ or $C$, which needs expertise in determining whether project $\theta=A,B$ or $C$ is successful, respectively. Suppose that firm $A$ thinks that projects $A \& B$'s success rate are perfectly correlated, whereas projects $A \& C$, and $B \& C$ have zero correlation. On the other hand, suppose instead that firm $B$ thinks that projects $B \& C$'s success rate are perfectly correlated. Common prior assumption fails since firm $A$ and $B$ do not agree on the correlation of success rate among projects $B \& C$. If $u(A) > u(B)=u(C)$, the optimal menu $\mathcal{M}^*$ extracts full surplus and features \textit{only two} information products: $\overline{\pi}_A$ and $\overline{\pi}_C$. The seller finds it optimal to not offer a fully revealing customized information product $\overline{\pi}_B$ to firm $B$, since that is the only instance when firm $A$ would be willing to purchase another information product besides $\overline{\pi}_A$.

On the other hand, condition ii) implies that it is without loss of generality to restrict attention to signal spaces of size $|S|=|\Omega_\theta|$, with a unique signal for each action recommendation. Therefore, for the rest of the paper, I consider only simple menus, with signal spaces given by $S=\{s_g,s_b\}$. Specifically, let truth-telling type $\theta$ be responsive to signal $s_g$ by taking the optimal action $a^*_{s_g}(\theta)=a_g$, and likewise $a^*_{s_b}(\theta)=a_b$. Note that, it is not necessarily true for another type $\theta'\neq \theta$ to be responsive to $\pi_\theta$. With a small abuse of notation, write simply $\pi_{\theta}$ to represent an information product $E(\theta)=(S,\pi_\theta)$, without explicitly writing the signal space $S=\{s_g,s_b\}$. Similarly as in \cite{bergemann2018design}, $\pi_\theta$ can be represented by the stochastic matrix \begin{equation*}\label{eqn:matrix}
    \begin{tabular}{c|cc}
        $\pi_\theta$ & $s_g$ & $s_b$ \\
		\cline{1-3} $g_\theta$ & $\pi_{\theta,g}$ & $1-\pi_{\theta,g}$\\
        $b_\theta$ & $1-\pi_{\theta,b}$ & $\pi_{\theta,b}$
    \end{tabular}
\end{equation*} where $\pi_{\theta,g}:= \p_{\pi_\theta}(s_g \mid g_\theta)$ and $\pi_{\theta,b}:= \p_{\pi_\theta}(s_b \mid b_\theta)$. In particular, the marginal distribution $\nu_{\theta} \in \Delta(S \times \Omega_\theta)$ induced by $\pi_\theta$ is given by \begin{equation} \label{dfn:marginaljointpdf} \begin{pNiceMatrix}[first-row,first-col,nullify-dots]
    & s_g & s_b \\
    g_\theta & \mu_\theta \pi_{\theta,g} & \mu_\theta(1-\pi_{\theta,g})\\
    b_\theta & (1-\mu_\theta)(1-\pi_{\theta,b})  & (1-\mu_\theta)\pi_{\theta,b}
\end{pNiceMatrix} \end{equation} For a given signal $s\in S$ that occurs with strictly positive probability, the \textit{posterior} belief about the state $\omega_\theta$ are formed via Bayes' rule. If type $\theta$ receives $\pi_{\theta}$, her posterior belief about the good state $g_\theta$, conditional on receiving signal $s_g$ or $s_b$, is given by \begin{equation}\label{eqn:posteriortheta}
    \nu_{\theta \mid s_g} = \frac{\mu_\theta \pi_{\theta,g}}{\mu_\theta \pi_{\theta,g} + (1-\mu_\theta)(1-\pi_{\theta,b})} \quad \text{or} \quad \nu_{\theta \mid s_b} = \frac{\mu_\theta (1-\pi_{\theta,g})}{\mu_\theta (1-\pi_{\theta,g}) + (1-\mu_\theta)\pi_{\theta,b}},
\end{equation} respectively. Therefore, a customized information product $\pi_{\theta}$ is responsive to type $\theta$ if and only if $\nu_{\theta \mid s_g} \geq 1-\nu_{\theta \mid s_g}$ and $\nu_{\theta \mid s_b} \leq 1-\nu_{\theta \mid s_b}$. This yields constraints \begin{align}
    \mu_\theta \pi_{\theta,g} & \geq (1-\mu_\theta)(1-\pi_{\theta,b}), \label{eqn:responsivegood} \tag{Rsp$_{g,\theta}$}\\
    \mu_\theta(1-\pi_{\theta,g}) & \leq (1-\mu_\theta)\pi_{\theta,b}. \label{eqn:responsivebad} \tag{Rsp$_{b,\theta}$}
\end{align} If \eqref{eqn:responsivegood} holds then type $\theta$ is (weakly) better off by being responsive to the good signal $s=s_g$ and take action $a^*_{s_g}(\theta)=a_g$. Similarly, \eqref{eqn:responsivebad} induces $a^*_{s_b}(\theta)=a_b$. Since $\mu_\theta \geq 1/2$, it can be easily seen that \eqref{eqn:responsivegood} becomes redundant, whenever \eqref{eqn:responsivebad} holds. Therefore, rearranging \eqref{eqn:responsivebad} and removing dependence on subscript $b$, for the rest of the paper I say that a customized information product $\pi_\theta$ is responsive if is satisfies \begin{align} \mu_\theta\pi_{\theta,g} + (1-\mu_\theta)\pi_{\theta,b} & \geq \mu_\theta. \label{eqn:responsive} \tag{Rsp$_{\theta}$}
\end{align} Then, the expected utility given in \eqref{eqn:expectedutility} of type $\theta$ from a responsive signal $\pi_\theta$ is \begin{equation}\label{eqn:responsiveexpectedutility}
    U(\pi_\theta,\theta) = (\mu_\theta\pi_{\theta,g} + (1-\mu_\theta)\pi_{\theta,b}) u(\theta),
\end{equation} where $U(\pi_\theta,\theta)\geq \underline{U}(\theta)$. That is, with probability of $\mu_\theta\pi_{\theta,g} + (1-\mu_\theta)\pi_{\theta,b} \geq \mu_\theta$ type $\theta$ will be taking the right action at the right state, which weakly improves upon her outside option. Thus, the value of information product $\pi_\theta$ for type $\theta$, as described in its most general form by \eqref{dfn:valueforinformation}, is given by \begin{equation}\label{eqn:valueofinformation_responsive}
    V(\pi_\theta, \theta) = (\mu_\theta\pi_{\theta,g} + (1-\mu_\theta)\pi_{\theta,b}-\mu_\theta) u(\theta),
\end{equation} where any $\pi_\theta$ such that $V(\pi_\theta, \theta)>0$ is said to be \textit{strictly informative} to type $\theta$.

\begin{example}[Highest WTP] The fully informative $\pi_\theta$, denoted by $\overline{\pi}_\theta$, is such that $\overline{\pi}_{\theta,g} = \overline{\pi}_{\theta,b} = 1$. Thus, the \textit{highest willingness to pay} (WTP) for \textit{any} information product by type $\theta$ is given by \begin{equation} \label{eqn:valueforinformation_highest} \overline{V}(\theta) := (1-\mu_\theta)u(\theta).
\end{equation}
\end{example}

Consider now $\theta, \theta' \in \Theta$ where $\theta\neq \theta'$. If instead $\theta$ buys information product $\pi_{\theta'}$, her posterior belief about the good state $g_{\theta}$ conditional on receiving signal $s$, using the total law of probability, is given by \begin{equation}\label{eqn:posteriortheta'} \nu_{\theta \mid s} = \nu_{\theta' \mid s} \cdot \p_\mu(g_{\theta} \mid g_{\theta'}) + (1-\nu_{\theta' \mid s}) \cdot \p_\mu(g_{\theta} \mid b_{\theta'}).
\end{equation} Therefore, her conditional expected utility given in \eqref{eqn:condexpectedutility} simplifies to \begin{equation} \label{eqn:condexpectedutility_simple} U(\pi_{\theta'}, \theta \mid s) = \max\{\nu_{\theta\mid s}, 1-\nu_{\theta\mid s}\} \cdot u(\theta),
\end{equation} and the expected utility in \eqref{eqn:expectedutility} reduces to summing over the good and bad signal realizations: \begin{align} 
    U(\pi_{\theta'}, \theta) & = \sum_{s = s_g, s_b} U(\pi_{\theta'}, \theta \mid s) \cdot \nu_s. \label{eqn:expectedutility_simple}
\end{align}
\subsubsection{ (A)symmetric Information}
I switch off the possibility of screening through the differences in marginal priors, so as to filter out any screening forces due to ex-ante information asymmetries, in contrast with \cite{bergemann2018design}.

\begin{property}[Symmetric (Marginal) Priors] \label{prpty:symmetric}
    There exists $\mu \geq 1/2$ such that \begin{equation} \label{eqn:symmetricSP} \tag{SP}
        \mu_\theta = \mu, \quad \forall \theta \in \Theta.
    \end{equation}
\end{property} Note that I abuse notation and denote both the prior distribution and the marginal probability of the good state by $\mu$, making the distinction clear when needed. \eqref{eqn:symmetricSP} implies that each type is ex-ante equally informed about the state of interest $\omega_\theta$. A consequence of \eqref{eqn:symmetricSP} and $|\Omega_\theta|=2$ is that the prior distribution is \textit{symmetric}:

\begin{lemma}[Symmetry] \label{lmm:symmetry} If prior $\mu$ satisfies \eqref{eqn:symmetricSP} then it is \textit{symmetric} in types: \begin{equation}\label{eqn:symmetry} \tag{S}
     \p_\mu(g_\theta,b_{\theta'}) = \p_\mu(g_{\theta'},b_{\theta}), \quad \forall \: \theta, \theta' \in \Theta.
\end{equation}\end{lemma}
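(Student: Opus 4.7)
The plan is to exploit the fact that $|\Omega_\theta|=2$ for every $\theta$, so the bivariate marginal of $\mu$ on $\Omega_\theta\times \Omega_{\theta'}$ is supported on four atoms, and the desired identity amounts to the observation that a $2\times 2$ joint distribution with equal one-dimensional marginals must have equal off-diagonal entries. Fix $\theta,\theta'\in\Theta$ with $\theta\neq\theta'$ (the case $\theta=\theta'$ is trivial). Using that each cylinder $\{\omega_\theta\}\times \prod_{\theta''\neq\theta}\Omega_{\theta''}$ lies in $\acal$, the intersection
$$\{g_\theta,b_{\theta'}\} := \{g_\theta\}\times\{b_{\theta'}\}\times\prod_{\theta''\neq\theta,\theta'}\Omega_{\theta''}\in\acal,$$
and analogously for the other three atoms, so the bivariate marginal $\p_\mu$ on $\Omega_\theta\times\Omega_{\theta'}$ is well defined.

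Next I would invoke \eqref{eqn:symmetricSP}. Marginalizing the bivariate law over $\omega_{\theta'}$ gives $\p_\mu(g_\theta,g_{\theta'})+\p_\mu(g_\theta,b_{\theta'})=\mu_\theta=\mu$, while marginalizing over $\omega_\theta$ gives $\p_\mu(g_\theta,g_{\theta'})+\p_\mu(b_\theta,g_{\theta'})=\mu_{\theta'}=\mu$. Subtracting these two identities cancels $\p_\mu(g_\theta,g_{\theta'})$ and yields $\p_\mu(g_\theta,b_{\theta'})=\p_\mu(b_\theta,g_{\theta'})=\p_\mu(g_{\theta'},b_\theta)$, which is exactly \eqref{eqn:symmetry}.

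There is essentially no obstacle to this argument; the only item that requires a line of care is the measurability step used to define the bivariate marginal on a product space that may have uncountably many coordinates, but this follows directly from the assumption on $\acal$ stated in the model section. The substance of the lemma is the elementary linear-algebraic identity that equal marginals force equal anti-diagonal masses in a $2\times 2$ table, which is why the two-state restriction $|\Omega_\theta|=2$ plays the central role.
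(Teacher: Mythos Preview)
Your proposal is correct and matches the paper's own proof essentially line for line: the paper also writes $\mu_\theta = \p_\mu(g_\theta,g_{\theta'}) + \p_\mu(g_\theta,b_{\theta'})$ and $\mu_{\theta'} = \p_\mu(g_\theta,g_{\theta'}) + \p_\mu(b_\theta,g_{\theta'})$ via the law of total probability, then invokes \eqref{eqn:symmetricSP} to equate the two and cancel the common term. Your additional remarks on measurability and the role of $|\Omega_\theta|=2$ are fine supplementary commentary but are not part of the paper's shorter version.
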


\subsubsection{Seller's Problem: Reduced Form}
The reduced form of the seller's problem is given by: \begin{align}
    \max_{\{\pi_{\theta},t(\theta)\}} \quad & \int \limits_{\theta \in \Theta} t(\theta) dF(\theta)  \tag{Obj} \\
    & \bigl(\mu \pi_{\theta,g} + (1-\mu)\pi_{\theta,b} - \mu\bigr) u(\theta) - t(\theta) \geq V(\pi_{\theta'},\theta) - t(\theta'), && \forall \theta, \theta'\in \Theta \tag{IC$_{\theta,\theta'}$}\\
    & \bigl(\mu \pi_{\theta,g} + (1-\mu)\pi_{\theta,b}-\mu\bigr) u(\theta) - t(\theta) \geq 0, && \forall \theta\in \Theta \tag{IR$_\theta$}\\
    & \mu\pi_{\theta,g} + (1-\mu)\pi_{\theta,b} \geq \mu. && \forall \theta\in \Theta \tag{Rsp$_{\theta}$}
\end{align} Note that by offering responsive information products $\pi_\theta$, all types are weakly better of by participating: simply setting $t(\theta)=0$ and offering $\pi_\theta$ such that \eqref{eqn:responsive} binds, seller guarantees type $\theta$ at least her outside option $\underline{U}(\theta)=\mu u(\theta)$.

In the next two sections, I characterize revenue maximizing menus for two types and a continuum type space.
\section{Two Types}
Consider a finite type space $\Theta=\{l,h\}$ and let $\p_F(\theta = l) = \rho \in (0,1)$. Suppose $u(h)>u(l)>0$, and refer to the buyer as either the \textit{low} ($\theta=l$) or the \textit{high} ($\theta=h$) type. The following lemma completely characterizes the set of feasible priors:

\begin{lemma} \label{lmm:feasibleGtwotypes} Let prior $\mu$ satisfy \eqref{eqn:symmetricSP}. Its probability density function is represented by the matrix \begin{equation} \label{dfn:densitytwostate} \mu \equiv \; \begin{pNiceMatrix}[first-row,first-col,nullify-dots] & g_h & b_h \\
    g_l & \mu P_{gg} & \mu P_{gb}\\
    b_l & (1-\mu) P_{bg} & (1-\mu) P_{bb}
\end{pNiceMatrix}, \end{equation}  where $P_{2\times 2}$ is a positive definite transition matrix satisfying $\mu P_{gb} = (1-\mu)P_{bg}$ and $P_{gg}+P_{gb}=P_{bg}+P_{bb}=1$.
\end{lemma}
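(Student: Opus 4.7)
The plan is to translate the joint prior on $\Omega_l \times \Omega_h$ into two pieces of data: the common marginal $\mu$ and a row-stochastic ``transition'' matrix $P$ obtained by conditioning on $\omega_l$. The main obstacle will be justifying the positive-definiteness claim, which in the two-type setting turns out to be equivalent to the restriction to non-negatively correlated priors (and which is naturally implicit in the Markov-chain parametrization foreshadowed in the introduction).

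First, I would parametrize the prior directly. Since $|\Omega_l|=|\Omega_h|=2$, any $\mu$ is fully specified by a $2\times 2$ density matrix $M$ with entries $M_{ij}=\mathbb{P}_\mu(\omega_l=i_l,\,\omega_h=j_h)$ for $i,j\in\{g,b\}$, subject to $M_{ij}\geq 0$ and $\sum_{i,j}M_{ij}=1$. Applying \eqref{eqn:symmetricSP} to the two marginals gives $M_{gg}+M_{gb}=\mu=M_{gg}+M_{bg}$, hence $M_{gb}=M_{bg}$; this is Lemma \ref{lmm:symmetry} specialized to $\Theta=\{l,h\}$, and in particular it cuts the four-parameter simplex down to a two-parameter family (e.g.\ $M_{gg}$ and $M_{gb}=M_{bg}$).

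Next, I would define the conditional probabilities $P_{ij}:=M_{ij}/M_{i\cdot}$, with row sums $M_{g\cdot}=\mu$ and $M_{b\cdot}=1-\mu$. By construction $P_{gg}+P_{gb}=P_{bg}+P_{bb}=1$, and substituting $M_{gg}=\mu P_{gg}$, $M_{gb}=\mu P_{gb}$, $M_{bg}=(1-\mu)P_{bg}$, $M_{bb}=(1-\mu)P_{bb}$ yields exactly the displayed form \eqref{dfn:densitytwostate}. The reversibility identity $\mu P_{gb}=(1-\mu)P_{bg}$ is then nothing but the symmetry $M_{gb}=M_{bg}$ rewritten in the new coordinates.

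Finally, for positive definiteness of $P$, I would use the fact that a row-stochastic $2\times 2$ matrix has eigenvalues $1$ and $P_{gg}+P_{bb}-1$, so the condition reduces to $\det P>0$, equivalently $P_{gg}+P_{bb}>1$, equivalently $\mathrm{Cov}_\mu(\mathbf{1}_{g_l},\mathbf{1}_{g_h})\geq 0$. I would argue that this positive-correlation restriction is precisely what allows the prior to be realized as $P=e^{Q\Delta}$ for a valid Markov-chain generator $Q$, which is the viewpoint the paper adopts for the continuum extension. The remaining algebra (nonnegativity of entries, consistency with the simplex) is immediate from the definitions, so the content of the lemma is really the combination of the symmetry identity from Lemma \ref{lmm:symmetry} with a conditional renormalization.
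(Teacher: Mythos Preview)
Your proposal is correct and follows essentially the same route as the paper: define $P$ via the conditionals $\p_\mu(\omega_h\mid\omega_l)$ and use the symmetry $\p_\mu(g_l,b_h)=\p_\mu(b_l,g_h)$ from Lemma~\ref{lmm:symmetry} to obtain the detailed-balance identity $\mu P_{gb}=(1-\mu)P_{bg}$. Your handling of positive definiteness is in fact more careful than the paper's own proof, which simply omits that part; you are right that it is a positive-correlation restriction (equivalently $P_{gg}+P_{bb}>1$) imposed so that $P$ admits a Markov-chain generator, not a consequence of \eqref{eqn:symmetricSP} alone.
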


The transition matrix $P$ describes conditional probabilities $\p_\mu(\omega_{\theta} \mid \omega_{\theta'})$ for $\theta\neq \theta'$. In particular, as a consequence of \autoref{lmm:symmetry} and \eqref{eqn:symmetricSP}, $P$ is type \textit{reversible}: \[\p_\mu(\omega_{\theta} \mid \omega_{\theta'}) = \p_\mu(\omega_{\theta'} \mid \omega_{\theta}), \quad \forall \theta\neq \theta' \in \Theta.\]

Now, consider type $\theta$ imitating type $\theta'\neq \theta$, and thus buying information product $\pi_{\theta'}$. Conditional on receiving, say, the good signal $s=s_g$, the poster belief about her own state $\omega_\theta=g_\theta$, as described in \eqref{eqn:posteriortheta'}, is given by \[\nu_{\theta\mid s_g} = \frac{\mu \pi_{g,\theta'}}{\mu \pi_{g,\theta'}+(1-\mu)(1-\pi_{b,\theta'})}P_{gg}+\frac{(1-\mu)(1-\pi_{b,\theta'})}{\mu \pi_{g,\theta'}+(1-\mu)(1-\pi_{b,\theta'})}P_{bg}.\] Proceeding similarly for other states, and then summing over signals, we derive the closed form expression for the expected utility, as described in \eqref{eqn:expectedutility_simple}, as following: \begin{align}\notag
    U(\pi_{\theta'}, \theta) & = \Bigl(\max \bigl\{ \overbrace{\mu \pi_{g,\theta'} P_{gg} + (1-\mu)(1-\pi_{b,\theta'}) P_{bg}}^{\bigl(\Ggg(\pi_{\theta'}): \; a^*_{s_g}(\theta)=a_{g}\bigr)} \:,\: \overbrace{\mu \pi_{g,\theta'} P_{gb} + (1-\mu)(1-\pi_{b,\theta'}) P_{bb}}^{\bigl(\Ggb(\pi_{\theta'}): \; a^*_{s_g}(\theta)=a_{b}\bigr)} \bigr\} \Bigr) u(\theta)\\ \label{eqn:imitation_expectedutility}
    & + \Bigl(\max \bigl\{ \overbrace{\mu (1-\pi_{g,\theta'}) P_{gg} + (1-\mu)\pi_{b,\theta'} P_{bg}}^{\bigl(\Gbg(\pi_{\theta'}): \; a^*_{s_b}(\theta)=a_{g}\bigr)} \:,\: \overbrace{\mu (1-\pi_{g,\theta'}) P_{gb} + (1-\mu)\pi_{b,\theta'} P_{bb}}^{\bigl(\Gbb(\pi_{\theta'}): \; a^*_{s_b}(\theta)=a_{b}\bigr)} \bigr\} \Bigr) u(\theta).
\end{align} To simplify on notations, define $\Ggg(\pi_{\theta'})$ and $\Ggb(\pi_{\theta'})$ to be the probability that you receive the good signal in the good state $g_\theta$, and the good signal in the bad state $b_\theta$, respectively. Define similarly $\Gbg(\pi_{\theta'})$ and $\Gbb(\pi_{\theta'})$ when receiving the bad signal in the good and the bad state, respectively. Moreover, let $p \equiv P_{gb}$. Since $\mu P_{gb} = (1-\mu)P_{bg}$ and $\mu \geq 1/2$, its range lies in \begin{equation*}\label{eqn:rangeofp}
    0\leq p \leq \frac{1-\mu}{\mu} \; (\leq 1).
\end{equation*} Substituting back to \eqref{dfn:densitytwostate}, let \begin{equation}
    \label{eqn:densitytwostate_simple}
    \begin{pNiceMatrix}[first-row,first-col,nullify-dots] & g_h & b_h \\
    g_l & \mu-\mu p & \mu p\\
    b_l & \mu p & (1-\mu) - \mu p
    \end{pNiceMatrix} = \; \begin{pNiceMatrix}[first-row,first-col,nullify-dots] & g_h & b_h \\
    g_l & \sigma_g & \xi\\
    b_l & \xi & \sigma_b
    \end{pNiceMatrix}, \end{equation} for some $\sigma_g, \sigma_b, \xi \geq 0$ defined as above. I use both formulations interchangeably.

\subsection{Solving the Model}
\subsubsection{Optimal Menu: Full Surplus Extraction}
The first result gives conditions under which the seller can extracts full, socially efficient, surplus. The full surplus extraction menu is given by $\overline{\mcal} = \{\bigl(\pibar_l,\tbar(l)\bigr), \bigl(\pibar_h,\tbar(h)\bigr)\}$, where $\pibar_{\theta,g}=\pibar_{\theta,b}=1$ and $\tbar(\theta)=\vbar(\theta)=(1-\mu)u(\theta)$.

\begin{condition}[C1]\label{cnd:fullsurplustwotypes} The underlying model primitives $\mu$, $p$, $u(l)$ and $u(h)$ satisfy one of the following: \begin{itemize}
    \item[a)] $p \in \left(0,\frac{1-\mu}{2\mu}\right)$ and $\frac{u(l)}{u(h)} \geq \frac{1-\mu-2\mu p}{1-\mu}$.

    \item[b)] $\mu \in \left[\frac{2}{3}, 1\right)$ and $p \in \left[\frac{1-\mu}{2\mu}, \frac{1-\mu}{\mu}\right]$.
    
    \item[c)] $\mu \in \left[\frac{1}{2}, \frac{2}{3}\right)$, and c.i) $p \in \left[\frac{1-\mu}{2\mu}, \frac{1}{2}\right]$, or c.ii) $p \in \left(\frac{1}{2}, \frac{1-\mu}{\mu}\right]$ and $\frac{u(l)}{u(h)} \geq \frac{(2p-1)\mu}{1-\mu}$.
    
\end{itemize}
\end{condition}

\begin{proposition}[Full Surplus Extraction]\label{prop:fullsurplustwotypes} If (C1) holds then $\overline{\mcal}$ is optimal.
\end{proposition}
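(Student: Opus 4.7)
The plan has three steps: verify individual rationality for $\overline{\mcal}$, verify incentive compatibility via a regime analysis on $p$, and close by a trivial upper-bound argument for optimality.

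\textbf{IR and setup.} Under $\overline{\mcal}$ every information product is fully informative, so \eqref{eqn:responsive} trivially holds and by \eqref{eqn:valueofinformation_responsive} the truth-telling value is $V(\pibar_\theta,\theta)=(1-\mu)u(\theta)=\tbar(\theta)$. Hence (IR$_\theta$) binds with zero surplus. Consequently (IC$_{\theta,\theta'}$) reduces to showing, for $\theta\neq\theta'$, that $V(\pibar_{\theta'},\theta)\leq (1-\mu)u(\theta')$.

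\textbf{Computing the imitation value.} Because $\pibar_{\theta'}$ perfectly reveals $\omega_{\theta'}$, using the joint density \eqref{eqn:densitytwostate_simple} type $\theta$'s posterior on $g_\theta$ equals $1-p$ after $s_g$ and $\mu p/(1-\mu)$ after $s_b$. The two thresholds $p=(1-\mu)/(2\mu)$ and $p=1/2$ (both admissible since $p\leq (1-\mu)/\mu$) partition parameter space into three regimes, in each of which I evaluate \eqref{eqn:imitation_expectedutility}. In the \emph{responsive} regime $p<(1-\mu)/(2\mu)$, the imitator plays $a_g$ after $s_g$ and $a_b$ after $s_b$, yielding $V(\pibar_{\theta'},\theta)=(1-\mu-2\mu p)u(\theta)$. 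In the \emph{always-$a_g$} regime $(1-\mu)/(2\mu)\leq p\leq 1/2$, both posteriors weakly exceed $1/2$, so $V(\pibar_{\theta'},\theta)=0$. In the \emph{reversed} regime $p>1/2$, the imitator plays $a_b$ after $s_g$ and $a_g$ after $s_b$, giving $V(\pibar_{\theta'},\theta)=\mu(2p-1)u(\theta)$.

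\textbf{Checking IC case by case.} In all three regimes the multiplicative coefficient on $u(\theta)$ is at most $1-\mu$, so the ``low-imitates-high'' constraint $V(\pibar_h,l)\leq (1-\mu)u(h)$ is slack because $u(h)\geq u(l)$; only the ``high-imitates-low'' direction can bite. For this direction, (C1a) places $(\mu,p)$ in the responsive regime and its ratio bound is exactly $V(\pibar_l,h)\leq (1-\mu)u(l)$; (C1b) uses $\mu\geq 2/3$ to deduce $(1-\mu)/\mu\leq 1/2$, hence $p$ lies in the always-$a_g$ regime where $V=0$; (C1c.i) again lands in the always-$a_g$ regime; and (C1c.ii) lies in the reversed regime with its ratio bound being exactly $V(\pibar_l,h)\leq (1-\mu)u(l)$.

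\textbf{Optimality.} For any IR menu $\{\pi_\theta,t(\theta)\}$, $t(\theta)\leq V(\pi_\theta,\theta)\leq \vbar(\theta)=(1-\mu)u(\theta)$, so seller revenue is bounded above by $\int\vbar(\theta)\,dF(\theta)$, which $\overline{\mcal}$ achieves. The main obstacle is the piecewise, non-monotone dependence of $V(\pibar_{\theta'},\theta)$ on $p$ through the imitator's optimal action rule; once the three regimes are identified, each subcase of (C1) is matched mechanically to the corresponding bound, and the rest is bookkeeping.
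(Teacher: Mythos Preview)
Your proof is correct and follows essentially the same approach as the paper: reduce to the single binding constraint (IC$_{hl}$), evaluate $V(\pibar_l,h)$ via the max in \eqref{eqn:imitation_expectedutility}, and match each subcase of (C1) to the corresponding inequality. The only cosmetic difference is that you organize the case split by the three action-regimes of the imitator (responsive, always-$a_g$, reversed) and then map each clause of (C1) onto a regime, whereas the paper splits first on $\mu\gtrless 2/3$ and then on $p$; your ordering is arguably cleaner, and your explicit upper-bound argument for optimality is a small addition the paper leaves implicit.
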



\begin{figure} \centering

\begin{tikzpicture}[thick,scale=7, every node/.style={scale=1}, domain=1:2.6]

    \fill[buff] (2.6,1/4)--(2.6,1/2)--(1,1/2)--(1,1/4)--cycle;
    \node[text width=6cm] at (1.55,0.4) {\footnotesize Full Surplus Extraction, \textbf{for any} $\bm{u}$};
    \draw[dotted] (1,1/4)--(2.6,1/4);
    \fill[black] (1,1/4)  circle[radius=0.3pt];
    \fill[black] (1,1/2)  circle[radius=0.3pt];
    
    \fill[blond] plot[domain=1:2.6] (\x,{0.5*(1/2)*(\x-1)/\x}) -- (2.6,1/4)--(1,1/4)--cycle;
    \draw[color=darkorange, thick] plot[domain=1:2.6] (\x,{0.5*(1/2)*(\x-1)/\x}) node at (2.8,0.13){$\frac{1-\mu}{2\mu} \times\frac{u-1}{u}$};
    \node[text width=5cm] at (1.5,0.15) {\footnotesize Full Surplus Extraction};
        
    \fill[amber] plot[domain=1:2.6] (\x,{0.5*(1/2)*(\x-1)/\x})--(2.6,0)--cycle;
    
    \draw[color=blue, thick, domain=1.43:1.85] plot (\x,{0.5*((0.7*\x-1)/(0.7*\x)});
    \draw[color=blue, thick, dashed, domain=1.85:2.6] plot (\x,{0.5*((0.7*\x-1)/(0.7*\x)}) node at (2.75,0.25){$\frac{(1-\rho)u-1}{2(1-\rho)u}$};

    \fill[beaublue] plot[domain=1.43:1.85] (\x,{0.5*((0.7*\x-1)/(0.7*\x)}) -- plot[domain=1.85:1] (\x,{0.5*(1/2)*(\x-1)/\x}) -- cycle;
    
    \node[text width=5cm, blue] at (1.45,-0.12) {\footnotesize rents$(h)>0 \newline \pi^*_{g,\theta}=\pi^*_{b,\theta}=1$};
    \draw[blue,thick, arrows={-Triangle[scale=1]}] (1.31,-0.04) -- (1.4, 0.05);
    
    \node[text width=7cm] at (2.26,0.04) {\footnotesize Full (Inefficient) Surplus Extraction};
    \node[text width=5cm, darkorange] at (2.1,-0.08) {\footnotesize distort $\pi^*_{b,l}<1$ only};
    \draw[darkorange,thick, arrows={-Triangle[scale=1]}] (1.9,-0.05) -- (2, 0.02);
    
    \draw[very thick, ->] (1,-0.1) -- (1,0.6);
    \draw[very thick, ->] (0.9,0) -- (2.7,0) node[right] {$u \left( \equiv \frac{u(h)}{u(l)} \right)$};
    \draw[] (1,1/2)--(2.6,1/2);
    
    \fill[black] (1,1/2) circle[radius=0.3pt];
    \node[text width=1cm] at (0.95,1/2) {$\frac{1-\mu}{\mu}$};
    \fill[black] (1,1/4)  circle[radius=0.3pt];
    \node[text width=1cm] at (0.95,1/4) {$\frac{1-\mu}{2\mu}$};
    
    \node[text width=1cm] at (1.06,0.65) {$p$};

\end{tikzpicture}

\caption{Optimal menu $\mstar$ for $\mu=2/3, \rho = 0.7$ and $u(l)=1$.} 
\label{fig:optimalmenu_twotypes}
\end{figure}
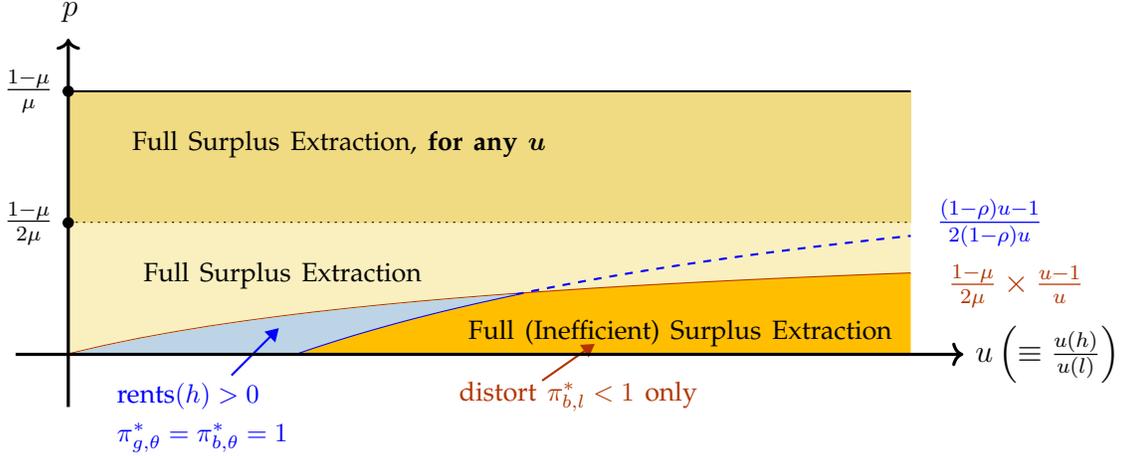

The ranges in (b), and (c)$\And$(c.i), correspond to $\sigma_g \geq \xi \geq \sigma_b$. Consider the high type buying fully informative $\pibar_l$. Conditional on receiving the bad signal $s_b$, then $h$ perfectly learns that $\omega_l=b_l$. But then, the conditional probability of the good state $\omega_h=g_h$ is $\xi / (\xi+\sigma_b) \geq \sigma_b / (\xi+\sigma_b)$, leading type $h$ take the default action $a^*_{s_b}(h)=a_g$. Similarly, for the good signal $s_g$, we have $\omega_h=g_h$'s conditional probability $\sigma_g / (\sigma_g+\xi) \geq \xi / (\sigma_g+\xi)$, yielding $a^*_{s_g}(h)=a_g$ as well. Therefore, learning both states in $\Omega_l$ perfectly always yields type $h$ taking the default action, adding no positive value from buying $\pibar_l$. Hence, seller offers each $\pibar_\theta$ at highest price $\tbar(\theta)$, while having each type (ensuring $h$ is sufficient) has no incentives to deviate. In contrast, if (a) or (c)$\And$(c.ii) holds then learning other types' states is always informative about your own state. However, if type $h$'s ex-post payoff $u(h)$ is not sufficiently big relative to $u(l)$, then the difference in prices are not sufficiently large compared to the cost incurred by not purchasing your own customized information product.

In \autoref{fig:optimalmenu_twotypes}, I describe the optimal menu by varying correlation across states, captured by $p$, and the ratio of ex-post payoffs $u(h)/u(l) \equiv u$, while fixing $\mu=2/3$, $\rho=0/7$ and $u(l)=1$. The full surplus extraction shaded regions in light yellow and brown yellow correspond to parts (a) and (b) of \autoref{prop:fullsurplustwotypes}, respectively. The remaining shaded regions describe the optimal menu when (C1) doesn't hold ($\neg$C1) which I characterize next.

\subsubsection{Optimal Menu: General Case}

Consider the general case ($\neg$C1). The following lemma simplifies the search over the optimal menu $\mstar \neq \overline{\mcal}$ by describing some of its properties.

\begin{lemma} \label{lmm:propertiestwotypes} (Structure of $\mstar$) If ($\neg$C1), then there exists an optimal menu $\mstar = \{(\pistar_h,\tstar(h)),$ $(\pistar_l,\tstar(l))\}$ satisfying the following: \begin{enumerate} [label=(\roman*)]
    \item \label{propertiestwotypesitemi} No distortion at the top, i.e. $\pi^*_h = \pibar_{h}$.
    \item \label{propertiestwotypesitemii} No rents at the bottom, i.e. (IR$_l$) binds.
    \item \label{propertiestwotypesitemiii} Downward (IC$_{hl}$) binds.
    \item \label{propertiestwotypesitemiv} If $\sigma_g \geq \sigma_b > \xi$, then the bad signal $s_b$ is not distorted, i.e. $\pi^{*}_{b,l}=1$. Otherwise, i.e. $\xi > \sigma_g \geq \sigma_b$, then $\pi^{*}_{g,l}=1$.
\end{enumerate}
\end{lemma}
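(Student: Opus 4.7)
My plan is to establish the four properties sequentially. Parts (i)--(iii) would use standard two-type perturbation arguments adapted to the value-function structure in \eqref{eqn:responsiveexpectedutility} and \eqref{eqn:imitation_expectedutility}; the main substance will be in (iv), where I would exploit piecewise linearity.

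\textbf{Property (i).} For ``no distortion at the top,'' I would suppose an optimal menu has $\pi_h^* \neq \overline{\pi}_h$, and perturb to $(\overline{\pi}_h,\, t^*(h)+\Delta)$ with $\Delta := \overline{V}(h) - V(\pi_h^*, h) > 0$. Type $h$'s net surplus is unchanged, preserving (IR$_h$) and (IC$_{hl}$). For (IC$_{lh}$), I would compare marginal values: the slopes of $V(\pi_h, h)$ in $\pi_{h,g}$ and $\pi_{h,b}$ from \eqref{eqn:responsiveexpectedutility} are $\mu u(h)$ and $(1-\mu)u(h)$, whereas the slopes of $V(\pi_h, l)$ in either branch of \eqref{eqn:imitation_expectedutility}---aligned $\bigl(\mu(1-2p)u(l),\,(1-\mu-2\mu p)u(l)\bigr)$ or anti-aligned $\bigl(\mu(2p-1)u(l),\,(2\mu p + \mu-1)u(l)\bigr)$---are term-wise smaller, using $u(h)>u(l)$ and $p\leq \tfrac{1-\mu}{\mu}$. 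Hence $V(\overline{\pi}_h, l) - V(\pi_h^*, l) \leq \Delta$, (IC$_{lh}$) is preserved, and revenue strictly increases, contradicting optimality.

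\textbf{Properties (ii) and (iii).} For (ii), if (IR$_l$) is slack, I would either raise $t^*(l)$ alone (if (IC$_{lh}$) also has slack) or raise both $t^*(l)$ and $t^*(h)$ by a common $\epsilon$ (if (IC$_{lh}$) binds). In the latter subcase, (IR$_h$) retains slack because (i) together with binding (IC$_{lh}$) yields $\overline{V}(h) - t^*(h) = [\overline{V}(h) - V(\overline{\pi}_h, l)] + [V(\pi_l^*, l) - t^*(l)] > 0$---both brackets non-negative by the slope dominance of (i) and the slack of (IR$_l$). For (iii), if (IC$_{hl}$) is slack, I would either raise $t^*(h)$ alone (if (IR$_h$) has slack) or push $\pi_l^*$ toward $\overline{\pi}_l$ (in the corner where (IR$_h$) binds); since $\neg\text{(C1)}$ implies $V(\overline{\pi}_l, h) > V(\overline{\pi}_l, l)$, by continuity one reaches a $\pi_l^{\text{new}}$ binding (IC$_{hl}$) with strictly higher revenue.

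\textbf{Property (iv).} Using (i)--(iii), expected revenue equals $(1-\rho)\overline{V}(h) + J(\pi_l)$ with $J(\pi_l) := V(\pi_l, l) - (1-\rho)V(\pi_l, h)$; the problem reduces to maximizing $J$ over $\pi_l$ satisfying \eqref{eqn:responsive}. Since $V(\pi_l, l)$ is linear and $V(\pi_l, h)$ is convex (a sum of pointwise maxima of linear functions in \eqref{eqn:imitation_expectedutility}), $J$ is concave. In the first subcase $\sigma_g \geq \sigma_b > \xi$ (i.e., $p < \tfrac{1-\mu}{2\mu}$), I would introduce the area-preserving perturbation
\[
(\pi_{l,g},\, \pi_{l,b}) \longmapsto \bigl(\pi_{l,g} - \tfrac{1-\mu}{\mu}\delta,\, \pi_{l,b}+\delta\bigr), \quad \delta > 0,
\]
which preserves both $V(\pi_l, l)$ and \eqref{eqn:responsive}. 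On the aligned-response neighborhood of $\overline{\pi}_l$, $V(\pi_l, h)$ is linear, and a direct calculation gives its change as $2p(1-2\mu)u(h)\delta \leq 0$, strictly negative for $\mu > 1/2$; hence $J$ strictly increases in $\delta$. Iterating until $\pi_{l,b} = 1$ lands at $(\pi_{l,g}^*, 1)$ with $\pi_{l,g}^* \geq \tfrac{2\mu-1}{\mu} \geq 0$, still feasible. Crossings into other branches of \eqref{eqn:imitation_expectedutility} only lower $V(\pi_l, h)$ further by convexity, strengthening the conclusion. The symmetric subcase $\xi > \sigma_g \geq \sigma_b$ (i.e., $p > 1/2$) would yield $\pi_{l,g}^* = 1$ via the analogous perturbation with $g$ and $b$ swapped.

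\textbf{Main obstacle.} The hardest part will be (iv): tracking the piecewise-linear structure of $V(\pi_l, h)$ across the four max-branches in \eqref{eqn:imitation_expectedutility} and the non-responsive region where $V(\pi_l, h) = 0$. Concavity of $J$ together with the explicit sign of the directional derivative along the volume-preserving perturbation should localize the optimum to the desired edge, but verifying feasibility and ruling out other extreme points (notably the responsiveness boundary where $V(\pi_l, l) = 0$) would require careful case analysis.
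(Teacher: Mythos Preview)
Your treatment of (i)--(iii) is correct and close to the paper's; your slope-dominance argument for (i) is in fact slightly more direct than the paper's two-step route (first show some $\overline{\pi}_\theta$ must appear, then argue it is $\overline{\pi}_h$).

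For (iv) you and the paper choose genuinely different perturbations. The paper first establishes that an optimal $\pi^*_l$ leaves type $h$ responsive (i.e.\ in the aligned branch of \eqref{eqn:imitation_expectedutility}), and then moves along $\epsilon_g/\epsilon_b=(\sigma_b-\xi)/(\sigma_g-\xi)$, which holds $V(\pi_l,h)$ fixed while strictly raising $V(\pi_l,l)$. Your perturbation instead fixes $V(\pi_l,l)$ and tries to lower $V(\pi_l,h)$; this lets you skip the paper's preliminary ``$h$ is responsive'' step, but it shifts the burden to controlling $V(\pi_l,h)$ across the max-branches.

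Here your convexity sentence is backwards and is a genuine gap. Convexity of $V(\pi_l,h)$ means the directional derivative along $v=(-\tfrac{1-\mu}{\mu},1)$ is \emph{non-decreasing} as you move in that direction, so crossing out of the aligned branch can only make the slope \emph{less} negative---and on the anti-aligned branch the slope is $2p(2\mu-1)u(h)\geq 0$, the opposite of what you assert. The right way to use convexity is at the endpoint: since $g(s):=V(\pi_l(s),h)$ is convex, $g(s_{\max})-g(0)\le g'(s_{\max}^-)\,s_{\max}$, so it suffices to check the left-derivative at $\pi_{l,b}=1$. There $\Gamma_{gg}-\Gamma_{gb}=(\sigma_g-\xi)\pi_{l,g}\ge 0$, so only the branches $(\Gamma_{gg},\Gamma_{bb})$ and $(\Gamma_{gg},\Gamma_{bg})$ can be active, with slopes $2p(1-2\mu)u(h)\le 0$ and $0$ respectively; hence $g'(s_{\max}^-)\le 0$ and your conclusion follows. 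You should also record that the constraints you dropped in the reduction---(IR$_h$) and (IC$_{lh}$)---are preserved, since your perturbation leaves $t^*(l)$ unchanged and (weakly) raises $t^*(h)$.
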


Parts (i) to (iii) are standard in mechanism design. Part (iv) allows the seller to only screen based on a single instrument $\pi_{l,g}$, reducing to a one-dimensional screening problem. Moreover, we will see that there are instances when $\pi^*_{g,l}=1$, yielding the distinct feature of having \textit{no distortion at the bottom} either (see blue shaded region in \autoref{fig:optimalmenu_twotypes}). 

Next, equipped with \autoref{lmm:propertiestwotypes}, we are ready to characterize the optimal menu:

\begin{theorem}[Characterization of $\mstar$] \label{thm:characterizationtwotypes} If ($\neg$C1), then an optimal menu $\mstar = \{(\pibar_{h},\tstar(h)),$ $(\pistar_l,\tstar(l))\}$ is described as following: \begin{enumerate}[label=(\roman*)]
    \item \label{theorem1itemi} $\sigma_g \geq \sigma_b > \xi$. If $u(l) / u(h) \geq (1-\rho)(\sigma_{g}-\xi)/(\sigma_{g}+\xi)$, then there's no distortion at the bottom either, and positive rents to the high type, i.e. $\pistar_l=\pibar_{l}$ and $\tstar(h)=2\xi u(h) + (\sigma_b+\xi)u(l)$, where $\tstar(h) < \tbar(h)$. Otherwise, there is full, but socially inefficient, surplus extraction, i.e. $\pistar_{l} \neq \pibar_{l}$ and $\tstar(h)=\tbar(h)$, where \[\pistar_{l,g}=\frac{(\sigma_g-\sigma_b)(u(h)-u(l))}{(\sigma_g-\xi)u(h)-(\sigma_g+\xi)u(l)}.\]

    \item \label{theorem1itemii} $\xi >\sigma_g \geq \sigma_b$. If $u(l)/u(h) \geq (1-\rho)(\xi-\sigma_b)/(\xi+\sigma_b)$, then there's no distortion at the bottom either, and positive rents to the high type, i.e. $\pistar_l=\pibar_{l}$ and $\tstar(h)=(\sigma_g + \sigma_b)u(h)+(\sigma_b+\xi)u(l)$, where $\tstar(h) < \tbar(h)$. Otherwise, there is full, but socially inefficient, surplus extraction, i.e. $\pistar_{l} \neq \pibar_{l}$ and $\tstar(h)=\tbar(h)$, where \[\pistar_{l,b}=\frac{(\sigma_g-\sigma_b)u(h)}{(\xi-\sigma_b)u(h)-(\sigma_b+\xi)u(l)}.\]
    
\end{enumerate}
\end{theorem}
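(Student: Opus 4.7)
My plan is to leverage \autoref{lmm:propertiestwotypes} to reduce the optimization to a univariate linear program. I will focus on case~\ref{theorem1itemi} where $\sigma_g \geq \sigma_b > \xi$ forces $\pistar_{l,b}=1$; case~\ref{theorem1itemii} follows by a symmetric argument with the roles of the good and bad signals swapped (i.e., fixing $\pistar_{l,g}=1$ and screening on $\pi_{l,b}$).

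Using the binding (IR$_l$) and downward (IC$_{hl}$) from \autoref{lmm:propertiestwotypes}, I substitute out the prices,
\begin{equation*}
    t^*(l) = V(\pistar_l, l), \qquad t^*(h) = \vbar(h) + V(\pistar_l, l) - V(\pistar_l, h),
\end{equation*}
so that the seller's objective becomes $(1-\rho)\,\vbar(h) + \Phi(\pi^*_{l,g})$, where
\begin{equation*}
    \Phi(\pi_{l,g}) \;:=\; V(\pi_l, l) - (1-\rho)\,V(\pi_l, h)
\end{equation*}
is to be maximized over $\pi_{l,g}\in[0,1]$, subject to (Rsp$_l$), (IC$_{lh}$), and (IR$_h$). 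The last of these, via $t^*(h)\le \vbar(h)$, is equivalent to $V(\pi_l, h) \geq V(\pi_l, l)$.

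Next I compute $V(\pi_l,h)$ using the imitation formulas in \eqref{eqn:imitation_expectedutility}. With $\pi_{l,b}=1$, the high type always takes $a_g$ on signal $s_g$ (since $\sigma_g>\xi$ gives $\Gamma_{gg}>\Gamma_{gb}$), while the action taken on $s_b$ flips at the threshold $\hat\pi:=(\sigma_g-\sigma_b)/(\sigma_g-\xi)$. This yields the piecewise-linear expression
\begin{equation*}
    V(\pi_l,h) \;=\; \begin{cases} 0, & \text{if } \pi_{l,g}\leq\hat\pi,\\ \bigl[(\sigma_g-\xi)\,\pi_{l,g} - (\sigma_g-\sigma_b)\bigr]\,u(h), & \text{if } \pi_{l,g}>\hat\pi, \end{cases}
\end{equation*}
together with $V(\pi_l,l)=\bigl[(\sigma_g+\xi)\,\pi_{l,g} + \sigma_b - \sigma_g\bigr]\,u(l)$, so $\Phi$ is piecewise linear with a kink at $\hat\pi$.

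The lower piece $\pi_{l,g}\leq\hat\pi$ is globally infeasible, because there $V(\pi_l,h)=0<V(\pi_l,l)$ for any strictly informative $\pi_l$, violating (IR$_h$). Restricting to the upper piece, the slope of $\Phi$ is the constant $(\sigma_g+\xi)\,u(l) - (1-\rho)(\sigma_g-\xi)\,u(h)$, which is nonnegative precisely when $u(l)/u(h)\geq (1-\rho)(\sigma_g-\xi)/(\sigma_g+\xi)$. In that regime I set $\pistar_{l,g}=1$ (no distortion at the bottom), and a direct computation delivers the stated $t^*(h)=2\xi u(h)+(\sigma_b+\xi)u(l)<\vbar(h)$, leaving positive rents to the high type. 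In the complementary regime the slope is negative, so the seller would push $\pi_{l,g}$ toward $\hat\pi$; but (IR$_h$) binds strictly before the kink, i.e.\ $V(\pi_l,h)=V(\pi_l,l)$, a single linear equation whose solution yields the stated $\pistar_{l,g}$ together with $t^*(h)=\vbar(h)$. A concluding verification of (IC$_{lh}$) and (Rsp$_l$) at the candidate closes the argument. The main obstacle is the kink in $V(\pi_l,h)$: once one sees that the lower piece is ruled out by (IR$_h$) and the upper-piece objective is linear in a single variable, the characterization falls out of comparing a constant slope to zero.
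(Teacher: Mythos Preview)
Your proposal is correct and follows essentially the same route as the paper: invoke \autoref{lmm:propertiestwotypes} to fix $\pistar_h=\pibar_h$, $\pistar_{l,b}=1$, and binding \irl and \ichl; substitute out the transfers; and reduce to a one-variable linear program in $\pi_{l,g}$ whose slope $(\sigma_g+\xi)u(l)-(1-\rho)(\sigma_g-\xi)u(h)$ determines whether $\pistar_{l,g}=1$ or the \irh bound. The only cosmetic difference is that the paper already builds responsiveness of $\pi_l$ for type $h$ into \autoref{lmm:propertiestwotypes}\ref{propertiestwotypesitemiv} (so $V(\pi_l,h)$ is linear from the outset), whereas you recompute the kink at $\hat\pi=(\sigma_g-\sigma_b)/(\sigma_g-\xi)$ and eliminate the lower branch via \irh; both arrive at the same feasible upper piece and the same comparison of a constant slope to zero.
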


Consider the case in which there is a large mass of low types ($\rho>>0$), or the ex-post payoffs of each type are relatively close. Then, it is costly for the seller to distort information on low types or exclude them from the mechanism. Instead, seller finds it optimal to instead leave some surplus to the high type and extract full, socially efficient, surplus from the low types. This case is illustrated with the blue shaded region in \autoref{fig:optimalmenu_twotypes}. Otherwise, the seller extracts all the rents from the high type by distorting information of low type, as depicted in \autoref{fig:optimalmenu_twotypes} with the orange shaded region. This is typical in most canonical screening models.

A last comment is in order regarding the two distinct full surplus extraction regions. Interestingly, the brown yellow shaded region appears due to the discrete nature of the finite type case. When we solve for the continuum case in the next section, an analog of the region corresponding to $(1-\mu)/2\mu \leq p \leq \min\{(1-\mu)/\mu,1/2\}$ is not present. This region corresponds to low correlation across states, and the discrete nature of the space $\Delta(\Omega)$ allows the seller to perfectly reveal each of the states without any information spillovers. However, in the continuum case, there's always going to be positive information spillovers across adjacent types, so the seller won't be able to extract full surplus for any ex-post payoff function $u(\cdot)$. With continuum types, conditions on $u(\cdot)$ and prior $\mu$ are provided under which full surplus extraction is possible, as well as a complete characterization of the optimal menu in the general case.

\section{Continuum of Types}

Let $\Theta = [0,\otheta] \subset \rbb_+$ and suppose $\theta$ is drawn from a positive and continuous distribution $F$. Assume $F$ is regular, i.e. hazard rate $h(\theta)\equiv f(\theta)/(1-F(\theta))$ is nondecreasing. Let $u: \Theta \to \rbb_{+}$ be a strictly increasing and twice differentiable payoff function, and uniformly bounded with constant $M < \infty$.\footnote{Since $u(\cdot)$ is increasing, $0 \leq u'(\theta) \leq M, \forall \theta \in \Theta$.} Now, note that the set of feasible priors $\mu$ is large and indeterminate. I put some more structure on it by imposing, along with \eqref{eqn:symmetricSP}, two additional properties.

\begin{property}[Markovian] \label{prpty:markovian} The common prior $\mu$ is \textit{Markovian} if for all $\theta, \theta', \theta'' \in \Theta$ such that either $\theta >\theta'> \theta''$ or $\theta <\theta'<\theta''$, \begin{equation} \tag{M} \label{prpty:markovianM}
    \p(\omega_\theta \mid \omega_{\theta'}, \omega_{\theta''}) = \p(\omega_\theta \mid \omega_{\theta'}), \quad \forall \: (\omega_\theta, \omega_{\theta'}, \omega_{\theta''}) \in \Omega_{\theta} \times \Omega_{\theta'} \times \Omega_{\theta''}.
\end{equation} \end{property}

Markovian property restricts information spillovers to depend only on the nearest type. If perfectly revealing both states $\omega_{\theta'}$ and $\omega_{\theta''}$, the beliefs about state $\omega_{\theta}$ coincide with those in which only the nearest state $\omega_{\theta'}$ is known. Note that the assumption is agnostic for the case in which $\theta' < \theta < \theta''$. It is possible that a mixture of information products $\pi_{\theta'}$ and $\pi_{\theta''}$ may be more informative regarding state $\omega_\theta$ than each of the products alone. This mixture \textit{generate} a new product that is possibly not offered in the existing menu, or it is offered, but at a higher price. In this paper, I restrict each type to choose at most one item from the menu. \footnote{I conjecture that this is without loss of generality: as we will see, the optimal menu $\mcal^*$ features Blackwell-comparable information products $\pi_\theta$, which are Blackwell-increasing in $\theta$. But then, \cite{sinander2022converse} shows that such information structures are \textit{sharing-proof}: no information product is vulnerable to collusion. In our setting, I conjecture that sharing-proofness guarantees that any mixture doesn't generate new products and that buying the mixture is more expensive than the products alone.}

The last property imposes homogeneity in information spillovers across types. I discuss a relaxation of this property in \autoref{section:nonhomgeneity}.

\begin{property}[Homogeneity] \label{prpty:homogeneity} The common prior $\mu$ is \textit{homogeneous} if for all $\theta, \theta' \in \Theta$, \begin{equation} \tag{H} \label{prpty:homogeneityH}
    \p(\omega_{\theta} \mid \omega_{\theta'}) = \p(\omega_{|\theta-\theta'|} \mid \omega_{0}), \quad \forall \: (\omega_{\theta},\omega_{{\theta'}}) \in \Omega_{\theta}\times \Omega_{\theta'}.
\end{equation} \end{property}

Homogeneity implies that information spillovers are only a function of the distance from types, where information decays at a rate proportional to the distance.

A discussion about the choice of the model is in order. First, note that a more general model would include a two-dimensional type $\theta=$\textit{(dimension of interest, willingness to pay)}. One can find settings in which this is suitable, e.g., consider purchase of targeted ads, where dimension of interest is consumer attribute like gender, income, etc. In the consulting example, the dimension represents the size of investment/project, which induces a natural ordering of the states (no such ordering is natural with consumer attributes). Therefore, it seems suitable to pin down willingness to pay with the investment/project size, hence reducing private information to one-dimensional. Moreover, information spillovers imposed through \eqref{prpty:markovianM} and \eqref{prpty:homogeneityH} seem a natural approximation of the learning process as the investment/project size differs.

The next lemma characterizes the space of feasible priors $\mu$ satisfying \eqref{eqn:symmetricSP}, \eqref{prpty:markovianM} and \eqref{prpty:homogeneityH}.

\begin{lemma} \label{lmm:feasibleGcontinuum} Given common prior $\mu$ satisfying \eqref{eqn:symmetricSP}, \eqref{prpty:markovianM} and \eqref{prpty:homogeneityH}, it can be represented as a realization of a two-state $S=\{g,b\}$, time (type) reversible and homogeneous, continuous Markov Chain (MC) characterized by the transition matrix function $P:[0,\otheta]\to \rbb_+$ such that $P(\Delta) = \exp (Q\Delta)$, where $Q$ is the rate of transition, given by \begin{equation*}
    Q=\begin{pmatrix}
-\lambda_g & \lambda_g\\
\lambda_b & -\lambda_b
\end{pmatrix}
\end{equation*} for some $\lambda_b\geq 0$ and $\lambda_g = \lambda_b \times (1-\mu) / \mu$. Moreover, $P$ satisfies Kolmogorov \textit{forward} $P'(\Delta) = P(\Delta) Q$ and \textit{backward} $P'(\Delta) = Q P(\Delta)$ equations.\footnote{Properties \eqref{prpty:markovianM} and \eqref{prpty:homogeneityH} are stronger than the standard Markov and Homogeneity properties for stochastic processes, though one can weaken them without loss due to reversibility structure of MC. I proceed with the stronger versions in the interest of their economic relevance.}
\end{lemma}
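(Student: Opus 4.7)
The plan is to view $\{\omega_\theta\}_{\theta \in [0,\otheta]}$ as a finite-state stochastic process indexed by $\theta$: (M) delivers the Markov property, (H) gives time-homogeneity of the transitions, and (SP) supplies stationarity of the marginals. Once these three ingredients are in place, the standard finite-state Markov semigroup theorem yields the claimed exponential representation $P(\Delta) = \exp(Q\Delta)$ and both Kolmogorov equations fall out by differentiation.

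First, for each $\Delta \geq 0$, define the $2\times 2$ matrix $P(\Delta)$ entrywise by $P_{ij}(\Delta) = \p_\mu(\omega_\theta = j \mid \omega_{\theta'} = i)$ whenever $\theta - \theta' = \Delta$; by (H) this is well-defined and independent of the base point $\theta'$. For $\theta_1 < \theta_2 < \theta_3$, conditioning on the intermediate $\omega_{\theta_2}$ and invoking (M) to drop the dependence on $\omega_{\theta_1}$ in the later factor gives
\[
\p_\mu(\omega_{\theta_3} = j \mid \omega_{\theta_1} = i) = \sum_{k \in \{g,b\}} P_{kj}(\theta_3 - \theta_2)\, P_{ik}(\theta_2 - \theta_1),
\]
which, after another use of (H), is the semigroup identity $P(\Delta_1 + \Delta_2) = P(\Delta_1)\, P(\Delta_2)$ together with $P(0) = I$. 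Under the mild continuity of $P(\cdot)$ at $0$ (ensured by the continuous-type indexing and finiteness of the state space), the classical theorem for finite-state Markov semigroups produces a unique generator $Q = \lim_{\Delta \downarrow 0} (P(\Delta)-I)/\Delta$ such that $P(\Delta) = \exp(Q\Delta)$. Since each $P(\Delta)$ is row-stochastic with nonnegative entries, $Q$ must have nonnegative off-diagonals and rows summing to zero, which in two dimensions forces
\[
Q = \begin{pmatrix} -\lambda_g & \lambda_g \\ \lambda_b & -\lambda_b \end{pmatrix}, \qquad \lambda_g, \lambda_b \geq 0.
\]

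By (SP) the row vector $(\mu, 1-\mu)$ is the marginal of $\omega_\theta$ for every $\theta$, hence stationary for the semigroup. The stationarity condition $(\mu,1-\mu)Q = 0$ reduces to the single scalar equation $\mu \lambda_g = (1-\mu) \lambda_b$, i.e. $\lambda_g = \lambda_b (1-\mu)/\mu$. For a two-state chain, stationarity is moreover equivalent to detailed balance $\mu P_{gb}(\Delta) = (1-\mu) P_{bg}(\Delta)$, so the chain is reversible in $\theta$ (consistent with \autoref{lmm:symmetry}). Finally, differentiating $P(\Delta) = \exp(Q\Delta)$ and using that $Q$ commutes with $\exp(Q\Delta)$ yields $P'(\Delta) = Q \exp(Q\Delta) = \exp(Q\Delta)\, Q$, i.e. both the backward $P'(\Delta) = Q P(\Delta)$ and forward $P'(\Delta) = P(\Delta) Q$ equations simultaneously.

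The delicate step is the passage from the abstract functional semigroup identity to the exponential representation: without some regularity hypothesis (measurability, or continuity at zero), pathological non-differentiable solutions of the semigroup equation exist. In the present setting, however, finiteness of the state space together with the paper's tacit regularity on the continuously-indexed prior provide exactly what is needed for the standard generator construction, so this obstacle is handled by invoking the classical theorem rather than rebuilding it from scratch.
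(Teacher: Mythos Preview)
Your proposal is correct and follows essentially the same route as the paper: the paper likewise argues that (M) and (H) make $\{\omega_\theta\}$ a homogeneous continuous-time MC (citing \cite{cox1977theory} rather than spelling out the semigroup argument as you do), then uses (SP) to obtain stationarity $\mu = \mu P(\Delta)$, differentiates at $\Delta=0$ to get $\mu Q = 0$ and hence $\mu\lambda_g = (1-\mu)\lambda_b$, and finally invokes the local balance equation for reversibility (citing \cite{kelly1979reversibility}). Your version is simply a more self-contained unpacking of the same standard MC-theory steps that the paper delegates to references.
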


With the time units interpretation, private information could be interpreted as the time $\theta=t$ at which type $\theta$ is interested in knowing the state of MC. The transition rate $\lambda_{g}$ describes an exponentially distributed length of time (type distance). The quantity $1/\lambda_g$ is the average time units that the MC stays in the good state\textemdash likewise for the rate $\lambda_b$.

For the two-state MC, the transition matrix function $P(\Delta)$ has a closed-form given by \begin{equation} \label{eqn:transitionfunction} P(\Delta) = \frac{1}{\lambda_g + \lambda_b} \begin{pmatrix}
\lambda_b + \lambda_g e^{-\Delta(\lambda_g + \lambda_b)} & \lambda_g - \lambda_g e^{-\Delta(\lambda_g + \lambda_b)}\\
\lambda_b - \lambda_b e^{-\Delta(\lambda_g + \lambda_b)} & \lambda_g + \lambda_b e^{-\Delta(\lambda_g + \lambda_b)}
\end{pmatrix}.\footnote{In general, the expression $P(\Delta) = \exp(Q \Delta)$ is calculated using \[e^{Q} = \sum_{n=0}^{\infty} \frac{1}{n!} Q^n.\]} \end{equation} With these in hand, I provide a closed-form expression for the expected utility from buying any information product. Let $\theta, \theta' \in \Theta$ be such that type $\theta$ buys $\pi_{\theta'}$. The posterior beliefs and expected utility are derived in the same fashion as in the two types case, with the exception that conditional probabilities are now functions of the distance $\Delta = |\theta-\theta'|$\textemdash reversibility of MC yields same spillovers in each direction. Therefore, one readily obtains $U(\pi_{\theta'},\theta)$ \begin{align*} \notag
    & =\Bigl(\max \bigl\{ \overbrace{\mu \pi_{\theta',g} P_{gg}(\Delta) + (1-\mu)(1-\pi_{\theta',b}) P_{bg}(\Delta)}^{\bigl(\Ggg(\pi_{\theta'},\theta): \; a^*_{s_g}(\theta)=a_{g}\bigr)} \:,\: \overbrace{\mu \pi_{\theta',g} P_{gb}(\Delta) + (1-\mu)(1-\pi_{\theta',b}) P_{bb}(\Delta)}^{\bigl(\Ggb(\pi_{\theta'},\theta): \; a^*_{s_g}(\theta)=a_{b}\bigr)} \bigr\} \Bigr) u(\theta)\\
    & + \Bigl(\max \bigl\{ \overbrace{\mu (1-\pi_{\theta',g}) P_{gg}(\Delta) + (1-\mu)\pi_{\theta',b} P_{bg}(\Delta)}^{\bigl(\Gbg(\pi_{\theta'},\theta): \; a^*_{s_b}(\theta)=a_{g}\bigr)} \:,\: \overbrace{\mu (1-\pi_{\theta',g}) P_{gb}(\Delta) + (1-\mu)\pi_{\theta',b} P_{bb}(\Delta)}^{\bigl(\Gbb(\pi_{\theta'},\theta): \; a^*_{s_b}(\theta)=a_{b}\bigr)} \bigr\} \Bigr) u(\theta),
\end{align*} where $\Ggg(\pi_{\theta'},\theta),\Ggb(\pi_{\theta'},\theta),\Gbg(\pi_{\theta'},\theta)$ and $\Gbb(\pi_{\theta'},\theta)$ are defined similarly as in the two type case, with the additional dependence on $\Delta=|\theta-\theta'|$. Note that if \eqref{eqn:responsive} holds, then one can easily show that $\Ggg(\pi_{\theta'},\theta) \geq \Ggb(\pi_{\theta'},\theta)$ (with strict inequality for $\mu>1/2$), for all $\theta, \theta' \in \Theta$. Moreover, since $P_{gg},P_{bb}\to 1$ and $P_{gb}, P_{bg} \to 0$, as $\Delta \to 0$, if (Rsp$_{\theta'}$) is not binding, then one can also show that $\Gbg(\pi_{\theta'},\theta) \leq \Gbb(\pi_{\theta'},\theta)$ (strict for $\mu > 1/2$) for sufficiently small $\Delta=|\theta-\theta'|$. In that case, the expected utility takes the simple form \begin{equation}\label{eqn:no_max_expectedutility}
    U(\pi_{\theta'},\theta) = \bigl( \Ggg(\pi_{\theta'},\theta) + \Gbb(\pi_{\theta'},\theta)\bigr)u(\theta).
\end{equation} Indeed, we will see that the seller can restrict without loss of generality to strictly informative menus (except for $\mu=1/2$). I take the local approach in characterizing incentive compatibility, thus will be working mostly with the simple version in \eqref{eqn:no_max_expectedutility}.

\subsection{Solving the Model}
\subsubsection{Optimal Menu: Full Surplus Extraction}
I first find conditions on $u(\cdot)$ and $\mu$ such $\mstar = \overline{\mcal} \equiv \{\pibar_\theta, \tbar(\theta)\}_\theta$ is optimal.

\begin{condition}\label{cnd:fullsurpluscontinuum} Ex-post payoff function $u(\theta)$ satisfies \begin{equation}\label{eqn:fullsurpluscontinuum}
    u'(\theta) \leq 2 \lambda_b u(\theta), \quad \forall \theta \in \Theta. \tag{C2} \end{equation} \end{condition}

\begin{proposition}[Full Surplus Extraction]\label{prop:fullsurpluscontinuum} Full, socially efficient, surplus extraction menu $\overline{\mcal}$ is optimal if and only if ex-post payoff function $u(\cdot)$ satisfies \eqref{eqn:fullsurpluscontinuum}.
\end{proposition}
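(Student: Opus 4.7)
My plan is to reduce optimality of $\overline{\mcal}$ to a single scalar inequality governing incentives between any two types, and then show it is equivalent to \eqref{eqn:fullsurpluscontinuum}. Since $\tbar(\theta)=\vbar(\theta)=(1-\mu)u(\theta)$ leaves every type with zero surplus and $\overline{\mcal}$ attains the first-best aggregate surplus, $\overline{\mcal}$ is optimal if and only if it is feasible; (IR) holds with equality by construction, so only (IC) remains to be verified.

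The first step is to evaluate $V(\pibar_{\theta'},\theta)$ in closed form. Let $\Delta=|\theta-\theta'|$. Using \eqref{eqn:transitionfunction} and $\mu=\lambda_b/(\lambda_g+\lambda_b)$, the posterior on $g_\theta$ given $s_g$ of $\pibar_{\theta'}$ equals $\mu+(1-\mu)e^{-\Delta(\lambda_g+\lambda_b)}\geq 1/2$, so type $\theta$ takes $a_g$; the posterior given $s_b$ equals $\mu\bigl(1-e^{-\Delta(\lambda_g+\lambda_b)}\bigr)$, which is $\leq 1/2$ exactly when $2\mu\bigl(1-e^{-\Delta(\lambda_g+\lambda_b)}\bigr)\leq 1$. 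Outside this responsive regime $V(\pibar_{\theta'},\theta)=0$ and (IC) is automatic. Inside it, substituting into \eqref{eqn:imitation_expectedutility} and using the algebraic identity $\mu^2+(1-\mu)^2-\mu=-(1-\mu)(2\mu-1)$ together with $\mu(\lambda_g+\lambda_b)=\lambda_b$ gives
\[V(\pibar_{\theta'},\theta)=(1-\mu)\bigl[1-2\mu\bigl(1-e^{-\Delta(\lambda_g+\lambda_b)}\bigr)\bigr]u(\theta),\]
so (IC) collapses to $u(\theta')\geq\bigl[1-2\mu\bigl(1-e^{-\Delta(\lambda_g+\lambda_b)}\bigr)\bigr]u(\theta)$. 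Upward deviations ($\theta'>\theta$) are automatic since the bracket is at most $1$ and $u$ is increasing, so only downward deviations with small $\Delta$ can bind.

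For the \emph{necessity} of \eqref{eqn:fullsurpluscontinuum}, fix $\theta$ and send $\theta'\uparrow\theta$: rearranging and dividing by $\Delta$ yields
\[\frac{u(\theta)-u(\theta')}{\Delta}\leq 2\lambda_b\cdot\frac{1-e^{-\Delta(\lambda_g+\lambda_b)}}{\Delta(\lambda_g+\lambda_b)}\cdot u(\theta),\]
and letting $\Delta\downarrow 0$ gives $u'(\theta)\leq 2\lambda_b u(\theta)$. For \emph{sufficiency}, \eqref{eqn:fullsurpluscontinuum} is equivalent to $\theta\mapsto u(\theta)e^{-2\lambda_b\theta}$ being nonincreasing, so $u(\theta')\geq u(\theta)e^{-2\lambda_b\Delta}$ whenever $\theta'\leq\theta$; it then suffices to prove the scalar inequality $e^{-2\lambda_b\Delta}\geq 1-2\mu\bigl(1-e^{-\Delta(\lambda_g+\lambda_b)}\bigr)$ for all $\Delta\geq 0$. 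Both sides equal $1$ at $\Delta=0$, and the derivative of the difference is $2\lambda_b\bigl(e^{-\Delta(\lambda_g+\lambda_b)}-e^{-2\lambda_b\Delta}\bigr)\geq 0$, since $\mu\geq 1/2$ forces $\lambda_b\geq\lambda_g$ and hence $\lambda_g+\lambda_b\leq 2\lambda_b$.

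The anticipated main obstacle is the closed-form evaluation of $V(\pibar_{\theta'},\theta)$---unpacking \eqref{eqn:imitation_expectedutility} for $\pibar_{\theta'}$, handling the responsive versus non-responsive regimes, and collapsing the four $\Gamma$-terms into the single bracket above. Once that reduction is complete, the two directions of the equivalence reduce respectively to a first-order Taylor expansion and to a comparison of two scalar exponentials.
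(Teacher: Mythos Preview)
Your proof is correct and follows the same overall architecture as the paper's: reduce optimality of $\overline{\mcal}$ to incentive compatibility, observe that upward and far-away downward deviations are harmless, and handle local downward deviations via a first-order limit for necessity and a global comparison for sufficiency.

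The one genuine difference is in the sufficiency direction. The paper does \emph{not} compute the closed form $V(\pibar_{\theta'},\theta)=(1-\mu)\bigl[1-2\mu(1-e^{-\Delta(\lambda_g+\lambda_b)})\bigr]u(\theta)$; instead it shows directly that the right-hand side of the IC inequality, viewed as a function of $\theta$ for fixed $\theta'$, is nonincreasing, by differentiating $(\mu P_{gg}(\Delta)+(1-\mu)P_{bb}(\Delta)-\mu)u(\theta)$ and bounding the resulting ratio $\bigl[-\mu P'_{gg}-(1-\mu)P'_{bb}\bigr]/\bigl[\mu P_{gg}+(1-\mu)P_{bb}-\mu\bigr]$ below by $2\lambda_b$. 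Your route is arguably cleaner: you integrate \eqref{eqn:fullsurpluscontinuum} once to obtain the Gr\"onwall bound $u(\theta')\geq u(\theta)e^{-2\lambda_b\Delta}$, and then reduce the whole global IC system to the single scalar inequality $e^{-2\lambda_b\Delta}\geq 1-2\mu(1-e^{-\Delta(\lambda_g+\lambda_b)})$, which you dispatch by a one-line derivative comparison using $\lambda_g+\lambda_b\leq 2\lambda_b$. The cost of your approach is the up-front algebra to collapse the $\Gamma$-terms into the bracket; the payoff is that both the responsive/non-responsive case split and the global IC check become transparent scalar statements rather than monotonicity arguments in $\theta$.
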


For an intuition, let's start by comparing marginal gain versus marginal cost of imitation. Suppose seller offers $\overline{\mcal}$. For any type $\theta>0$, her (marginal) gain in price from buying $\pibar_{\theta-\Delta}$ is \begin{equation}\label{eqn:MarginalGain} (1-\mu) \bigl(u(\theta) - u(\theta-\Delta)\bigr), \tag{MG} \end{equation} for some (sufficiently small) $\Delta>0$. Now, let's calculate the (marginal) cost, by first calculation the probability she would be taking the wrong action. With probability $\mu$ she perfectly learns that the state is $\omega_{\theta-\Delta}=g_{\theta-\Delta}$. As discussed above, $\Ggg(\pibar_{\theta-\Delta},\theta) \geq \Ggb(\pibar_{\theta-\Delta},\theta)$, so she infers that the state $\omega_\theta$ is more likely and takes action $a_g$. Similarly, with probability $1-\mu$ she perfectly learns that the state is $\omega_{\theta-\Delta}=b_{\theta-\Delta}$, and for sufficiently small $\Delta>0$, $\Gbg(\pibar_{\theta-\Delta},\theta) \leq \Gbb(\pibar_{\theta-\Delta},\theta)$, leading her to optimally take the bad action $a_b$. But now, when $\omega_{\theta-\Delta}=g_{\theta-\Delta}$, with probability $P_{gb}(\Delta)$ the state transitions to $b_{\theta}$, which gives a payoff of $0$ to type $\theta$. Doing the same reasoning for the bad state, the marginal cost of imitation is therefore given by \begin{equation} \bigl(\mu P_{gb}(\Delta) + (1-\mu)P_{bg}(\Delta)\bigr)u(\theta). \tag{MC} \end{equation} Then, divide by $\Delta$ and taking the limit as $\Delta \to 0$ on both sides, and find first order derivatives using forward or backward equations. Thus, one finds that marginal gain does not exceed the marginal cost if and only if the payoff function satisfies \eqref{eqn:fullsurpluscontinuum}. Finally, it turns out that local IC is sufficient, hence concluding the proof sketch.

A \textit{necessary} condition for full surplus extraction, by a simple application of Gr\"{o}nwell's inequality using \eqref{eqn:fullsurpluscontinuum}, is given by \begin{equation} \tag{G} \label{eqn:Gronwell} u(\theta) \leq u(0) \exp\left(2\lambda_b \theta\right), \quad \forall \theta \in \Theta.\end{equation}

\begin{figure} \centering
    \subfigure[ \label{fig:gronwella}]{\includegraphics[width=0.45\textwidth]{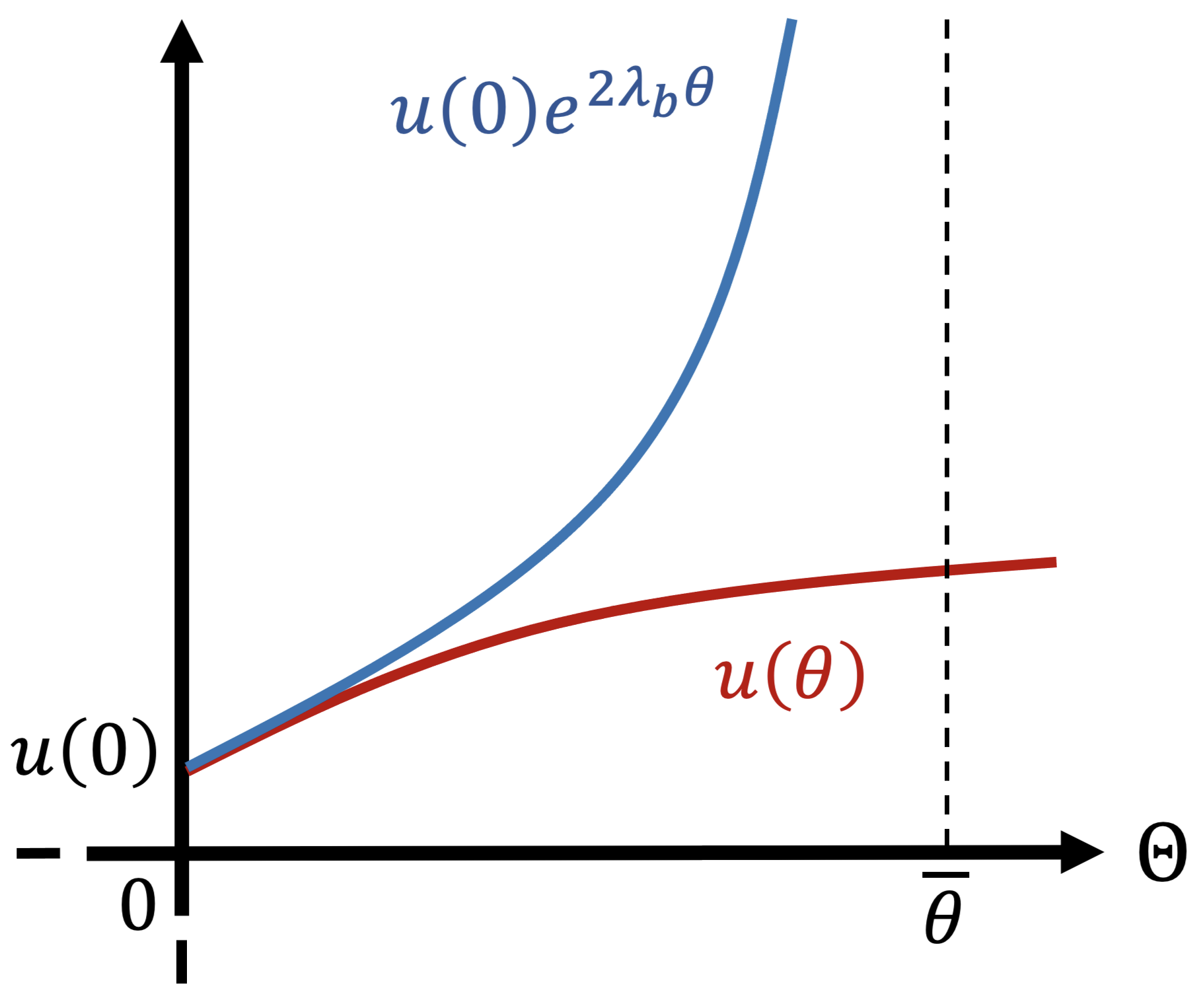}}
    \subfigure[ \label{fig:gronwellb}]{\includegraphics[width=0.45\textwidth]{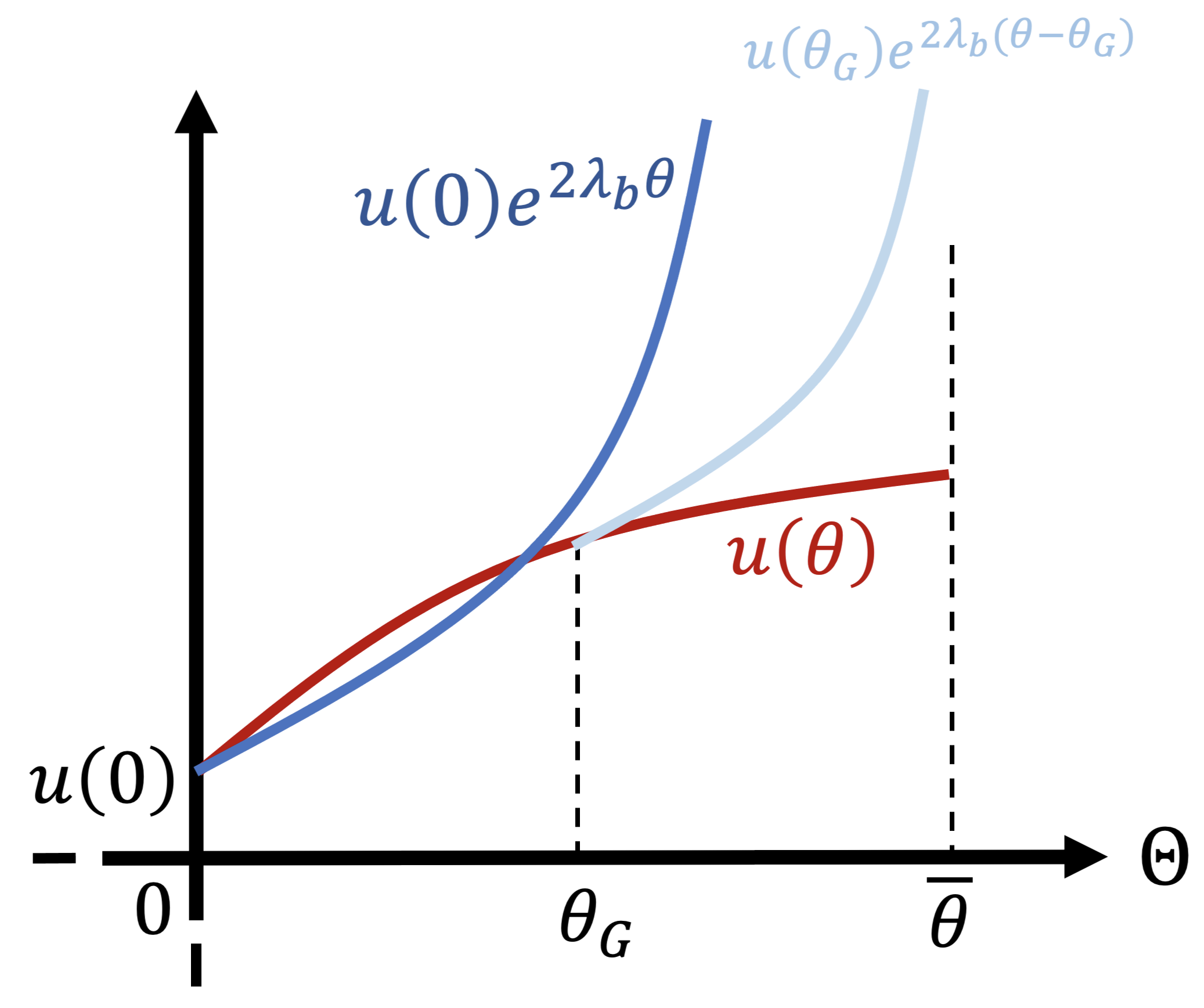}}
 \caption{(a) Full Surplus Extraction. \quad (b) No Full Surplus Extraction.}
\end{figure}

See figure \ref{fig:gronwella} for an illustration. Transition rate $\lambda_b$ characterizes the exponential \textit{decay} rate of the MC, or the rate at which information decays. Therefore, for full surplus extraction ex-post payoff function $u(\cdot)$ has to be bounded above by an exponential function with exponential parameter $2\lambda_b \geq \lambda_g + \lambda_b$.

\subsubsection{Optimal Menu: General Case}

As a simple case, consider $\lambda_b=0$, for which \eqref{eqn:fullsurpluscontinuum} fails. It implies that the states are \textit{perfectly} correlated: the MC stays at the initial state for an $1/\lambda_b \to \infty$ expected time (type) units. The following proposition derives the optimal menu for this particular case:

\begin{proposition}[Posted Price] \label{prp:postedprice} If the states are perfectly correlated, i.e. $\lambda_b=0$, then the optimal menu $\mstar$ offers any of $\pibar_\theta$ at a posted price $t^*>0$.
\end{proposition}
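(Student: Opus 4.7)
The plan is to reduce the seller's problem to a standard one-dimensional vertical screening problem with zero marginal cost, for which a posted price is known to be optimal under regularity. The core observation is that $\lambda_b=0$ renders the buyer's type only a vertical characteristic (via $u(\theta)$) because horizontal spillovers become perfect.

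First, I would spell out what $\lambda_b=0$ does to the Markov chain of \autoref{lmm:feasibleGcontinuum}. The identity $\lambda_g=\lambda_b(1-\mu)/\mu$ forces $\lambda_g=0$ as well, so the generator $Q$ is the zero matrix and $P(\Delta)=\exp(Q\Delta)=I$ for every $\Delta\geq 0$. Hence $\p_\mu(g_\theta\mid g_{\theta'})=\p_\mu(b_\theta\mid b_{\theta'})=1$ for every $\theta,\theta'\in\Theta$: all coordinates $\omega_\theta$ agree almost surely. Plugging $P_{gg}=P_{bb}=1$ and $P_{gb}=P_{bg}=0$ into the displayed expression for $U(\pi_{\theta'},\theta)$ immediately gives $\Ggg(\pi_{\theta'},\theta)=\mu\pi_{\theta',g}$, $\Gbb(\pi_{\theta'},\theta)=(1-\mu)\pi_{\theta',b}$, and $\Ggb=\Gbg=0$, so
\[
V(\pi_{\theta'},\theta)=k(\pi_{\theta'})\,u(\theta),\qquad k(\pi_{\theta'}):=\mu\pi_{\theta',g}+(1-\mu)\pi_{\theta',b}-\mu\in[0,1-\mu].
\]
Every information product is thus characterized, from every buyer's perspective, by a single vertical quality $k\in[0,1-\mu]$, and willingness to pay scales linearly in $u(\theta)$; horizontal heterogeneity has disappeared.

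Second, I would invoke standard Mussa--Rosen screening with zero production cost. Relabel each menu item by its quality $k_\theta$. Supermodularity of $k\cdot u(\theta)$ in $(k,\theta)$ makes local IC equivalent to $k_\theta$ nondecreasing in $\theta$ together with the envelope identity $U(\theta)=\int_0^\theta k_{\tilde\theta}u'(\tilde\theta)\,d\tilde\theta$, with $U(0)=0$ from the binding participation constraint at the bottom. Substituting $t(\theta)=k_\theta u(\theta)-U(\theta)$ and integrating by parts reduces the objective to the virtual-surplus representation
\[
\int_0^{\otheta} t(\theta)f(\theta)\,d\theta=\int_0^{\otheta} k_\theta\bigl[u(\theta)f(\theta)-u'(\theta)(1-F(\theta))\bigr]d\theta.
\]
Regularity of $F$ together with $u$ strictly increasing and bounded makes the bracketed integrand $\psi(\theta)$ single-crossing from below, so the pointwise maximizer over $k_\theta\in[0,1-\mu]$ is $k_\theta^\star=(1-\mu)\mathbbm{1}\{\theta\geq\theta^\star\}$ at the crossing point $\theta^\star$. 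The resulting $k^\star$ is automatically monotone, so no ironing is required.

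This two-level step function is exactly a posted-price mechanism: every type with $\theta\geq\theta^\star$ purchases a fully informative product at price $t^\star=(1-\mu)u(\theta^\star)>0$, and lower types opt out. Because perfect correlation makes any $\pibar_\theta$ equally informative about every other coordinate, offering any single $\pibar_\theta$ at the price $t^\star$ is revenue-equivalent, establishing the proposition. The main obstacle is verifying single-crossing of $\psi$; this is where regularity of $F$ together with the monotonicity and boundedness of $u$ play their roles, and in a non-single-crossing case one would invoke standard ironing, which preserves the ``one price, one product'' structure precisely because the feasible quality set $[0,1-\mu]$ is an interval and the pointwise optimum always lies at an endpoint.
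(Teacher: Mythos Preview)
Your proposal is correct and follows essentially the same approach as the paper: both observe that $\lambda_b=0$ collapses horizontal differentiation so that every product is summarized by a single quality $k=\mu\pi_{g}+(1-\mu)\pi_{b}-\mu\in[0,1-\mu]$, and then reduce to a standard single-good monopoly with private value $(1-\mu)u(\theta)$, for which a posted price is optimal under regularity. The paper's proof is terser---it simply normalizes $\tilde{x}=k/(1-\mu)$ and invokes the well-known result---whereas you spell out the Mussa--Rosen/Myerson virtual-surplus computation and, helpfully, note that even absent single-crossing the bang-bang structure of the pointwise optimum together with ironing still yields a single step (hence a posted price).
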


Since all types are symmetrically informed about the state, i.e. $\mu_\theta = \mu$, and states are perfectly correlated, any information product $\pi_{\theta}$ gives the same value to each type. Thus, letting $v(\theta) \equiv (1-\mu)u(\theta)$, the value from fully informative product $\pibar_\theta$, I show that the problem is analogous to a standard monopolist problem facing a buyer of private value $v(\theta)$. In that case, optimality of a posted price is well-known (for regular $F$).

Consider now the non-trivial case $\lambda_b > 0$ (\textit{imperfect} correlation) and suppose \eqref{eqn:fullsurpluscontinuum} does not hold, denoted by ($\neg$ C2). I impose the following additional assumption on the payoff function $u(\cdot)$: \begin{assumption} \label{cond:payoff}
    Fix $\lambda_g >0$. The payoff function $u(\cdot)$ is log-concave and $u'(\theta)-2\lambda_g u(\theta)$ is non-increasing in $\theta$.
\end{assumption} If $u(\cdot)$ is concave then \autoref{cond:payoff} is satisfied. The second condition of \autoref{cond:payoff} can be interpreted as requiring marginal gains to increase at a lower rate than marginal costs of imitation, which will be key driving force for the novel result on information rents for ``middle''-types. The following characterization of ($\neg$ C2), given \autoref{cond:payoff} holds, is straightforward: \begin{lemma}[$\neg$ C2]\label{lmm:notc2} Suppose \autoref{cond:payoff} holds but \eqref{eqn:fullsurpluscontinuum} doesn't. There exits $\gtheta\in \Theta$ such that \begin{equation}\label{eqn:notc2}
        u'(\theta)>2\lambda_b u(\theta), \: \forall \theta \leq \gtheta \quad \text{and} \quad u'(\theta)\leq 2\lambda_b u(\theta), \: \forall \theta > \gtheta, \text{ if any.}\footnote{If $\gtheta=\otheta$, then second inequality is vacuously true.} \tag{$\neg$C2}
    \end{equation} If $\gtheta < \otheta$, it satisfies $u'(\gtheta)=2 \lambda_b u(\gtheta)$.
\end{lemma}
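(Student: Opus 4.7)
The plan is to derive the single-crossing structure directly from log-concavity of $u$; the second clause of \autoref{cond:payoff} will not be needed for this particular lemma. First, I would define the auxiliary function $\phi(\theta) := u'(\theta) - 2\lambda_b u(\theta)$. Because $u(\theta) > 0$ on $\Theta$, the sign of $\phi(\theta)$ coincides with the sign of $u'(\theta)/u(\theta) - 2\lambda_b$, and log-concavity of $u$ says $(\log u)'(\theta) = u'(\theta)/u(\theta)$ is non-increasing in $\theta$. Hence the set $\{\theta \in \Theta : u'(\theta) > 2\lambda_b u(\theta)\}$ is a (possibly empty) lower interval of $[0,\otheta]$.

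Next, I would use hypothesis ($\neg$C2), i.e., that \eqref{eqn:fullsurpluscontinuum} fails at some $\theta$, to conclude that this lower interval is non-empty, and set $\gtheta := \sup\{\theta \in \Theta : u'(\theta) > 2\lambda_b u(\theta)\}$. If $\gtheta = \otheta$, the tail $\theta > \gtheta$ is empty and the second inequality of the statement holds vacuously. If instead $\gtheta < \otheta$, continuity of $\phi$ (guaranteed by $u \in C^2$) combined with the definition of $\gtheta$ as a supremum forces $\phi(\gtheta) = 0$, i.e., $u'(\gtheta) = 2\lambda_b u(\gtheta)$, which is the equality claimed in the final sentence of the lemma. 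For every $\theta > \gtheta$, monotonicity of $u'/u$ and the supremum definition give $u'(\theta)/u(\theta) \leq 2\lambda_b$, completing the partition.

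There is no real technical obstacle: the argument is the standard ``single-crossing from a monotone log-derivative'' observation together with the intermediate value theorem applied to a continuous function. The only point worth flagging in the writeup is a mild cosmetic tension in the statement itself: if $\gtheta < \otheta$, then strict inequality $u'(\gtheta) > 2\lambda_b u(\gtheta)$ cannot hold at the boundary point $\gtheta$ (equality does), so the clause ``$\forall \theta \leq \gtheta$'' in \eqref{eqn:notc2} should be read as ``$\forall \theta < \gtheta$'', with the equality $u'(\gtheta) = 2\lambda_b u(\gtheta)$ recorded separately as the last sentence of the lemma already does; when $\gtheta = \otheta$, the strict inequality holds on all of $\Theta$ without qualification.
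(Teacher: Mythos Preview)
Your proposal is correct and is precisely the standard ``single-crossing from monotone log-derivative'' argument the paper has in mind; the paper itself omits the proof entirely, recording only ``Proof is straightforward.'' Your observation about the boundary point $\gtheta$ (equality rather than strict inequality when $\gtheta<\otheta$) is a valid cosmetic caveat on the statement, not a defect in your argument.
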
 See figure \ref{fig:gronwellb} for an illustration when $\gtheta < \otheta$ and $u(\cdot)$ is concave. In particular, from \autoref{prop:fullsurplustwotypes} we know that the seller can extract full surplus from all types $\theta \geq \gtheta$ by providing uninformative products to all $\theta < \gtheta$. However, we will see that it is profitable to also provide information to types $\theta<\gtheta$. This, in turn, will force the seller to leave positive rents around $\gtheta$, such that $\theta_{G}$ receives the highest.

\begin{theorem}[Characterization of $\mstar$] \label{thm:characterizationcontinuumtypes}
    Suppose \eqref{eqn:notc2} and \autoref{cond:payoff} hold. Then, an optimal menu $\mstar$ is characterized by some $\sutheta,\sotheta \in \Theta$, with $0 \leq \sutheta \leq \gtheta \leq \sotheta \leq \otheta$ such that \begin{enumerate}[label=(\alph*)]
    \item $\forall \theta \in \Theta$, there's no distortion of the signal in the bad state, i.e. $\pi^*_{b,\theta}=1$.
    
    \item (full, socially inefficient, surplus extraction)  $\forall\theta < \sutheta$, if any, \[\pi^*_{g,\theta} = \frac{2\mu-1}{\mu}\frac{u'(\theta)}{u'(\theta)-2\lambda_g u(\theta)} \text{\: and \:} t^*(\theta)=V(\pistar_{\theta},\theta),\] where  $\pi^*_{g,\theta}<1$ and strictly increasing in $\theta$.
    
    \item (positive rents and no distortion) $\forall\theta \in [\sutheta,\sotheta]$, \[\pi^*_{g,\theta}=1 \text{\: and \:} t^*(\theta) = (1-\mu)u(\sutheta)+\int_{\sutheta}^{\theta} 2\mu\lambda_g u(\Tilde{\theta})d\Tilde{\theta}.\]
    
    \item (full, socially efficient, surplus extraction) $\forall\theta > \sotheta$, if any, \[\pi^*_{g,\theta}=1 \text{\: and \:} t^*(\theta)=(1-\mu)u(\theta).\]
\end{enumerate}
\end{theorem}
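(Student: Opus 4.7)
The plan is to reduce the problem to a one-dimensional screening problem over the good-state signal probability $\pi_{\theta,g}$, derive one-sided envelope conditions reflecting the kink in $V(\pi_{\theta'},\theta)$ at truth-telling, and then paste together three regimes separated by endogenous thresholds $\sutheta$ and $\sotheta$. For part (a), I would adapt \autoref{lmm:propertiestwotypes}(iv): in the continuum MC, for every $\theta$ and nearby $\theta'$ the local correlation satisfies $\sigma_g \geq \sigma_b > \xi$ (since $P_{gb}(\Delta) \to 0$ as $\Delta \to 0$). A replacement argument then shows that any menu with $\pi_{\theta,b} < 1$ can be improved by moving to $\pi_{\theta,b} = 1$ and re-choosing $\pi_{\theta,g}$ so that $V(\pi_\theta,\theta)$ is preserved: this leaves truth-telling utility unchanged while weakly relaxing upward IC of higher types, so WLOG $\pi^*_{b,\theta}=1$.

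Once $\pi_{\theta,b} = 1$, I would compute $V(\pi_{\theta'},\theta)$ explicitly using the transition matrix $P(|\theta-\theta'|)$ from \autoref{lmm:feasibleGcontinuum}. Because $|\theta-\theta'|$ is non-smooth at $\theta'=\theta$, the value function has one-sided derivatives
\[
\left.\tfrac{\partial V(\pi_{\theta'},\theta)}{\partial \theta}\right|_{\theta'=\theta\mp} \;=\; \bigl(\mu\pi_{\theta,g} + 1 - 2\mu\bigr)u'(\theta) \;\mp\; 2\mu\lambda_g\,\pi_{\theta,g}\, u(\theta),
\]
so local IC requires the derivative $q'(\theta)$ of the indirect utility $q(\theta):=V(\pi_\theta,\theta)-t(\theta)$ to lie in the subdifferential $[\partial V|_-,\partial V|_+]$. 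Taking the standard downward-IC-binding branch, integrating by parts with $q(0)=0$, and collecting the coefficient of $\pi_{\theta,g}$ in expected revenue yields a pointwise linear program on $[(2\mu-1)/\mu,1]$ whose sign is governed by $u(\theta)h(\theta) + 2\lambda_g u(\theta) - u'(\theta)$, monotone in $\theta$ by \autoref{cond:payoff} and regularity of $F$.

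A naive pointwise solution would be bang-bang with a single switch at $\gtheta$, but this violates IR on the low side (producing negative rents). The fix is the three-region construction. On $[0,\sutheta)$, impose $q \equiv 0$: since $q'\equiv 0$ must lie in $[\partial V|_-,\partial V|_+]$, the largest (and hence revenue-maximizing) $\pi_{\theta,g}$ compatible with $\partial V|_- \leq 0$ is obtained at equality, which rearranges exactly to the formula $\pi^*_{g,\theta} = \tfrac{2\mu-1}{\mu}\cdot \tfrac{u'(\theta)}{u'(\theta)-2\lambda_g u(\theta)}$ of part (b), with $t^*(\theta) = V(\pistar_\theta,\theta)$. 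On $[\sutheta,\sotheta]$, set $\pi^*_{g,\theta}=1$; using $\mu\lambda_g=(1-\mu)\lambda_b$, the envelope integrates to $q'(\theta) = (1-\mu)(u'(\theta)-2\lambda_b u(\theta))$, so $q$ rises on $[\sutheta,\gtheta]$, falls on $[\gtheta,\sotheta]$, and the upper threshold is pinned down by $q(\sotheta)=0$, i.e.\ $\int_{\sutheta}^{\sotheta}(u'-2\lambda_b u)d\theta=0$. Rewriting $t^*$ in this region via $\mu\lambda_g=(1-\mu)\lambda_b$ produces exactly the expression of part (c). On $(\sotheta,\otheta]$ the local \eqref{eqn:fullsurpluscontinuum} holds, so by (the proof of) \autoref{prop:fullsurpluscontinuum} full surplus extraction is compatible with IC, giving $\pi^*_{g,\theta}=1$ and $t^*(\theta)=\vbar(\theta)$. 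The lower threshold $\sutheta$ is then selected by the seller's FOC balancing the distortion loss on $[0,\sutheta)$ against the rents conceded to the middle region.

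The main obstacle will be upgrading local IC to global IC given the kink in $V(\pi_{\theta'},\theta)$ at $\theta'=\theta$. I would first verify that $\pi^*_{g,\theta}$ is non-decreasing: by log-concavity of $u$ (\autoref{cond:payoff}), $u'/(u'-2\lambda_g u)$ is non-decreasing on $[0,\sutheta)$, then jumps up to $1$ at $\sutheta$ and stays at $1$. Combined with the exponential decay of MC correlation in $|\theta-\theta'|$ (which makes distant imitations strictly dominated by local ones), this yields a standard single-crossing argument reducing global IC to local IC. I would then check $q \geq 0$ throughout: $q\equiv 0$ on $[0,\sutheta]\cup[\sotheta,\otheta]$ by construction, and on $[\sutheta,\sotheta]$ the function $q$ starts and ends at $0$ with a single interior maximum at $\gtheta$. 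Finally, I would handle boundary cases $\sutheta=0$ (no distortion region) and $\sotheta=\otheta$ (no upper full-extraction region) to confirm the ordering $0 \leq \sutheta \leq \gtheta \leq \sotheta \leq \otheta$ is exhaustive, with the verification of global IC across the jump in $\pi^*_{g,\theta}$ at $\sutheta$ being the most delicate technical step.
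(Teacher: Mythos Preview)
Your outline is essentially the paper's proof: reduce to a one-dimensional screening problem in $\pi_{\theta,g}$, exploit the one-sided derivatives of $V(\pi_{\theta'},\theta)$ at the kink (the paper formalizes this via the Carbajal--Ely integrable-selection machinery, which is precisely your subdifferential language), obtain a virtual surplus linear in $\pi_{\theta,g}$, and paste three regimes at thresholds $\sutheta,\sotheta$. A few differences are worth flagging. For part (a), the paper does \emph{not} adapt the two-type replacement argument; it proves $\pi^*_{b,\theta}=1$ inside the relaxed problem by observing that, for fixed $x(\theta)=\mu\pi_{\theta,g}+(1-\mu)\pi_{\theta,b}$, pushing $\pi_{\theta,b}\uparrow 1$ maximizes $\pi_{\theta,g}+\pi_{\theta,b}$ (since $\mu\geq 1/2$) and hence minimizes the selection $\gamma$, raising virtual surplus pointwise. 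Organizationally, the paper first \emph{defines} $\sotheta$ as the infimum of the full-extraction region and writes the virtual surplus over $[0,\sotheta]$ with the modified hazard $f(\theta)/(F(\sotheta)-F(\theta))$; the threshold $\sutheta$ then comes from the zero of the coefficient $\chi(\theta,\sotheta)=u(\theta)-\bigl(u'(\theta)-2\lambda_g u(\theta)\bigr)(F(\sotheta)-F(\theta))/f(\theta)$, not from a separate FOC, and your remark about a ``bang-bang switch at $\gtheta$'' is off---$\gtheta$ is where $q'$ changes sign \emph{inside} region (c), whereas the virtual-surplus sign switch is at $\sutheta$.

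The one place your plan is genuinely weaker is global IC. Because of the kink at $\theta=\theta'$, an off-the-shelf single-crossing reduction does not apply, and ``exponential decay makes distant imitations dominated by local ones'' is not a proof. The paper verifies (IC$_{\theta,\theta'}$) directly: for $\theta>\theta'$ it shows $\partial_\theta\bigl(\Gamma_{gg}(\pi^*_{\theta'},\theta)+\Gamma_{bb}(\pi^*_{\theta'},\theta)-\mu\bigr)u(\theta)\leq 0$, which after substituting $P'(\Delta)=P(\Delta)Q$ and the explicit $\pi^*_{\theta',g}$ reduces to an inequality of the form
\[
\frac{u'(\theta)}{u(\theta)}\Bigl(\lambda_g-(1-\mu)\tfrac{u'(\theta')}{u(\theta')}\Bigr)\ \leq\ \frac{u'(\theta')}{u(\theta')}\Bigl(\lambda_g-(1-\mu)\tfrac{u'(\theta)}{u(\theta)}\Bigr)e^{-\Delta(\lambda_g+\lambda_b)},
\]
which follows from log-concavity of $u$. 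Your monotonicity check for $\pi^*_{g,\theta}$ is correct and useful, but you should expect to carry out a direct computation of this kind rather than appeal to a standard single-crossing lemma.
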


\begin{figure} \centering
\subfigure[]{\includegraphics[width=0.45\textwidth]{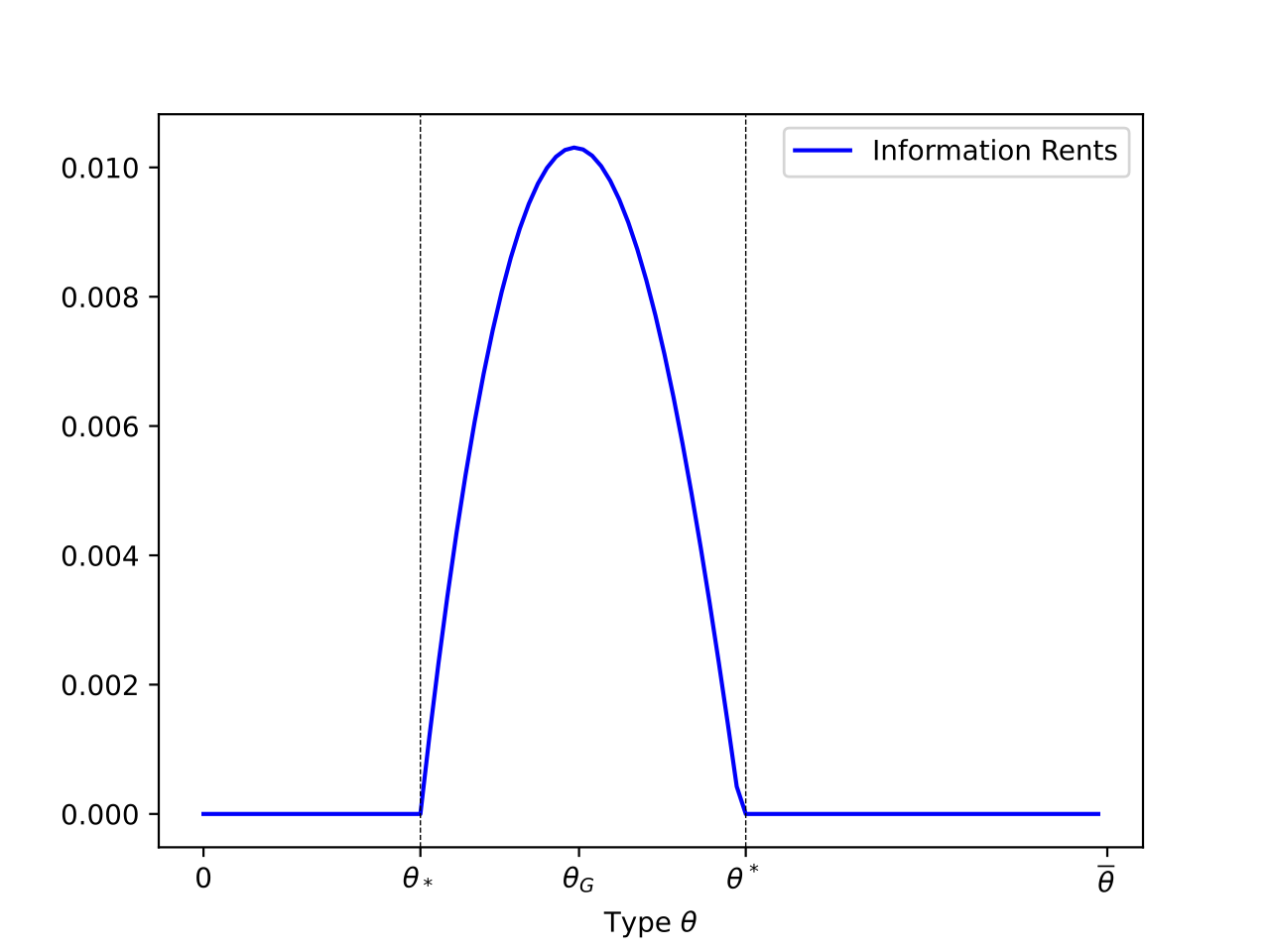}}
\subfigure[]{\includegraphics[width=0.45\textwidth]{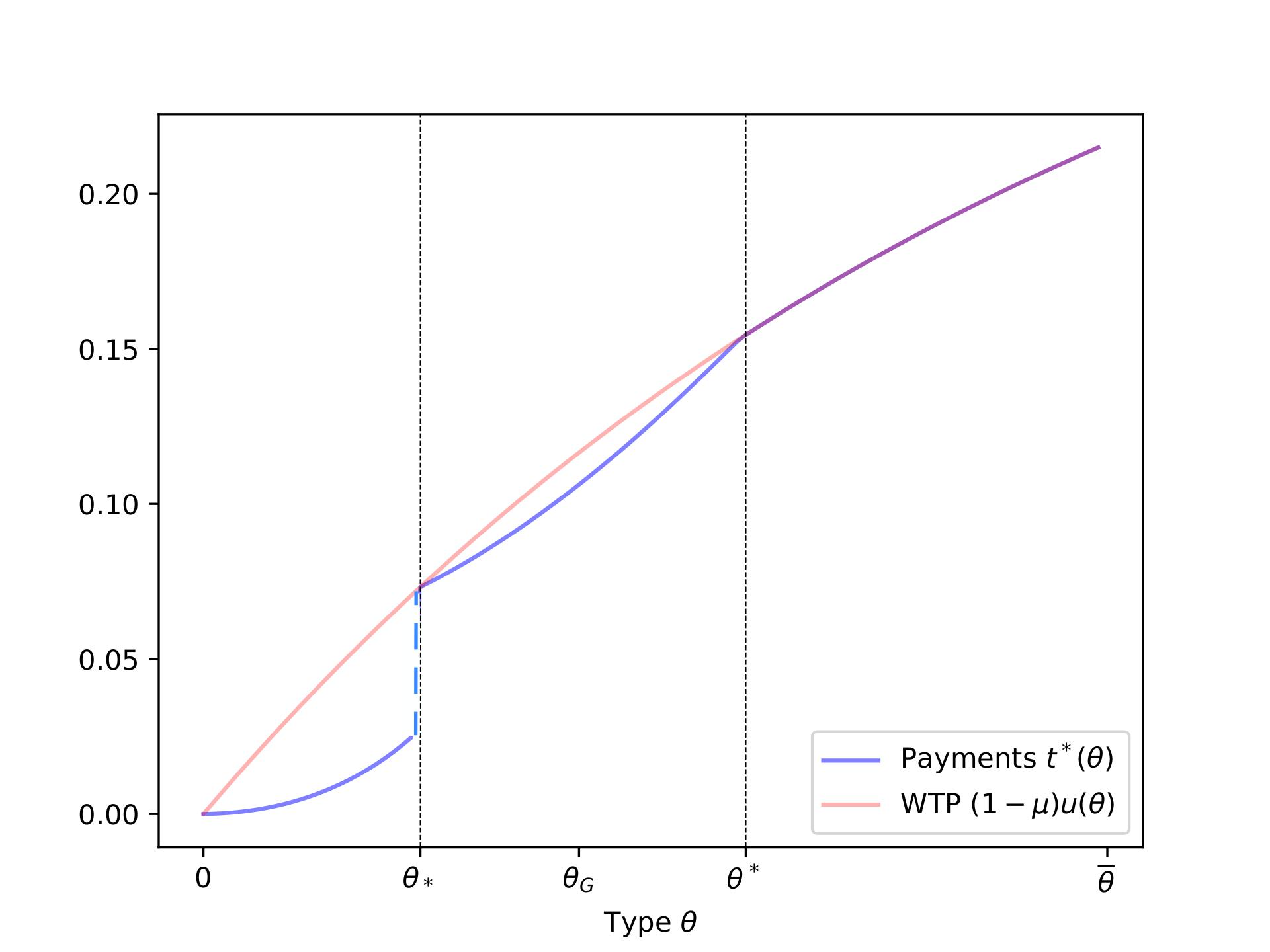}}
\caption{(a) Optimal rents and (b) payments for $\mu=2/3$, $\lambda_g=0.5$ and $u(\theta)=1-\exp(-\theta)$.}
 \label{fig:optimalmenucontinuum_rentspayments}
\end{figure}

In figure \ref{fig:optimalmenucontinuum_rentspayments}, I plot buyer surplus (a) and payments (b) of an optimal menu $\mstar$ when $\mu=2/3$, $\lambda_g=0.5$ and $u(\theta)=1-\exp(-\theta)$. A new feature appears where only the ``middle'' types $\theta \in [\sutheta,\sotheta]$ get positive information rents (surplus). High types suffer large loss in expected utility when ex-post payoff $u(\theta)$ is sufficiently big. Similar to the two type case, here it turns out that distorting only the good signal is sufficient. By controlling the distortion in $\pi_{l,g}$ and setting $\pi_{l,b}=1$, it is as if the seller is always giving the correct recommendation when sending $s=s_g$, but chooses leaves some uncertainty of whether the state is bad when sending $s=s_b$. This reduces to a one-dimensional screening problem, so one can use well-established tools from mechanism design to solve for the optimal $\pi^*_{g,l}$.

As discussed in the beginning of this paper, two commonly observed fee structures are flat/hourly rates and value-based fees. In \autoref{prp:postedprice} we have already seen that a posted price is optimal when states are perfectly correlated (see \autoref{fig:commonfees}, (a) for an illustration). We could think of it as a flat/hourly rate a consultant or a tutor charges for their expertise. This is intuitive in settings when any client is demanding the same type of expertise: market analysis or forecasting in consulting, and econ 101 or programming language classes in tutor-tutee example. Now, consider settings in which there is little to no correlation across different states. As $\lambda_b \uparrow \infty$, condition \eqref{cnd:fullsurpluscontinuum} is satisfied for all $\theta$ since $u'(0) \leq M$ by assumption and $u(\cdot)$ is concave. But then, we know seller can extract full surplus, hence a value-based fee structure. In actual consulting services, if a client needs expertise on their own projects, it is very unlikely that there are other clients with the same requests. Therefore, consulting companies can easily charge a fee based on the value added to its project \textemdash in our case given by $(1-\mu)u(\theta)$.

\begin{figure} \centering
\subfigure[]{\includegraphics[width=0.45\textwidth]{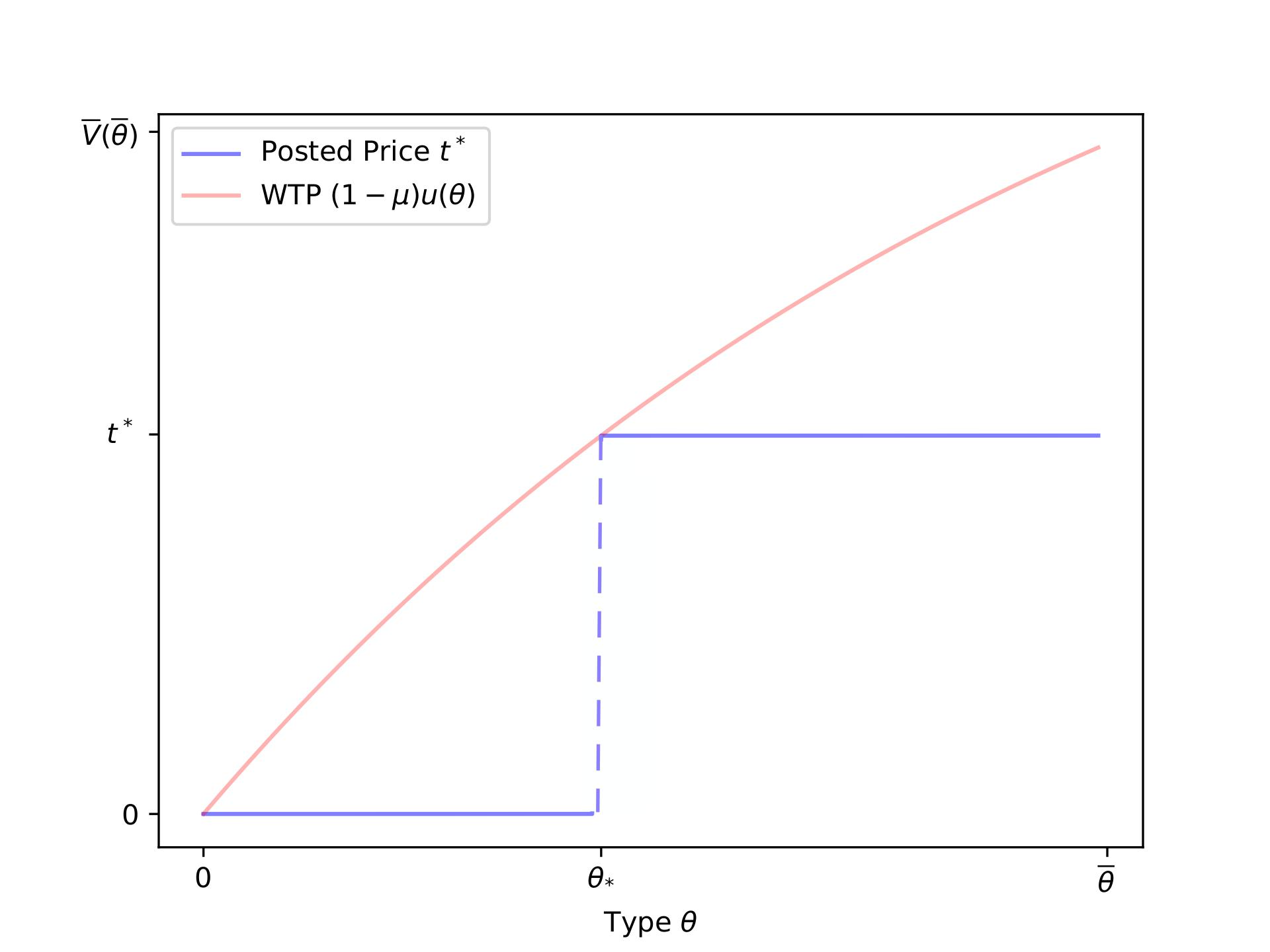}}
\subfigure[]{\includegraphics[width=0.45\textwidth]{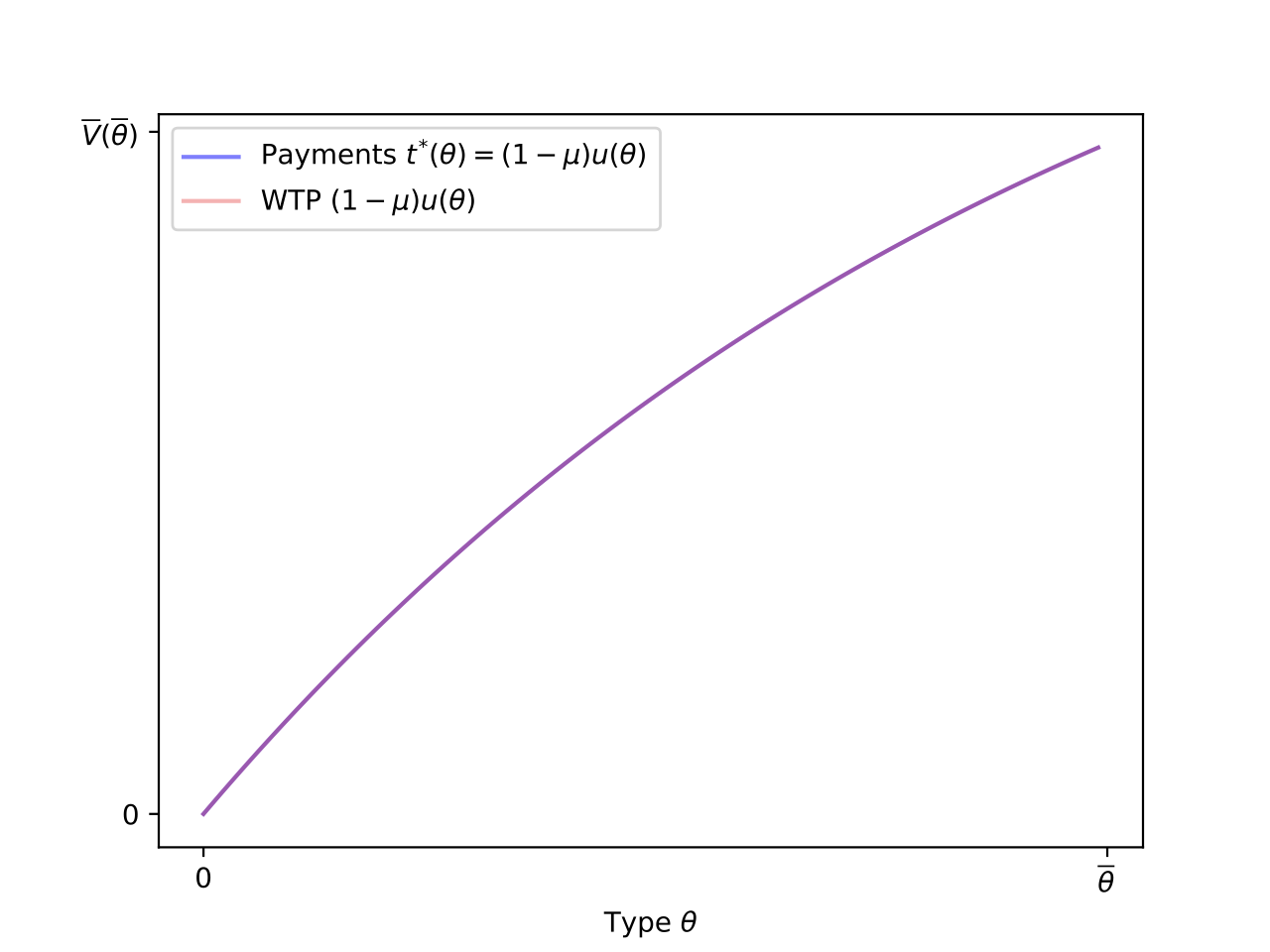}}
\caption{(a) Flat/hourly rate ($\lambda_b \downarrow 0$) and (b) value-based fee ($\lambda_b \uparrow \infty$) for $\mu=2/3$ and $u(\theta)=1-\exp(-\theta)$.}
 \label{fig:commonfees}
\end{figure} 

As a final comment, observe that the optimal signals exhibit (Blackwell) monotonicity in information. \begin{definition}[Monotone $\mcal$] A simple menu $\mcal$ is (Blackwell) monotone if \begin{equation}\label{dfn:monotonemenu}
    \mu \pi_{\theta,g}+(1-\mu)\pi_{\theta,b} \geq \mu \pi_{\theta',g}+(1-\mu)\pi_{\theta',b} , \quad \forall \theta>\theta'.
\end{equation} \end{definition} From characterization in \autoref{prop:fullsurpluscontinuum} and \autoref{thm:characterizationcontinuumtypes}, we have the following: \begin{corollary}\label{crll:monotoneoptimalmenu} Any optimal, simple menu $\mstar$ is monotone. \end{corollary}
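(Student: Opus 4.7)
My plan is to verify the monotonicity inequality \eqref{dfn:monotonemenu} by direct inspection of the optimal signal formulas, splitting into the three exhaustive cases covered by the preceding characterization results. The key reduction is that in every case the component $\pistar_{\theta,b}$ turns out to be the constant $1$, so it suffices to show that $\theta\mapsto\pistar_{\theta,g}$ is non-decreasing.

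First I would dispose of the two degenerate situations. If \eqref{eqn:fullsurpluscontinuum} holds, \autoref{prop:fullsurpluscontinuum} gives $\mstar=\overline{\mcal}$, so $\pistar_{\theta,g}=\pistar_{\theta,b}=1$ identically and $\mu\pistar_{\theta,g}+(1-\mu)\pistar_{\theta,b}\equiv 1$ is trivially monotone. If $\lambda_b=0$, \autoref{prp:postedprice} produces a posted price offering $\pibar_\theta$ at $t^{*}>0$; since each type's willingness to pay equals $(1-\mu)u(\theta)$, which is strictly increasing in $\theta$, regularity of $F$ forces the set of buyers to be an upper contour set. Non-buyers are thus low types effectively holding the null signal with informativeness $\mu$, while buyers are high types with informativeness $1$, so monotonicity follows.

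The substantive case is \eqref{eqn:notc2} with $\lambda_b>0$, where \autoref{thm:characterizationcontinuumtypes} applies. Part (a) of that theorem pins down $\pistar_{\theta,b}=1$ for every $\theta$, reducing the problem to non-decreasingness of $\pistar_{\theta,g}$. I would then read this off piecewise from parts (b)–(d): on $[0,\sutheta)$ the signal is given by $\frac{(2\mu-1)u'(\theta)}{\mu(u'(\theta)-2\lambda_g u(\theta))}$, which the theorem asserts is strictly less than $1$ and strictly increasing in $\theta$; on $[\sutheta,\otheta]$ the signal equals $1$. Since the piecewise formula is strictly below $1$ on the left of $\sutheta$ and equal to $1$ on the right, no downward jump can occur at the boundary, and $\pistar_{\theta,g}$ is globally non-decreasing.

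The main obstacle, such as it is, is purely bookkeeping: checking that the three cases exhaust the parameter space (\eqref{eqn:fullsurpluscontinuum} versus \eqref{eqn:notc2}, together with the degenerate $\lambda_b=0$) and invoking the strict inequality $\pistar_{\theta,g}<1$ on $[0,\sutheta)$ from \autoref{thm:characterizationcontinuumtypes}(b) to rule out a drop at $\sutheta$. All the heavy lifting has already been absorbed into the earlier characterization theorems, so the corollary follows by assembly.
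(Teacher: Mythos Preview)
Your proposal is correct and follows essentially the same approach as the paper, which simply cites the characterizations in \autoref{prop:fullsurpluscontinuum} and \autoref{thm:characterizationcontinuumtypes} and leaves the verification implicit; you have supplied the straightforward case-by-case inspection of the signal formulas that the paper omits. One minor quibble: in the $\lambda_b=0$ case it is the strict monotonicity of $u$, not regularity of $F$, that makes the buyer set an upper interval, but this does not affect the conclusion.
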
 Note that, unlike the sufficiency of the single-crossing property (SCP) in the standard case (see \cite{topkis1998supermodularity}, \cite{milgrom1994monotone}), in general there exists simple menus $\mcal$ which are IC and IR but not monotone. The reason is that non-monotone information products can be sold due to positive (marginal) costs of imitation. 

I now give an overview of the main steps in proving the main result, while the formal arguments can be found in \autoref{appendix:continuum}.

\subsubsection{Proof Sketch: General Case}

Let $\theta, \theta' \in \Theta$ and $\pi_{\theta'} \in \ecal'$ offered at a price of $t(\theta')$. Then, type $\theta$'s value for information product $\pi_{\theta'}$ equals \begin{equation}\label{eqn:no_max_valueforinformation} V(\pi_{\theta'},\theta) = \bigl(\Ggg(\pi_{\theta'},\theta) + \max\{\Gbb(\pi_{\theta'},\theta),\Gbg(\pi_{\theta'},\theta)\}-\mu\bigr)u(\theta)-t(\theta'). \end{equation}
Now, $V(\pi_{\theta'},\cdot)$ is not differentiable at every $\theta\in \Theta$, in particular it fails at $\theta=\theta'$ due to the term $e^{-|\theta-\theta'|(\lambda_g+\lambda_b)}$ in the transition matrix function $P$. Thus, the standard versions of the envelope theorem and the revenue equivalence theorem (see, for example, \cite{milgrom2002envelope}) do not hold in this setting. However, the left and right partial derivatives always exist, and are bounded. Therefore, I use the methods developed in \cite{carbajal2013mechanism} to characterize incentive compatibility when there are non-differentiable valuations. Define correspondence $q(\pi_{\theta'},\cdot)$ on $\Theta$ by \begin{equation}\label{eqn:correspondence}
    q(\pi_{\theta'},\theta) \equiv \{\gamma \in \mathbb{R} \mid V_\theta^+(\pi_{\theta'},\theta) \leq \gamma \leq V_\theta^-(\pi_{\theta'},\theta)\}.
\end{equation} In equilibrium, I show that the left and right partial derivatives of the correspondence \begin{equation*}\label{eqn:selectioncorrespondence} q(\pi_{\theta},\theta) = [V_\theta^+(\pi_{\theta},\theta), V_\theta^-(\pi_{\theta},\theta)], \quad \forall \theta \in \Theta, \end{equation*} is such that $V_\theta^+(\pi_{\theta},\theta) = V_\theta^-(\pi_{\theta},\theta)=0$ if $\pi_{\theta}$ is uninformative, and \begin{align*} & V_\theta^+(\pi_{\theta},\theta) = (\mu \pi_{\theta,g} + (1-\mu)\pi_{\theta,b}-\mu)u'(\theta) -2\mu \lambda_g(\pi_{\theta,g}+\pi_{\theta,b}-1)u(\theta),\\
& V_\theta^-(\pi_{\theta},\theta) = (\mu \pi_{\theta,g} + (1-\mu)\pi_{\theta,b}-\mu)u'(\theta) +2\mu \lambda_g(\pi_{\theta,g}+\pi_{\theta,b}-1)u(\theta),
\end{align*} otherwise.\footnote{\eqref{eqn:responsive} implies $\pi_{\theta,g}+\pi_{\theta,b}-1 \geq 0$ (strict for $\mu > 1/2$), hence $V_\theta^+(\pi_{\theta},\theta) \leq V_\theta^-(\pi_{\theta},\theta)$.} In the last step, for informative $\pi_{\theta'}$, and sufficiently small $\Delta=|\theta-\theta'|$, I use the simplified expression for $U(\pi_\theta', \theta)$ given in \eqref{eqn:no_max_expectedutility} to drop the max operator. Then, I evaluate the left and right partial derivatives using forward equation $P'(\Delta) = P(\Delta)Q$ at $\Delta = |\theta - \theta'|=0$, together with identity $\mu \lambda_g = (1-\mu)\lambda_b$. Following \cite{carbajal2013mechanism}, the generalized Mirrlees representation of indirect utility $U(\cdot)\equiv V(\pi_{(\cdot)},\cdot)-t(\cdot)$ is given by \[U(\theta)= U(\theta') + \int_{\theta'}^{\theta} \gamma(\pi_{\Tilde{\theta}},\Tilde{\theta})d\Tilde{\theta},\] for some integrable selection $\gamma(\pi_{(\cdot)},\cdot)$ chosen from the correspondence $q(\pi_{(\cdot)},\cdot)$. Moreover, I show that the associated virtual surplus function $\psi(\gamma(\pi_{(\cdot)},\cdot),\sotheta)$ is given by \begin{equation}\label{eqn:virtualsurplusgeneral}
\psi(\gamma(\pi_{\theta},\theta),\sotheta) = \bigl(\mu \pi_{\theta,g} + (1-\mu)\pi_{\theta,b}-\mu\bigr) u(\theta) - \gamma(\pi_{\theta},\theta) \frac{F(\sotheta)-F(\theta)}{f(\theta)},
\end{equation} where $\sotheta$ is defined as in \autoref{thm:characterizationcontinuumtypes}. Now, pointwise maximization of the virtual surplus requires seller to choose lowest integrable selection $\gamma(\pi_{\theta},\theta)=V^{-}_{\theta}(\pi_{\theta},\theta)$. If $\pi_{\theta}$ is informative, then the seller optimally chooses \[\gamma(\pi_{\theta},\theta) = (\mu \pi_{\theta,g} + (1-\mu)\pi_{\theta,b}-\mu)u'(\theta) -2\mu \lambda_g(\pi_{\theta,g}+\pi_{\theta,b}-1)u(\theta).\]
Observe that the first term is similar to what we have in a canonical virtual surplus function. What changes things here is the second term: $-2\mu \lambda_g(\pi_{\theta,g}+\pi_{\theta,b}-1)u(\theta)$. It describes by how much marginal revenue increases for the seller as a result of buyer's imitation costs. Thus, under regularity conditions, pointwise maximization results in fully informative $\pi^*_{\theta,g}=\pi^*_{\theta,b}=1$ for all $\theta \geq \sutheta$. Otherwise, for all $\theta <\sutheta$, I show that it is optimal to offer $\pi^*_{\theta,b}=1$ and $\pi^*_{\theta,g}$ such that it minimizes $\gamma(\pi^*_\theta,\theta)=0$. If $\mu > 1/2$, then seller can still offer distorted signal $\pi^*_{\theta,b}$ by setting \[(\mu \pi^*_{\theta,g} + (1-\mu)\pi^*_{\theta,b}-\mu)u'(\theta) -2\mu \lambda_g(\pi^*_{\theta,g}+\pi^*_{\theta,b}-1)u(\theta) = 0.\] Solving for it gives the expression in part (b) of \autoref{thm:characterizationcontinuumtypes}. I conclude by showing that the constructed menu is IC. \qed

\subsection{Discussion of Homogeneity Property}\label{section:nonhomgeneity}

It is natural to think of the type space $\Theta$ as not necessarily endowed with Euclidean metric $d(\cdot,\cdot)$. It is possible, say, in consulting example, that the information spillovers from bigger projects may be less/more than from smaller ones, violating property \eqref{prpty:homogeneityH}.

For that reason, consider an arbitrary metric space on $(\Theta,d_T)$, such that \begin{equation} d_T(\theta,\theta') := \lvert T(\theta)-T(\theta') \rvert, \quad \forall \theta, \theta' \in \Theta, \end{equation} for some nondecreasing, differentiable bijective function $T: \Theta \to [0,1]$. One way to relax \eqref{prpty:homogeneityH} is by imposing \eqref{prpty:homogeneityH} on the image space $T(\Theta)$, which, to the best of my knowledge, was first considered in \cite{hubbard2008modeling}.

\begin{property}[T-Homogeneity] \label{prpty:Thomogeneity} The common prior $\mu$ is \textit{T-homogeneous} if for all $\theta, \theta' \in \Theta$, \begin{equation} \tag{T-H} \label{prpty:ThomogeneityH}
    \p(\omega_{\theta} \mid \omega_{\theta'}) = \p(\omega_{T^{-1}(d_T(\theta,\theta'))} \mid \omega_{0}), \quad \forall \: (\omega_{\theta},\omega_{{\theta'}}) \in \Omega_{\theta}\times \Omega_{\theta'}.
\end{equation} \end{property}

Thus, proceeding as in the homogeneous case, we can write transition matrix function as $P(\theta,\theta') = \exp \bigl(Q d_T(\theta,\theta')\bigr)$, with Kolmogorov forward and backward differential equations given by $ \partial P(\theta',\theta) /\partial\theta = T'(\theta) P(\theta',\theta)Q$ and $\partial P(\theta',\theta)/\partial \theta' = - T'(\theta') Q P(\theta',\theta)$, for all $\theta > \theta'$. The difference now is that the rate of transition $Q(\theta)\equiv Q T'(\theta)$ is \textit{type dependent}, capturing differences in information spillovers across types. The complete characterization of the optimal menu in the non-homogeneous case is work in progress.

\section{Conclusion}

In this project, I studied how a monopolist seller trades information products to buyer(s) with heterogeneous preferences. Information spillovers are the main trade-off that constrains seller from engaging in first-degree price discrimination. I show that this aspect of the model yields novel features in an optimal menu, such as leaving rents possibly only to the ``middle'' types. Moreover, in this setting full surplus extraction can be feasible, and I provide necessary and sufficient conditions under which it is the case.

The framework suggests possible directions of future research. Generalizing the model by allowing heterogeneous marginal priors $\mu_\theta \neq \mu_{\theta'}$\textemdash intersecting with the setting of \cite{bergemann2018design}\textemdash remains an open question. In practice, say, consulting, firms usually differ both in their estimates of project profitability and willingness to pay. Another promising direction for future research is the study of competition in such markets. Methodologically, one can explore models of complex environments (e.g. experimentation and learning, strategic communication, etc.) by embedding a Markov chain, and compare the differences with Brownian motion technology. With a Brownian motion, one usually imposes quadratic loss preferences in order to gain tractability. An advantage of our setting is that one imposes no stringent restrictions on the shape of the ex-post payoff function $u(\cdot)$, allowing for a more comprehensive analysis of such models.


\phantomsection
\addcontentsline{toc}{section}{References}
\bibliographystyle{agsm}
\bibliography{ms}

\pagebreak \newpage

\appendix
\appendixpage

\addtocontents{toc}{\protect\setcounter{tocdepth}{1}}%
\renewcommand{\thesection}{\Alph{section}}

\section{Model}
\subsection{Proof of \autoref{prop:customized}}
Consider an IC and IR direct menu $\mcal=\{E(\theta),t(\theta)\}$ and fix some $E(\theta)=(S,\pi_\theta)$ for $\theta \in \Theta$. Define $S_i \subset S$ for $i=1,2$, to be the set of signals such that conditional on $s\in S_i$, type $\theta$ chooses action $a^*_i$. Set $S'=\{s_1,s_2\}$ and construct signal function $\pi'_\theta : \Omega_\theta \to \Delta(S')$ such that \[\pi'_\theta(s_i \mid \omega_\theta) := \int \limits_{s \in S_i, \omega_{-\theta}\in\Omega_{-\theta}}\pi(ds \mid \omega_\theta,\omega_{-\theta}) \mu(\omega_\theta,d\omega_{-\theta}), \quad \forall \: \omega_\theta \in \Omega_\theta, \: \forall \: i\in\{1,2\},\] where $\Omega_{-\theta} = \prod_{\theta' \neq \theta} \Omega_{\theta'}$. Observe that $\pi'_\theta$ is customized and responsive: $s_i$ leads to $a^*_{s_i}(\theta)=a_i$. Then, construct a simple menu $\mcal' = (\ecal',t)$ by only replacing each $\pi_\theta$ with $\pi'_{\theta}$. Now, note that type $\theta$ is indifferent between any signal structure $\pi_\theta$ that induces same distribution over $\Omega_\theta \times A$, since only state $\omega_\theta$ affects their ex-post payoff. Thus, by construction each type $\theta$ is indifferent between $\pi_{\theta}$ and $\pi'_\theta$. Hence, $\mcal'$ is individually rational. Moreover, since each type holds the same prior $\mu$ over $\Omega$, the constructed signal $\pi'_\theta$ is a garbling of $\pi_\theta$ for any $\theta'\neq \theta$. Thus, for each $\theta' \neq \theta$ the signal $\pi'_\theta$ is weakly worse off than $\pi_\theta$, satisfying incentive compatibility.\qed

\subsection{Proof of \autoref{lmm:symmetry}} Let $\theta, \theta' \in \Theta$. By the total law of probability, $\mu_\theta = \p_\mu(g_\theta,g_{\theta'}) +\p_\mu(g_\theta,b_{\theta'})$, and similarly, $\mu_{\theta'} = \p_\mu(g_{\theta},g_{\theta'})+\p_\mu(b_{\theta},g_{\theta'})$. By (SP), $\mu_\theta=\mu_{\theta'}$, thus implying $\p_\mu(g_\theta,b_{\theta'}) = \p_\mu(g_{\theta'},b_{\theta})$.\qed
\section{Two Types}
\subsection{Proof of \autoref{lmm:feasibleGtwotypes}}
Consider the states $(g_l, b_h)$ and $(b_l, g_h) \in \Omega$. Conditioning on the good state $g_\theta$, by definition of conditional probability one can write \[\p_\mu(b_h \mid g_l) = \frac{\p_{\mu}(g_l, b_h)}{\mu_l } \quad \text{and} \quad \p_\mu(b_l \mid g_h) = \frac{\p_{\mu}(b_l, g_h)}{\mu_h},\] which are well-defined given $\mu_\theta >0$. But then, $\p_{\mu}(g_l, b_h)=\p_{\mu}(b_l, g_h)$ by \eqref{eqn:symmetricSP}, and $\mu_l=\mu_h=\mu$ by \eqref{eqn:symmetry}. It follows that there exists $0 \leq P_{gb} \leq 1$ such that \[P_{gb} = \p_\mu(b_h \mid g_l) = \p_\mu(b_l \mid g_h).\] Proceeding in the same way by conditioning on the bad state $b_\theta$, there exists $0 \leq P_{bg} \leq 1$ such that \[P_{bg} = \p_\mu(g_h \mid b_l) = \p_\mu(g_l \mid b_h).\] Lastly, since $P_{bg}/\mu=\p_{\mu}(g_l,b_h)=P_{gb}/(1-\mu)$, we get the identity $\mu P_{gb} = (1-\mu) P_{bg}$. \qed
\subsection{Proof of \autoref{prop:fullsurplustwotypes}}
For $\overline{\mcal}$ to be optimal, it is only sufficient to check that (IC$_{h,l}$) holds, since the low type can't afford a price of $\tbar(h)$. Thus, if type $h$ imitates $l$, her payoff is $V(\pibar_{l},h)-\tbar(l)$, where $\tbar(l) = (1-\mu) u(l)$ and \begin{equation} \label{eqn:hbuyingE(l)}
    V(\pibar_{l},h) = \Bigl(\max \bigl\{\mu P_{gg} \: , \: \mu P_{gb} \bigr\} + \max \bigl\{ (1-\mu) P_{bg} \: , \: (1-\mu) P_{bb} \bigr\} \Bigr) u(h) - \mu u(h).
\end{equation} On the other hand, if $h$ reports truthfully, she receives $\pibar_{h}$ which is valued at $\vbar(h) = (1-\mu) u(h)$. However, she pays the full price $\tbar=(1-\mu) u(h)$ and thus gets $0$ rents.  Therefore, using equation \eqref{eqn:hbuyingE(l)}, full surplus extraction is feasible if and only if \begin{equation}\label{eqn:nonpositive}
    0 \geq \Bigl(\max  \bigl\{ \mu P_{gg} \: , \: \mu P_{gb} \bigr\} + \max \bigl\{(1-\mu) P_{bg} \: , \: (1-\mu) P_{bb} \bigr\}\Bigr) u(h) -\mu u(h) - (1-\mu) u(l).
\end{equation} Using the density matrix in \eqref{eqn:densitytwostate_simple}, rewrite equation \eqref{eqn:nonpositive} as \begin{align}
    & \mu u(h) + (1-\mu) u(l) \geq \Bigl(\max \bigl\{ \mu - \mu p\: , \: \mu p \bigr\} + \max \bigl\{\mu p \: , \: 1-\mu - \mu p \bigr\}\Bigr) u(h) \label{eqn:nonpositive_1}\\
    \Leftrightarrow \quad & (\sigma_g + \xi) u(h) + (\sigma_b + \xi) u(l) \geq \Bigl(\max \bigl\{ \sigma_g \: , \: \xi \bigr\} + \max \bigl\{\xi \: , \: \sigma_b \bigr\}\Bigr) u(h). \label{eqn:nonpositive_2}
\end{align}

I will use both \eqref{eqn:nonpositive_1} and \eqref{eqn:nonpositive_2} interchangeably. Now, for $\mu\geq 1/2$, consider the following cases:
\begin{enumerate}[label=(\alph*)]
    \item \label{fullsurplus:item} $\mu \in [1/2, 2/3)$. If \begin{enumerate}[label=(\roman*)]
    \item \label{fullsurplus:itemi} $p=0$, then for \eqref{eqn:nonpositive_1} to hold it must be that $u(l) \geq u(h)$, which contradicts the assumption $u(h)>u(l)$.
    \item \label{fullsurplus:itemii} $p \in \left(0,\frac{1-\mu}{2\mu}\right)$, then $\sigma_g \geq \sigma_b > \xi$. Then, \eqref{eqn:nonpositive_1} holds if and only if
    \begin{align*}
        \mu u(h) + (1-\mu)u(l) \geq (1-2\mu p) u(h)\quad \Leftrightarrow \quad \frac{u(l)}{u(h)} \geq \frac{1-\mu - 2\mu p}{1-\mu} \left(= \frac{\sigma_b-\xi}{\sigma_b+\xi}\right).
    \end{align*}
            
    \item \label{fullsurplus:itemiii} $p \in \left[\frac{1-\mu}{2\mu}, \frac{1}{2}\right]$, then $\sigma_g \geq \xi \geq \sigma_b$. Then, \eqref{eqn:nonpositive_1} holds if and only if \begin{align*}
        & \mu u(h) + (1-\mu)u(l) \geq (\mu(1-p) + \mu p) u(h) \Leftrightarrow (1-\mu)u(l) \geq 0,
    \end{align*} which is always the case.
            
    \item $p \in \left(\frac{1}{2}, \frac{1-\mu}{\mu}\right]$, then $\xi > \sigma_g \geq \sigma_b$. Then, \eqref{eqn:nonpositive_1} holds if and only if \begin{align*}
        \mu u(h) + (1-\mu)u(l) \geq (\mu p + \mu p) u(h) \quad \Leftrightarrow \quad \frac{u(l)}{u(h)} \geq \frac{(2p-1)\mu}{1-\mu} \left(=\frac{\xi-\sigma_g}{\xi+\sigma_b}\right).
    \end{align*}
    \end{enumerate} \qed
    
    \item $\mu \in [2/3, 1)$. If 
    \begin{enumerate}[label=(\roman*)]
        \item $p=0$, the result follows from \ref{fullsurplus:item},\ref{fullsurplus:itemi}.
        \item $p \in \left(0,\frac{1-\mu}{2\mu}\right)$. Then $\sigma_g \geq \sigma_b > \xi$, and the result follows from \ref{fullsurplus:item},\ref{fullsurplus:itemii}.
        \item $p \in \left[\frac{1-\mu}{2\mu}, \frac{1-\mu}{\mu}\right]$. Then $\sigma_g \geq \xi \geq \sigma_b$, and the result follows from \ref{fullsurplus:item},\ref{fullsurplus:itemiii}.
    \end{enumerate}
\end{enumerate} \qed

\subsection{Proof of \autoref{lmm:propertiestwotypes}}
\subsubsection{Part \ref{propertiestwotypesitemi}}
The first step is to show that either $\pibar_{l}$ or $\pibar_{h}$ is part of an optimal menu, following an argument similar as in \cite{bergemann2018design}. Assume to the contrary that $\pistar_{\theta} \neq \pibar_{\theta}$ for each $\theta=l,h$. Let $\pistar_{\theta}$ be the most expensive experiment in $\mathcal{M}^*$ for some $\theta$, priced at $\tstar(\theta)$. Since $\mu \in (0,1)$, then $V(\pistar_{\theta},\theta) < \vbar(\theta)=(1-\mu) u(\theta)$. Offer instead a menu $\mcal'$ which replaces $\pistar_{\theta}$ with the item $\pibar_{\theta}$ offered at a price of $t'(\theta)=\vbar(\theta)-V(\pistar_{\theta},\theta) + \tstar(\theta)$. Clearly, (IC$_{\theta,\theta'}$) holds since type $\theta$'s payoff is the same. On the other hand, any deviations from the other type $\theta'$ will only strictly increase seller's payoff, given that $\pibar_{\theta}$ is the most expensive item in the menu priced at $t'(\theta) > \tstar(\theta')$, contradicting the assumption that $\pistar_{\theta}\neq \pibar_{\theta}$ for each $\theta$.

Next step is to show that it is always the case that $\pistar_{h}=\pibar_{h}$. Assume to the contrary that $\pistar_{h} \neq \pibar_{h}$, which by the preceding argument implies that $\pistar_{l} = \pibar_{l}$ is part of an optimal menu. Now, if $\tstar(h) \geq \tstar(l)$, then offer $\pibar_{h}$ instead and increase the price $\tstar(h)$ by $\Delta = \vbar(h) - V(\pistar_{h},h)>0$, which strictly increases seller's revenue. Clearly, by construction \ichl and \irh are satisfied. On the other hand, type $l$ has no incentives of buying $\pibar_{h}$ since type $l$ is already getting the fully informative experiment at a strictly lower price. Suppose now that $\tstar(h)<\tstar(l)$. The seller can again offer instead $\pibar_{h}$, but now at the price equal to $\tstar(l)$ which still strictly increases revenue. Clearly, \ichl holds since $\pibar_{h}$ is at least as informative as $\pibar_{l}$ and has the same price. The same is true for type $l$. The only thing left to show is that \irh is satisfied. Observe that $\vbar(h) - \tstar(l) > \vbar(l) - \tstar(l) \geq 0$. The first inequality holds since $\vbar(h) > \vbar(l)$ given $u(h)>u(l)$, whereas the second inequality is implied by \irl constraint, concluding the proof.\qed

\subsubsection{Part \ref{propertiestwotypesitemii}} First, note that at least one of the \eqref{eqn:IR} constraints must bind, since the seller can (uniformly) increase the payments until one does. Assume to the contrary that \irh binds but \irl doesn't. From \ref{propertiestwotypesitemi}, since $\pistar_{h}=\pibar_{h}$, it follows that $\tstar(h) = \tbar(h) = \vbar(h)$. Moreover, since $\vbar(h) > \vbar(l)$, \[\tstar(h) = \vbar(h)>\vbar(l)>\tstar(l),\] where $\tstar(l) < V(\pistar_{l},l)$ by hypothesis that \irl doesn't bind. Then, increase the price $\tstar(l)$ and set it equal to $V(\pistar_{l},l)$, so that \irl binds. Clearly, the new menu satisfies \eqref{eqn:IR} for each type, as well as \ichl for $y$. It remains to show that \iclh also holds, which follows immediately given that $\tstar(h)=\tbar(h)$, which type $l$ can't afford to buy. Therefore, the seller can strictly improve revenue, contradicting the hypothesis.\qed

\subsubsection{Part \ref{propertiestwotypesitemiii}}
Suppose to the contrary that \ichl doesn't bind. Consider first the case when \irh also doesn't bind. But then, the seller can increase $\tstar(h)$ since \ichl doesn't bind, and strictly improve revenues. Hence, \irh must bind. Since $\pistar_{h}=\pibar_{h}$ by part \ref{propertiestwotypesitemi}, we have $\tstar(h)=\tbar(h)$. Since condition (C1) is not satisfied, it has to be the case that $\pistar_{l} \neq \pibar_{l}$, so there is still some surplus the seller can extract from the low type by offering a more informative experiment $\pi'_{l}$ than $\pistar_{l}$ and increase $\tstar(l)$ by the change in value $V(\pi'_{l},l)-V(\pistar_{l},l)>0$. Since \ichl is not binding, the seller can always choose $\pi'_{l}$ sufficiently informative so that \ichl is still satisfied and revenues are strictly improved, concluding the proof.\qed

\subsubsection{Part \ref{propertiestwotypesitemiv}}

Consider the following cases: \begin{enumerate}[label=(\roman*)]

\item $\sigma_g \geq \sigma_b > \xi$. First, I show that an optimal menu offers $\pistar_{l}$ such that the $\Ggg(\pi_l)\geq \Ggb(\pi_l)$ and $\Gbb(\pi_l)\geq \Ggb(\pi_l)$, implying that $\pistar_{l}$ is also responsive to type $h$. Assume to the contrary that either $\Ggg(\pi_l)<\Ggb(\pi_l)$ or $\Gbb(\pi_l)< \Gbg(\pi_l)$ holds, i.e. \begin{align} \text{either} \quad & \sigma_g \pi^*_{g,l} + \xi (1-\pi^*_{b,l}) < \xi \pi^*_{g,l} + \sigma_b (1-\pi^*_{b,l}), \label{eqn:a(h|l)responsive_good}\\
\text{or} \quad & \sigma_g (1-\pi^*_{g,l}) + \xi \pi^*_{b,l} > \xi (1-\pi^*_{g,l}) + \sigma_b\pi^*_{b,l} \label{eqn:a(h|l)responsive_bad}
\end{align} is satisfied with strict inequality. Construct a different $\pi'_{l}$ by increasing $\pi^*_{g,l} \uparrow $ to $\pi^{'}_{g,l}$ and $\pi^*_{b,l}\uparrow $ to $\pi^{'}_{b,l}$ uniformly, until one of the equations \eqref{eqn:a(h|l)responsive_good} and \eqref{eqn:a(h|l)responsive_bad} is satisfied with equality. There always exists such $\pi^{'}_{b,l}$ and $\pi^{'}_{b,l}$ since for $\pi^*_{g,l}=\pi^*_{b,l}=1$, the inequalities are reversed. Moreover, (Rsp$_{l}$) is satisfied since we are increasing the probability of type $l$ getting it right at each state, hence making experiment $\pi'_{l}$ \textit{strictly} more informative than $\pistar_{l}$ for type $l$. However, by construction, the value of $\pi'_{l}$ for type $h$ decreases. Therefore, next step is to show that the seller can improve on $\pi'_{l}$ up to the point that \eqref{eqn:a(h|l)responsive_good} and \eqref{eqn:a(h|l)responsive_bad} becomes with equality, or greater, while keeping the value of $\pi'_{l}$ for type $h$ not better than $\pibar_{h}$. This will in turn imply that he can strictly improve revenue by increasing $\tstar(l)$ to compensate for the additional information without changing incentives of both types.

Suppose first that only \eqref{eqn:a(h|l)responsive_good} binds. Then, increase only $\pi^*_{g,l}$ until \eqref{eqn:a(h|l)responsive_bad} binds as well. Note that the value from constructing the experiment stays the same for type $h$: the probability of getting it right given the good signal increases but probability of getting it right given the bad signal decreases by as much. Since, ex-post payoff $u(\theta)$ is the same in each state, the value stays the same. The same argument holds if \eqref{eqn:a(h|l)responsive_bad} binds first, where instead now $\pi^*_{b,l}$ increases to make other equation bind as well.

Next, I show that $\pi^*_{b,l} = 1$. If $\pistar_{l}$ is uninformative, then without loss we can let $\pi^*_{b,l}=1$ and $\pi^*_{g,l}=0$. Suppose instead that $\pi^*_{b,l}<1$. From (Rsp$_{l}$) it must be that $\pi^*_{g,l}>0$. I show that the seller can improve revenue by offering instead $\pi'_{g,l}= \pi^*_{g,l} - \epsilon_g$ and $\pi'_{b,l} = \pi^*_{b,l} + \epsilon_b$ for some $\epsilon_g, \epsilon_b >0$ such that $\pi'_{g,l} \geq 0$, $\pi'_{b,l} \leq 1$ and responsiveness still holds. Moreover, given the newly constructed experiment $\pi'_{l}$, I show that the seller can increase price to $\delta \equiv t'(l) - \tstar(l) > 0$ equal to the change in value for $l$, ensuring \ichl is satisfied. That is, it is sufficient to show that there exists $\delta, \epsilon_g, \epsilon_b >0$ such that \begin{align}
        & \bigl(\mu \pi'_{g,l} + (1-\mu)\pi'_{b,l}-\mu\bigr)u(l)- t(l) - \delta = 0 \label{eqn:improvingeps12}\\
       & u(h)-\tstar(h) \geq \bigl(\max \{\Gamma_{gg}(\pi'_{l}), \Gamma_{gb}(\pi'_{l})\} + \max \{\Gamma_{bg}(\pi'_{l}), \Gamma_{bb}(\pi'_{l})\}\bigr) u(h) - t'(l) - \delta \label{eqn:iceps12}
    \end{align}

In fact, we can also construct $\pi'_{l}$ such that $\Gamma_{gg}(\pi'_{l}) \geq \Gamma_{gb}(\pi'_{l})$ and $\Gamma_{bg}(\pi'_{l}) \leq \Gamma_{bb}(\pi'_{l})$. It is easy to check that by choosing $\epsilon_g$ and $\epsilon_b$ sufficiently small satisfying \begin{equation} \sigma_{g} \epsilon_{g} + \xi \epsilon_{b} = \xi \epsilon_{g} + \sigma_{b} \epsilon_{b} \label{eqn:respEprime}
\end{equation} would prove the claim. Therefore, constructing $\pi'_{l}$ satisfying \eqref{eqn:improvingeps12}-\eqref{eqn:respEprime} simplifies into solving the following system of equations: find $\delta, \epsilon_g, \epsilon_b > 0$ such that \[\begin{cases}
        (1-\mu)\epsilon_b -\mu \epsilon_g \geq \delta / u(l)\\
        (\sigma_b-\xi) \epsilon_b - (\sigma_g-\xi)\epsilon_g \leq \delta / u(h)\\
        \epsilon_g = \frac{\sigma_b - \xi}{\sigma_g - \xi} \epsilon_b.
    \end{cases}\]
    
    Thus, using \eqref{eqn:respEprime}, re-write the system of equations as \[\begin{cases}
        (\sigma_b+\xi)\epsilon_b -(\sigma_g+\xi) \frac{\sigma_b - \xi}{\sigma_g - \xi} \epsilon_b \geq \delta / u(l)\\
        (\sigma_b-\xi) \epsilon_b - (\sigma_g-\xi)\frac{\sigma_b - \xi}{\sigma_g - \xi} \epsilon_b \leq \delta/ u(h)
    \end{cases}\]
    
    The second inequality is always true. For the first inequality, note that \[ (\sigma_b + \xi)(\sigma_g-\xi) - (\sigma_g + \xi)(\sigma_b-\xi) > 0 \quad \Leftrightarrow \quad 2 \xi(\sigma_g-\sigma_b) > 0,\] if $\sigma_g > \sigma_b$. Thus, there exists sufficiently small $\epsilon_b>0$ which satisfies the system of equations. such that the seller's revenue improves. On the other hand, if $\sigma_g = \sigma_b$, then the seller's revenue doesn't necessarily improve, but increasing $\pi^*_{b,l}$ doesn't change the incentives either. This concludes the proof, showing that there exists an optimal menu such that $\pi^*_{b,l}=1$. \qed

\item $\xi > \sigma_g \geq \sigma_b$. The proof is similar as above. \qed

\end{enumerate}

\subsection{Proof of \autoref{thm:characterizationtwotypes}}

The construction is separated into two cases: \begin{enumerate}[label=(\roman*)]
    \item $\sigma_g \geq \sigma_b > \xi$. By \autoref{lmm:propertiestwotypes}, the seller's problem \eqref{eqn:objective}-\eqref{eqn:responsive} for the two-type case reduces to \begin{align} \label{eqn:objtwotypes}
    \max_{ \{ \pi_l, t(l), t(h)\} } \quad & \rho t(l) + (1-\rho)t(h) \tag{Obj{($l,h$)}}\\
    & u(h) - t(h) = \left( \sigma_g \pi_{l,g} + \xi (1-\pi_{l,g}) + \sigma_b \right) u(h) - t(l), \tag{IC$_{hl}$}\\
    &  \bigl((\sigma_g + \xi)\pi_{l,g} + \sigma_b + \xi \bigr)u(l) - t(l) \notag \\
    & \quad \geq (\max\{\sigma_g,\xi\}+\max\{\xi,\sigma_b\})u(l) - t(h), \tag{IC$_{lh}$}\\
    & u(h) - t(h) \geq (\sigma_g + \xi) u(h), \tag{IR$_{h}$}\\
    & \bigl((\sigma_g + \xi)\pi_{l,g} + \sigma_b + \xi \bigr)u(l) - t(l) = (\sigma_g + \xi)u(l), \tag{IR$_{l}$}\\
    & (\sigma_g + \xi)\pi_{l,g} + \sigma_b + \xi \geq (\sigma_g + \xi). \tag{Rsp$_l$}
\end{align} 

Now, using \irl, we can write the payments as \[t(l) = \bigl( (\sigma_g+\xi)\pi_{l,g} -( \sigma_g - \sigma_b) \bigr)u(l),\] and substitute into \ichl to get the following \begin{align*}
    t(h) & = u(h) + t(l) - \sigma_g \pi_{l,g}u(h) - \left(\xi (1-\pi_{l,g}) + \sigma_b\right) u(h)\\
    & = \bigl( 1 - \sigma_g \pi_{l,g} - \xi (1-\pi_{l,g}) - \sigma_b\bigr)u(h)  + t(l)\\
    & = (1 - \xi - \sigma_b - \pi_{l,g}(\sigma_g - \xi))u(h)  + \bigl( (\sigma_g+\xi)\pi_{l,g} -( \sigma_g - \sigma_b) \bigr)u(l)\\
    & = (\sigma_g+\xi)u(h) - (\sigma_g-\sigma_b)u(l) + \pi_{l,g} \bigl( (\sigma_g + \xi)u(l)-(\sigma_g-\xi)u(h) \bigr),
\end{align*} where we used $1-\xi-\sigma_b = \sigma_g + \xi$. Dropping (IC$_{lh}$) for now, substitute back to the seller's problem and get \begin{align}
    \max_{ \{\pi_l\}} \quad & C + ((\sigma_g+\xi)u(l)-(1-\rho)(\sigma_g-\xi)u(h)) \pi_{l,g}\tag{Obj{($l,h$)}}\\
    & \pi_{l,g} \geq \frac{(\sigma_g-\sigma_b)(u(h)-u(l))}{(\sigma_g-\xi)u(h)-(\sigma_g+\xi)u(l)}, \tag{IR$_{h}$}\\
    & \pi_{l,g} \geq \frac{\sigma_g-\sigma_b}{\sigma_g + \xi}, \tag{Rps$_{l}$}
\end{align} 

where \[C \equiv (1-\rho)\bigl((\sigma_g+\xi)u(h) - (\sigma_g-\sigma_b)u(l)\bigr) -\rho(\sigma_g-\sigma_b)u(l).\] Note that since (C1) doesn't hold, i.e. \begin{equation}\label{eqn:notc1}
    \frac{u(l)}{u(h)} < \frac{\sigma_b - \xi}{\sigma_b + \xi},
\end{equation} the denominator in (IR$_{h}$) is strictly positive \[\frac{u(l)}{u(h)} < \frac{\sigma_g - \xi}{\sigma_g + \xi}.\] If not, then \[
    \frac{u(l)}{u(h)} \geq \frac{\sigma_g - \xi}{\sigma_g + \xi} \geq \frac{\sigma_b - \xi}{\sigma_b + \xi}\] would imply (C1) and arrive at a contradiction.
    
    First, suppose the coefficient in front of $\pi_{l,g}$ is positive, that is, if \[\frac{u(l)}{u(h)} \geq \frac{(1-\rho)(\sigma_g-\xi)}{(\sigma_g+\xi)}.\] 
    
    The optimal solution puts $\pi^*_{g,l}$ as high as possible subject to (IC$_{lh}$) being satisfied. The value from buying $\pibar_h$ for type $l$ is given by $(\sigma_g + \sigma_b)u(l)-t(h)$. Therefore, we need to have \begin{align*}
    \bigl((\sigma_g + \xi)\pi_{l,g} + \sigma_b + \xi \bigr)u(l) - t(l)  \geq (\sigma_g + \sigma_b)u(l)-t(h) .
\end{align*}
    Substituting for $t(h)$ and $t(l)$ one gets \[\pi^*_{g,l} \leq \frac{(\sigma_g + \xi)u(h)-(\sigma_g - \xi)u(l)}{(\sigma_g - \xi)u(h)-(\sigma_g + \xi)u(l)}.\] By \eqref{eqn:notc1} the RHS is at least $1$, so we can put $\pi^*_{g,l}=1$. In that case, seller leaves rents to $h$ by charging \[t^*(h)=2\xi u(h) + (\sigma_b+\xi)u(l)< (\sigma_b+\xi)u(h) = (1-\mu)u(h),\] where the strict inequality is obtained using \eqref{eqn:notc1}. Otherwise, suppose the coefficient in front of $\pi_{l,g}$ is negative, i.e. $\rho$ is sufficiently small, then the optimal solution is \[\pi^*_{g,l} = \min \left\{\frac{(\sigma_g-\sigma_b)(u(h)-u(l))}{(\sigma_g-\xi)u(h)-(\sigma_g+\xi)u(l)},\frac{\sigma_g-\sigma_b}{\sigma_g - \xi}\right\}=\frac{(\sigma_g-\sigma_b)(u(h)-u(l))}{(\sigma_g-\xi)u(h)-(\sigma_g+\xi)u(l)}.\] Moreover, one can check that $t(h) = \overline{t}(y)=(1-\mu)u(h)$, i.e. seller extracts all rents from the high type by distorting information to the low type. Clearly, \iclh holds for this case since type $h$ pays the full price for $\pibar_{h}$, which is not IR for type $l$. \qed

\item $\xi > \sigma_g \geq \sigma_b$. The difference now is that $\pistar_{l,g}=1$. Moreover, if $h$ buys $\pistar_l$, it can be shown that in equilibrium $\pistar_l$ is such that type $h$ responds by taking the good action if receiving the bad signal, and taking the bad action if receiving the good signal. Moreover, (Rsp$_l$) is always satisfied since $\sigma_g + \xi + (\sigma_b + \xi)\pi_{l,b} \geq (\sigma_g + \xi)$. Lastly, I drop (IC$_{lh}$) here since one can show like in the previous case that it always holds in equilibrium. Therefore, the seller's problem is as following: \begin{align} \label{eqn:objtwotypes_secondcase}
    \max_{ \{ \pi_{l,b}, t(l), t(h)\} } \quad & \rho t(l) + (1-\rho)t(h) \tag{Obj{($l,h$)}}\\
    & u(h) - t(h) = \bigl( \xi(1+\pi_{l,b}) + \sigma_b (1-\pi_{l,b}) \bigr) u(h) - t(l), \tag{IC$_{hl}$}\\
    & u(h) - t(h) \geq (\sigma_g + \xi) u(h), \tag{IR$_{h}$}\\
    & \bigl(\sigma_g+\xi + (\sigma_b+\xi)\pi_{l,b}\bigr)u(l) - t(l) = (\sigma_g + \xi)u(l). \tag{IR$_{l}$}
\end{align} Now, rewrite \irl as $t(l) = \bigl(\sigma_b+\xi\bigr)\pi_{l,b}u(l)$, then substitute into \ichl to get the following \begin{align*}
    t(h) & = (\sigma_g + \xi - (\xi-\sigma_b)\pi_{l,b})u(h) + (\sigma_b+\xi)\pi_{l,b}u(l).
\end{align*} Substitute back to the seller's problem and get \begin{align}
    \max_{ \{\pi_l\}} \quad & (1-\rho)(\sigma_g+\xi)u(h) + \pi_{l,b}\bigl((\sigma_b+\xi)u(l)-(1-\rho)(\xi-\sigma_b)u(h)\bigr) \tag{Obj{($l,h$)}}\\
    & \pi_{l,b} \geq \frac{(\sigma_g-\sigma_b)u(h)}{(\xi-\sigma_b)u(h)-(\sigma_b+\xi)u(l)}. \tag{IR$_{h}$}
\end{align}

Note that since (C1) doesn't hold, we have \begin{equation}\label{eqn:notc1case2}
    \frac{u(l)}{u(h)} < \frac{\xi-\sigma_g}{\xi+\sigma_b} \leq \frac{\xi-\sigma_b}{\xi+\sigma_b},
\end{equation} hence the denominator in IR constraint is strictly positive.

Now, suppose first that the coefficient in front of $\pi_{l,b}$ is positive, that is, \[(\sigma_b+\xi)u(l)-(1-\rho)(\xi-\sigma_b)u(h) \geq 0 \quad \Leftrightarrow \quad \frac{u(l)}{u(h)} \geq \frac{(1-\rho)(\xi-\sigma_b)}{\xi+\sigma_b}.\] Then, seller optimally sets $\pi^*_{g,l}=1$. In that case, seller leaves rents to $h$ by charging \[t^*(h)=(\sigma_g + \sigma_b)u(h)+(\sigma_b+\xi)u(l) < (\sigma_b+\xi)u(h)=(1-\mu)u(h),\] where the strict inequality is obtained using \eqref{eqn:notc1case2}. Otherwise, suppose the coefficient in front of $\pi_{l,b}$ is negative, i.e. $\rho$ is sufficiently small, then the optimal solution is \[\pi^*_{g,l} = \frac{(\sigma_g-\sigma_b)u(h)}{(\xi-\sigma_b)u(h)-(\sigma_b+\xi)u(l)}.\] Moreover, one can check that $t(h) = \overline{t}(y)=(1-\mu)u(h)$, i.e. seller extracts all rents from the high type by distorting information to the low type. \qed 
\end{enumerate}

\section{Continuum of Types} \label{appendix:continuum}

\subsection{Proof of \autoref{lmm:feasibleGcontinuum}}

Considering the type space $\Theta$ as time $t\in[0,\otheta]$, it follows from the theory of MC that  properties \eqref{prpty:markovianM} and \eqref{prpty:homogeneityH} imply that $G$ is a homogeneous (continuous) MC, characterized by the transition matrix function $P$ as described in the statement of the lemma (see \cite{cox1977theory}). Property \eqref{eqn:symmetricSP} implies that $\mu$ is a \textit{stationary} distribution of MC satisfying $\mu = \mu P(\Delta)$, for $\Delta\geq 0$. Now, taking the derivative with respect to $\Delta$ on both sides and setting $\Delta=0$, using forward (or backward) equation, we get identity $\mu Q = 0$. Expanding, we have \begin{equation} \label{eqn:localbalance}
    \mu \lambda_g = (1-\mu) \lambda_b.
\end{equation} Since the MC is stationary and satisfies equation \eqref{eqn:localbalance}, also known as the \textit{local balance} equation, it follows that it is a reversible MC (see \cite{kelly1979reversibility}). \qed

\subsection{Proof of \autoref{prop:fullsurpluscontinuum}} 

\textbf{Necessity$(\Rightarrow)$.} The necessity of condition \eqref{eqn:fullsurpluscontinuum} follows from local downward IC's. Suppose $\mstar$ is optimal, where $\pi^*_{g,\theta}=\pi^*_{b,\theta}=1$ and $t^*(\theta)=(1-\mu)u(\theta)$. First, consider $\theta>0$. For any $\Delta>0$ with $\theta-\Delta \in \Theta$, since (IC$_{\theta, \theta-\Delta}$) holds, we have \begin{align}\label{eqn:localic_fullsurplus} 0 & \geq \bigl(\mu \max \{P_{gg}(\Delta),P_{gb}(\Delta)\} + (1-\mu)\max\{P_{bg}(\Delta),P_{bb}(\Delta)\} - \mu \bigr)u(\theta) -(1-\mu)u(\theta-\Delta). \end{align} One can easily check that $P_{gg}(\Delta) \geq P_{gb}(\Delta)$ for any $\Delta\geq 0$. Moreover, for $\Delta>0$ sufficiently small, $P_{bb}(\Delta) \geq P_{bg}(\Delta)$ as well. Therefore, for sufficiently small $\Delta>0$, equation \eqref{eqn:localic_fullsurplus} reduces to \begin{equation}\label{eqn:localic_fullsurplusnew}
    (1-\mu)u(\theta-\Delta) \geq \bigl(\mu P_{gg}(\Delta) + (1-\mu)P_{bb}(\Delta) -\mu \bigr) u(\theta).
\end{equation} Multiplying by $-1/\Delta$ and then adding $(1-\mu)u(\theta)$ to both sides of \eqref{eqn:localic_fullsurplusnew} gives \begin{align*}
           & (1-\mu) \frac{u(\theta) - u(\theta-\Delta)}{\Delta} \leq \left((1-\mu)\frac{1-P_{bb}(\Delta)}{\Delta} + \mu \frac{1-P_{gg}(\Delta)}{\Delta} \right) u(\theta).
        \end{align*} Since the above equation holds for every $\Delta>0$, taking limit as $\Delta\to 0$ yields  \begin{align*}
            & (1-\mu)\partial_{-} u(\theta) \leq \left(-(1-\mu)P'_{bb}(0) - \mu P'_{gg}(0)\right)u(\theta)\\
            \Leftrightarrow \quad & u'(\theta) \leq \left(\lambda_b + \frac{\mu}{1-\mu} \lambda_g\right)u(\theta) = 2\lambda_b u(\theta),
            \end{align*} where for the last step we use identity $(1-\mu)\lambda_b = \mu \lambda_g$, that $u$ is differentiable and that the derivative of transition function $P$ using forward equation $P'(\Delta) = P(\Delta)Q$. Specifically, \begin{align} \label{eqn:derivativePgg}
            & P'_{gg}(\Delta) = \lambda_b - (\lambda_g + \lambda_b)P_{gg}(\Delta) = -\lambda_g \exp(-\Delta(\lambda_g+\lambda_b))\\ \label{eqn:derivativePbb}
            \And \quad & P'_{bb}(\Delta) = \lambda_g - (\lambda_g + \lambda_b)P_{bb}(\Delta) = -\lambda_b \exp(-\Delta(\lambda_g+\lambda_b)).
    \end{align} Lastly, if $\theta=0$, one can proceed similarly with constraints (IC$_{\Delta,0}$), which yield the condition at $\theta=0$ too.

\textbf{Sufficiency$(\Leftarrow)$.} The only thing we need to show is that $\overline{\mcal}$ is IC when \eqref{eqn:fullsurpluscontinuum} holds. First, it is clear that upward deviations are not profitable, as \eqref{eqn:IR} would be violated, hence it remains to show downward IC only. Fix $\theta>\theta'$. If $\Delta = \theta-\theta'>0$ is sufficiently large, then $P_{bg}(\theta-\theta') \geq P_{bb}(\theta-\theta')$. If that's the case, then type $\theta$ takes the good action for each signal, implying that purchasing $\pibar_{\theta'}$ is not informative. Thus, consider the case in which $\theta$ is sufficiently close to $\theta'$, so that $P_{bb}(\theta-\theta') > P_{bg}(\theta-\theta')$. That is, $\pi_{\theta'}$ is informative for $\theta$, so \[\bigl(\mu P_{gg}(\theta-\theta') + (1-\mu)P_{bb}(\theta-\theta') -\mu \bigr) > 0.\] Rewriting \eqref{eqn:localic_fullsurplusnew}, we need to show that \begin{equation}\label{eqn:localic_fullsurplussufficiency} (1-\mu)u(\theta') \geq \bigl(\mu P_{gg}(\theta-\theta') + (1-\mu)P_{bb}(\theta-\theta') -\mu \bigr) u(\theta).
\end{equation} Now, it is sufficient to show that the (RHS) of \eqref{eqn:localic_fullsurplussufficiency} is decreasing in $\theta$, for all $\theta>\theta'$. Taking derivative of the (RHS) with respect to $\theta$, we want to show \begin{align*} \bigl(\mu P_{gg}(\theta-\theta') + (1-\mu)P_{bb}(\theta-\theta') -\mu \bigr) u'(\theta) + \bigl(\mu P'_{gg}(\theta-\theta') + (1-\mu)P'_{bb}(\theta-\theta')\bigr) u(\theta) \leq 0. \end{align*} Since $u'(\theta) \leq 2 \lambda_b u(\theta)$, it is sufficient to show that \begin{align*} \frac{\bigl(-\mu P'_{gg}(\theta-\theta') -(1-\mu)P'_{bb}(\theta-\theta')\bigr)}{\bigl(\mu P_{gg}(\theta-\theta') + (1-\mu)P_{bb}(\theta-\theta') -\mu \bigr) } \geq 2 \lambda_b, \end{align*} where the numerator and denominator of the (LHS) are both strictly positive. Simplifying the expression using \eqref{eqn:derivativePgg}, \eqref{eqn:derivativePbb} and identity $(1-\mu)\lambda_b = \mu \lambda_g$ gives \begin{align*}
    &\frac{ 2 \mu \lambda_g \exp(-\Delta (\lambda_g + \lambda_b))}{(1-\mu)\bigl(2 \mu \exp(-\Delta(\lambda_g + \lambda_b)) + 1 - 2\mu\bigr)} \geq 2 \lambda_b \quad \Leftrightarrow \quad \frac{2 \lambda_b}{2\mu - (2\mu-1)/\exp(-\Delta (\lambda_g + \lambda_b))} \geq 2 \lambda_b.
\end{align*} Finally, the denominator is at most one since \[2\mu - \frac{2\mu-1}{\exp(-\Delta (\lambda_g + \lambda_b))} \leq 1 \quad \Leftrightarrow \quad 2\mu-1 \leq \frac{2\mu-1}{\exp(-\Delta (\lambda_g + \lambda_b))},\] hence concluding the proof. \qed

\subsection{Proof of \autoref{prp:postedprice}}
Since states are perfectly correlated, any responsive information product $\pi_{\theta'}$ is also responsive for each type $\theta$, and yields a value of \[V(\pi_{\theta'},\theta) = \big(\mu \pi_{\theta',g} + (1-\mu) \pi_{\theta',b} - \mu\big)u(\theta).\] Letting $x(\theta') \equiv \mu \pi_{\theta',g} + (1-\mu) \pi_{\theta',b}$, and since $\mu \pi_{\theta',g}+(1-\mu)\pi_{\theta',b} \geq \mu$ from \eqref{eqn:responsive}, $x(\theta') \in [0,1-\mu]$. Normalizing $\Tilde{x} \equiv x / (1-\mu)$, the problem reduces to that of a monopolist selling a good with probability $\Tilde{x}$ to a buyer having private valuation $v(\theta) \equiv (1-\mu)u(\theta)$, with type distributed according to $F$. If $F$ is regular, it is well known the optimal menu is to allocate the good with probability $\Tilde{x}=1$ at a posted price $p^*$, i.e. offer fully informative product $\pibar_{\theta}$ to each type that pays a price of $p^*$. \qed

\subsection{Proof of \autoref{lmm:notc2}}
Proof is straightforward. \qed

\subsection{Proof of \autoref{thm:characterizationcontinuumtypes}} 

We solve for an optimal menu of the seller's problem: \begin{align}
    & \max_{\{\pi_{\theta},t(\theta)\}} \quad &&  \int_{0}^{\otheta} t(\theta) dF(\theta) \tag{\ref{eqn:objective}} \\
    & \text{subject to} \quad
    && \left(\mu \pi_{\theta,g} + (1-\mu)\pi_{\theta,b}\right) u(\theta) - t(\theta) \geq \Ggg(\pi_{\theta'},\theta) u(\theta) \notag\\
    & && + \max\{\Gbg(\pi_{\theta'},\theta), \Gbb(\pi_{\theta'},\theta)\}u(\theta) - t(\theta'), &&& \forall \theta, \theta' \in \Theta, \tag{\ref{eqn:IC}}\\
    & && \left(\mu \pi_{\theta,g} + (1-\mu)\pi_{\theta,b}\right) u(\theta) - t(\theta) \geq \mu u(\theta), &&& \forall \theta \in \Theta, \tag{\ref{eqn:IR}}\\
    & && \mu \pi_{\theta,g} + (1-\mu)\pi_{\theta,b} \geq \mu, &&& \forall \theta \in \Theta. \tag{\ref{eqn:responsive}}
\end{align}

\subsubsection{Characterizing Incentive Compatibility}
Fix $\theta' \in \Theta$ and consider $\pi_{\theta'} \in \ecal$ priced at $t(\theta')$. Recall that type $\theta$ value for information product $\pi_{\theta'}$ is defined as \[V(\pi_{\theta'},\theta) = \bigl(\Ggg(\pi_{\theta'},\theta) + \max \{\Gbg(\pi_{\theta'},\theta), \Gbb(\pi_{\theta'},\theta)\}-\mu\bigr)u(\theta)-t(\theta'), \quad \forall \theta \in \Theta.\] To simplify $V(\pi_{\theta'},\theta)$, consider the following cases: \begin{itemize}

\item $\pi_{\theta'}$ \textit{is not informative.} Then, for all $\theta \in \Theta$, $\Gbg(\pi_{\theta'},\theta) \geq \Gbb(\pi_{\theta'},\theta)$, and so $\Ggg(\pi_{\theta'},\theta)+\Gbg(\pi_{\theta'},\theta)=\mu$. Thus, \begin{equation}\label{eqn:informationvalue_uninformative} V(\pi_{\theta'},\theta) = -t(\theta'), \quad \forall \theta \in \Theta.
\end{equation}

\item $\pi_{\theta'}$ \textit{is strictly informative.} Then, I show that for all $\theta \in \Theta$, one can write \begin{equation}\label{eqn:informationvalue_strictlyinformative}
    V(\pi_{\theta'},\theta) = \begin{cases}
        -t(\theta'), & \text{ if } |\theta-\theta'|>\Delta^*(\pi_{\theta'}),\\
        \bigl(\Ggg(\pi_{\theta'},\theta) + \Gbb(\pi_{\theta'},\theta)-\mu\bigr)u(\theta)-t(\theta'), & \text{ if } |\theta-\theta'|\leq \Delta^*(\pi_{\theta'}),
\end{cases}
\end{equation} for some $\Delta^*(\pi_{\theta'})>0$. First, I claim that there exists $\Delta(\pi_{\theta'})>0$ such that for all $\theta \in \Theta$ with $|\theta'-\theta| \leq \Delta(\pi_{\theta'})$, type $\theta$ is also responsive to signal function $\pi_{\theta'}$, i.e. $\Gbg(\pi_{\theta'},\theta) \leq \Gbb(\pi_{\theta'},\theta)$. To see this, note that $\pi_{\theta'}$ being strictly informative implies non-binding responsive constraint, that is, \begin{equation*}
        \mu \pi_{\theta',g} + (1-\mu)\pi_{\theta',b} > \mu \quad \Leftrightarrow \quad \mu (1-\pi_{\theta',g}) < (1-\mu)\pi_{\theta',b}.
\end{equation*} But then, since $P_{gg}(\Delta),P_{bb}(\Delta) \to 1$, and $P_{gb}(\Delta),P_{bg}(\Delta) \to 0$, as $\Delta \to 0$, it is readily seen that such $\Delta(\pi_{\theta'})>0$ exists. Let $\Delta^*(\pi_{\theta'})>0$ be the maximum possible value, which exists by continuity of $\Gbg(\cdot,\pi_{\theta'})$ and $\Gbb(\cdot,\pi_{\theta'})$, and that $\Theta$ is closed and bounded.\footnote{Since $\Gbg \uparrow$ and $\Gbb \downarrow$ as $|\theta-\theta'|\uparrow$, then $\Delta^*(\pi_{\theta'})$ is strictly increasing in quantity (of information) $\mu \pi_{\theta',g} + (1-\mu)\pi_{\theta',b}$.}
\end{itemize} 

Let the correspondence $q(\pi_{(\cdot)},\cdot)$ be defined as in \eqref{eqn:correspondence}. It is nonempty, closed-valued, bounded and measurable (since $\pi$ measurable), so it admits integrable selections $\gamma(\pi_{(\cdot)},\cdot)$.  Denote buyer's indirect utility by $U(\cdot) \equiv V(\pi_{(\cdot)},\cdot)-t(\cdot)$ defined on $\Theta$. The following lemma follows readily from Theorem 1 in \cite{carbajal2013mechanism}:

\begin{proposition}[Characterizing Incentive Compatibility]\label{lmm:characterizeIC} The menu $\mcal=\{\pi_{\theta},t(\theta)\}_{\theta\in \Theta}$ is incentive compatible if and only if there exists an integrable selection $\gamma(\pi_{(\cdot)},\cdot)$ of the correspondence $q(\pi_{(\cdot)},\cdot)$ such that the following are satisfied: \begin{enumerate}[label=(\roman*)]
    \item \textit{Integral monotonicity:} for all $\theta,\theta' \in \Theta$, \[V(\pi_{\theta},\theta)-V(\pi_{\theta},\theta') \geq \int_{\theta'}^{\theta} \gamma(\pi_{\Tilde{\theta}},\Tilde{\theta})d\Tilde{\theta} \geq V(\pi_{\theta'},\theta)-V(\pi_{\theta'},\theta').\]
    \item Mirrlees representation: for all $\theta,\theta' \in \Theta$, \[U(\theta)= U(\theta') + \int_{\theta'}^{\theta} \gamma(\pi_{\Tilde{\theta}},\Tilde{\theta})d\Tilde{\theta}.\]
\end{enumerate}\end{proposition}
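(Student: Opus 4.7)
The plan is to apply Theorem~1 of \cite{carbajal2013mechanism}, which characterizes incentive compatibility in screening environments with non-differentiable valuations. The main work is verifying its hypotheses in the current setting; the two implications then follow from the general theorem. I would organize the argument into a regularity step followed by the two standard directions.

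The regularity step shows that for each $\pi_{\theta'} \in \ecal'$, the map $\theta \mapsto V(\pi_{\theta'}, \theta)$ is Lipschitz continuous on $\Theta$ with a constant independent of $\pi_{\theta'}$. Using the piecewise expression in \eqref{eqn:informationvalue_uninformative}--\eqref{eqn:informationvalue_strictlyinformative}, on each of the two regions $\lvert \theta - \theta' \rvert \leq \Delta^*(\pi_{\theta'})$ and $\lvert \theta-\theta'\rvert > \Delta^*(\pi_{\theta'})$, the value $V(\pi_{\theta'},\theta)$ is a product of smooth quantities, namely $u(\theta)$ (with derivative bounded by $M$) and entries of $P(\lvert\theta-\theta'\rvert)$; continuity at the switching point is immediate from the defining property of $\Delta^*(\pi_{\theta'})$. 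The left and right partial derivatives $V^-_\theta$ and $V^+_\theta$ therefore exist everywhere and are uniformly bounded, so the correspondence $q(\pi_{(\cdot)},\cdot)$ is nonempty, closed-valued, bounded and measurable, and admits integrable selections.

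For sufficiency, suppose an integrable selection $\gamma$ of $q(\pi_{(\cdot)},\cdot)$ satisfying (i) and (ii) exists. For any $\theta, \theta' \in \Theta$, combining Mirrlees representation (ii) with the right inequality of integral monotonicity (i) yields
\[ U(\theta) - U(\theta') = \int_{\theta'}^{\theta} \gamma(\pi_{\tilde\theta},\tilde\theta)\,d\tilde\theta \geq V(\pi_{\theta'},\theta) - V(\pi_{\theta'},\theta'). \]
Substituting $U(\theta)=V(\pi_\theta,\theta)-t(\theta)$ and $U(\theta')=V(\pi_{\theta'},\theta')-t(\theta')$ recovers (IC$_{\theta,\theta'}$). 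For necessity, assume $\mcal$ is IC and write $U(\theta) = \sup_{\theta' \in \Theta}[V(\pi_{\theta'},\theta) - t(\theta')]$, a supremum over the uniformly Lipschitz family from the regularity step, hence itself Lipschitz and absolutely continuous. This delivers the Mirrlees representation with the candidate selection $\gamma(\pi_\theta,\theta) := U'(\theta)$. To verify that this candidate lies in $q(\pi_\theta,\theta)$ almost everywhere, apply (IC$_{\theta+h,\theta}$) with $h>0$: rearranging gives $U(\theta+h)-U(\theta) \geq V(\pi_\theta,\theta+h)-V(\pi_\theta,\theta)$; dividing by $h$ and sending $h \downarrow 0$ yields $U'_+(\theta) \geq V^+_\theta(\pi_\theta,\theta)$. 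The symmetric argument using (IC$_{\theta-h,\theta}$) gives $U'_-(\theta) \leq V^-_\theta(\pi_\theta,\theta)$. At any point of differentiability of $U$, we therefore have $U'(\theta) \in [V^+_\theta(\pi_\theta,\theta),V^-_\theta(\pi_\theta,\theta)]=q(\pi_\theta,\theta)$. Integral monotonicity (i) then follows by rewriting (IC$_{\theta,\theta'}$) and (IC$_{\theta',\theta}$) in terms of $U$ and applying the Mirrlees representation.

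The main obstacle is the careful handling of the non-smoothness of $V(\pi_{\theta'}, \cdot)$ at $\theta=\theta'$, where the exponential $e^{-\lvert\theta-\theta'\rvert(\lambda_g+\lambda_b)}$ in the transition matrix function has a kink, and at $\lvert\theta-\theta'\rvert=\Delta^*(\pi_{\theta'})$, where the max operator switches branches. Carbajal-Ely's framework is designed precisely for valuations of this shape, so the work reduces to confirming that left and right derivatives exist globally with integrable bounds and that the correspondence $q$ admits a measurable selection\textemdash both of which follow from the explicit form of $V$ and the boundedness of $u'$ together with the values of $V^-_\theta$ and $V^+_\theta$ at $\theta=\theta'$ reported just after \eqref{eqn:correspondence}.
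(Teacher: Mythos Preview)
Your proposal is correct and takes essentially the same approach as the paper: both verify the equi-Lipschitz hypothesis on the family $\{V(\pi_{\theta'},\cdot)\}$ and then invoke Theorem~1 of \cite{carbajal2013mechanism}. The paper simply checks the technical assumptions (measurability of $\ecal$, convexity and boundedness of $\Theta$, and the uniform Lipschitz bound via the bounded one-sided derivatives of $e^{-|\theta-\theta'|(\lambda_g+\lambda_b)}$ and $u$) and then cites the theorem outright, whereas you additionally unpack the sufficiency and necessity directions; this extra detail is fine but not needed once the hypotheses are in place.
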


\begin{proof} To apply their results, we only need to check that some technical assumptions are satisfied. First, the space $\ecal$ is measurable, and that the type space $\Theta$ is convex and bounded set in $\rbb^k$. Another requirement we need to show is having the family of valuation functions \[\left\{V(\pi_{\theta'},\theta) \mid \pi_{\theta'} \in \mathcal{E} \text{ for some } \theta' \in \Theta \right\}\] to be equi-Lipschitz on $\Theta$: $\exists l \in \rbb_{+}$ such that \begin{equation*}\label{eqn:equilipschitz}
    \lvert V(\pi_{\theta'},\theta) - V(\pi_{\theta'},\hat{\theta})\rvert \leq l \lvert \theta - \hat{\theta} \rvert,
\end{equation*} for all $\theta,\hat{\theta} \in \Theta$, and for all $\pi_{\theta'} \in \ecal$ where $\theta' \in \Theta$. To show this, note first that $V(\pi_{\theta'},\cdot)$ is continuous in the second argument since $e^{-|\theta-\theta'|(\lambda_g+\lambda_b)}$ and $u(\theta)$ are continuous in $\theta$. Moreover, $u$ is uniformly bounded on $\Theta$ with constant $M$. On the other hand, left and right derivatives of $e^{-|\theta-\theta'|(\lambda_g+\lambda_b)}$ are bounded by constant $C\equiv (\lambda_g+\lambda_b)$. Lastly, since $\pi_{\theta,g}, \pi_{\theta,b} \in [0,1]$, then clearly $V(\pi_{\theta'},\cdot)$ has bounded left and right derivatives, hence for each $\pi_{\theta'} \in \ecal$, $V(\pi_{\theta'},\cdot)$ is Lipschitz continuous with constant Lipschitz constant $l(\theta')$.\footnote{See \cite{royden1968real}, 4th Ed., Section 6.2, Problem 23.} In fact, it can be shown that $\sup \{l(\theta') \mid \theta' \in \Theta\} \leq l<\infty$, for some constant $l$ as a function of finite parameters $M,\lambda_g,\lambda_b,\mu$ and $u(\otheta)$, proving that family of valuation functions is equi-Lipschitz. Therefore, the assumptions of Theorem 1 in \cite{carbajal2013mechanism} are satisfied, hence our lemma follows readily.
\end{proof}

\subsubsection{Solving Relaxed Problem}

Consider an arbitrary $\mcal$, and observe that if for some type $\hat{\theta} > \gtheta$ seller extracts full surplus, then following the same steps as in the proof of \autoref{prop:fullsurpluscontinuum}, the seller can extract full surplus for all $\theta \geq \hat{\theta}$. Therefore, without loss of generality, restrict seller to such $\mcal$ and define \[A := \{\theta \in \Theta \mid V(\overline{\pi}_{\theta}, \theta)=0\} \quad \And \quad \sotheta = \inf (A),\] where $A$ is the set of types for which seller extracts full surplus. Note that $\sotheta \geq \gtheta$, and if $A=\emptyset$, then $\sotheta = \otheta$. That is, $\sotheta=\otheta$ vacuously implies that seller doesn't engage in full surplus extraction for any type. 

Then, dropping integral monotonicity condition, and using $t(\theta)=V(\pi_{\theta},\theta)-U(\theta)$, we can write the relaxed seller's problem as following: \begin{align}
    \max_{\sotheta \geq \gtheta, U(\theta),\pi, \gamma} \quad & \int_{0}^{\sotheta} \Bigl( \bigl(\mu \pi_{\theta,g} + (1-\mu)\pi_{\theta,b}-\mu\bigr)u(\theta) - U(\theta)\Bigr) f(\theta)d\theta + \int_{\sotheta}^{\otheta}(1-\mu)u(\theta)f(\theta)d\theta\notag\\
    \text{s.t.} \quad & U(\theta)= U(\theta') + \int_{\theta'}^{\theta} \gamma(\pi_{\Tilde{\theta}},\Tilde{\theta})d\Tilde{\theta}, \quad \forall \theta, \theta' \leq \sotheta, \notag\\
    & U(\theta) \geq 0, \quad \forall \theta \leq \sotheta, \notag\\
    & \mu \pi_{\theta,g} + (1-\mu)\pi_{\theta,b} \geq \mu, \quad \forall \theta \leq \sotheta. \notag
\end{align} Now, optimally set $U(0)=0$, and using \begin{equation*}
    U(\theta) = \int_{0}^{\theta} \gamma(\pi_{\Tilde{\theta}},\Tilde{\theta})d\Tilde{\theta}, \quad \forall \theta \in \Theta,
\end{equation*} re-write objective function as \begin{equation}\label{eqn:virtualobjective}
    \int_{0}^{\sotheta} \psi(\gamma(\pi_{\theta},\theta),\sotheta) f(\theta) d\theta + \int_{\sotheta}^{\otheta}(1-\mu)u(\theta)f(\theta)d\theta,
\end{equation} where $\psi(\gamma(\pi_{(\cdot)},\cdot),\sotheta)$ is the virtual surplus function given in \eqref{eqn:virtualsurplusgeneral}. Now, pointwise maximization requires seller to choose lowest integrable selection $\gamma(\pi_{\theta},\theta)=V^{-}_{\theta}(\pi_{\theta},\theta)$. If $\pi_{\theta}$ is uninformative, then clearly $\psi(\gamma(\pi_{\theta},\theta),\sotheta)=0$. Otherwise, If $\pi_{\theta}$ is informative, then the seller chooses $\gamma(\pi_{\theta},\theta)=V^{-}_{\theta}(\pi_{\theta},\theta)$, that is, \[\gamma(\pi_{\theta},\theta) = (\mu \pi_{\theta,g} + (1-\mu)\pi_{\theta,b}-\mu)u'(\theta) -2\mu\lambda_g(\pi_{\theta,g}+\pi_{\theta,b}-1)u(\theta).\] In this case, we solve the relaxed seller's problem by pointwise maximization of \eqref{eqn:virtualobjective}, where the virtual surplus is \begin{align*} \psi(\gamma(\pi_{\theta},\theta),\sotheta) & = \bigl(\mu \pi_{\theta,g} +
    (1-\mu)\pi_{\theta,b}-\mu\bigr) u(\theta)\\
    & \quad -\Bigl( \bigl(\mu \pi_{\theta,g} +
    (1-\mu)\pi_{\theta,b}-\mu\bigr) u'(\theta) - 2\mu\lambda_g\bigl(\pi_{\theta,g}+\pi_{\theta,b}-1\bigr) u(\theta)\Bigr)\frac{F(\sotheta)-F(\theta)}{f(\theta)},
\end{align*} subject to individual rationality constraint $U(\theta)\geq 0$.

\begin{lemma}[No Distortion of Bad State's Signal]\label{lmm:nodistortionbadstate} There exists an optimal menu $\mstar$ such that $\pi^*_{b,\theta}=1$ for all $\theta \in \Theta$. \end{lemma}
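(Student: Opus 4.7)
The plan is to exploit the pointwise structure of the virtual surplus $\psi(\gamma(\pi_\theta,\theta),\sotheta)$ derived just before the lemma. The key observation is that the quantity $x_\theta := \mu\pi_{\theta,g}+(1-\mu)\pi_{\theta,b}-\mu$ alone determines the truthful value $V(\pi_\theta,\theta)=x_\theta u(\theta)$, and hence pins down the first term of $\psi$. Fixing $x_\theta$ and varying $\pi_{\theta,b}$ therefore leaves the value and the $xu(\theta)$ summand untouched; the only channel through which $\pi_{\theta,b}$ enters $\psi$ is via $y_\theta := \pi_{\theta,g}+\pi_{\theta,b}-1$, which appears in the imitation-cost part of $\gamma(\pi_\theta,\theta)$, contributing the summand $+2\mu\lambda_g\, y_\theta\, u(\theta)\cdot (F(\sotheta)-F(\theta))/f(\theta)$ to $\psi$. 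Since this coefficient is nonnegative on $[0,\sotheta]$ (strictly positive for $\theta<\sotheta$), the plan is to show that, holding $x_\theta$ fixed, $y_\theta$ is maximized at $\pi_{\theta,b}=1$, and to conclude that the pointwise optimizer can be taken to satisfy $\pi^*_{\theta,b}=1$.

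The arithmetic to verify this is short. Solving $x_\theta=\mu\pi_{\theta,g}+(1-\mu)\pi_{\theta,b}-\mu$ for $\pi_{\theta,g}$ and substituting gives
\[
y_\theta \;=\; \frac{x_\theta}{\mu} \;+\; \frac{2\mu-1}{\mu}\,\pi_{\theta,b}.
\]
For $\mu>1/2$ this is strictly increasing in $\pi_{\theta,b}$, so $\pi^*_{\theta,b}=1$ strictly improves $\psi$ whenever $\theta<\sotheta$ (and is neutral at $\theta=\sotheta$); for $\mu=1/2$, $y_\theta=2x_\theta$ is independent of $\pi_{\theta,b}$, so taking $\pi^*_{\theta,b}=1$ is without loss. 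The induced $\pi_{\theta,g}=(x_\theta+2\mu-1)/\mu$ lies in $[(2\mu-1)/\mu,1]\subset[0,1]$ for every feasible $x_\theta\in[0,1-\mu]$, and responsiveness (Rsp$_\theta$) is satisfied since it is equivalent to $x_\theta\geq 0$. For types whose pointwise optimum is uninformative ($x_\theta=0$), the virtual surplus is zero regardless, so we may still relabel $\pi^*_{\theta,b}=1$; for $\theta>\sotheta$ full-surplus extraction already gives $\pi^*_{\theta,g}=\pi^*_{\theta,b}=1$.

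The main obstacle is that pointwise maximization of $\psi$ only addresses the envelope/local incentive condition of Proposition~\ref{lmm:characterizeIC}, not the integral-monotonicity half of that characterization, so global IC must be confirmed separately once $\pi^*_{\theta,b}$ has been fixed to $1$. I expect to handle this by noting that under the substitution the only remaining screening instrument is $\pi_{\theta,g}$ and showing that $\pi^*_{\theta,g}$, as it will be determined in \autoref{thm:characterizationcontinuumtypes}, is nondecreasing in $\theta$ (using \autoref{cond:payoff} and regularity of $F$). This monotonicity, combined with Blackwell-comparability of the resulting experiments, yields integral monotonicity by a standard single-crossing argument on $\mu\pi_{\theta,g}+(1-\mu)\pi_{\theta,b}$. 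A secondary check is that off-path imitation values $V(\pi_\theta,\theta')$ for $\theta'\neq\theta$ are not inflated by the substitution: with $\pi_{\theta,b}=1$, the cross-type expressions $\Ggg,\Ggb,\Gbg,\Gbb$ simplify considerably ($\Gbg$ in particular decreases), and one can show they depend on $\pi_{\theta,g}$ through the same monotonic channel as the own-type value, ruling out new profitable deviations.
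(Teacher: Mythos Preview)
Your argument is correct and is essentially the same as the paper's: both fix the truthful informativeness $x_\theta=\mu\pi_{\theta,g}+(1-\mu)\pi_{\theta,b}$ and observe that, holding $x_\theta$ constant, pushing $\pi_{\theta,b}$ to $1$ weakly increases $y_\theta=\pi_{\theta,g}+\pi_{\theta,b}-1$ (strictly for $\mu>1/2$), thereby lowering the selection $\gamma$ and raising the virtual surplus pointwise; you supply the explicit identity $y_\theta=x_\theta/\mu+(2\mu-1)\pi_{\theta,b}/\mu$ and the feasibility check, which the paper only sketches. Your discussion of the integral-monotonicity/IC obstacle is apt but goes beyond the scope of this lemma in the paper: there the lemma is proved entirely within the relaxed problem, and global IC is verified afterward in a separate ``Showing IC'' step.
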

\begin{proof} Consider an optimal menu $\mstar$, and suppose there exists $\theta\in \Theta$ such that $\pi^*_{b,\theta} < 1$. Then, we have $x^*(\theta) \equiv \mu \pi^*_{g,\theta} + (1-\mu)\pi^*_{b,\theta}<1$. Now, consider another menu $\widetilde{\mcal}$ which only replaces $\pi_\theta$ with $\Tilde{\pi}_\theta$, such that $\Tilde{\pi}_{b,\theta}=1$ and $\mu \Tilde{\pi}_{g,\theta} + (1-\mu) = x^*(\theta)$, which is possible given that $\mu \geq 1/2$. But then, clearly $\Tilde{\pi}_{g,\theta} + 1 \geq \pi^*_{g,\theta} + \pi^*_{b,\theta}$, implying that the integrable selection $\gamma(\Tilde{\pi}(\theta),\theta) \leq \gamma^*(\pi_{\theta},\theta)$. This (weakly, for $\mu=1/2$) improves pointwise the objective value in \eqref{eqn:virtualobjective}. \end{proof} 

Therefore, we can reduce the problem to choosing a one dimensional instrument $\pi_{\theta,g}$ only. That is, we solve for \begin{align}
    \max_{\sotheta\geq \gtheta, \pi_{\theta,g}} \quad & \int_{ 0}^{\sotheta} \psi(\gamma(\pi_{\theta,g},\theta),\sotheta)f(\theta)d\theta + \int_{\sotheta}^{\otheta}(1-\mu)u(\theta)f(\theta)d\theta \notag\\
    \text{s.t.} \quad 
    & \int_{0}^{\theta} \Bigl(\bigl( \mu \pi_{\theta,g} +
    1-2\mu\bigr)u'(s) -2\mu\lambda_g \pi_{\theta,g}u(s) \Bigr) ds \geq 0, \quad \forall \theta \leq \sotheta, \notag\\
    & \pi_{\theta,g} \geq \frac{2\mu-1}{\mu}, \quad \forall\theta \in \Theta, \notag
\end{align} where \begin{align} \notag
    \psi(\pi_{\theta,g},\theta) & = \bigl(\mu \pi_{\theta,g} +
    1-2\mu\bigr) u(\theta)\\ \label{eqn:onedimensionalvirtualsurplus}
    & \quad -\Bigl( \bigl(\mu \pi_{\theta,g} +
    1-2\mu\bigr) u'(\theta) - 2\mu\lambda_g \pi_{\theta,g} u(\theta)\Bigr)\frac{F(\sotheta)-F(\theta)}{f(\theta)}.
\end{align}

Now, rewrite the virtual surplus as \begin{align}
    \psi(\pi_g,\theta) 
    & = \pi_{\theta,g} \mu \left( u(\theta)-\bigl(u'(\theta)-2\lambda_g u(\theta)\bigr)\frac{F(\sotheta)-F(\theta)}{f(\theta)}\right)\notag\\ \label{eqn:modifiedvirtualsurplus}
    & \quad -(2\mu-1)\left( u(\theta)-u'(\theta)\frac{F(\sotheta)-F(\theta)}{f(\theta)}\right).
\end{align}

Now, fix $\sotheta$ and optimize \eqref{eqn:modifiedvirtualsurplus} first wrt. $\pi_{g,\theta}$. Then, the term $u'(\theta)-2\lambda_g u(\theta)$ is strictly decreasing in $\theta$ since $u(\cdot)$ satisfies \autoref{cond:payoff}, whereas the hazard rate $h_{\sotheta}(\theta) \equiv f(\theta) / (F(\sotheta)-F(\theta))$ is increasing for $0 \leq \theta \leq \sotheta$. Thus, the term \[\chi(\theta,\sotheta)\equiv \left( u(\theta)-\bigl(u'(\theta)-2\lambda_g u(\theta)\bigr)\frac{F(\sotheta)-F(\theta)}{f(\theta)}\right) \uparrow \text{ in $\theta$ on } 0 \leq \theta \leq \sotheta.\] Set $\sutheta \in \Theta$ such that $\chi(\sutheta,\sotheta)=0$, if is exists, and $\sutheta=0$ otherwise. Note that $0\leq\sutheta \leq \gtheta \leq \sotheta$ since \eqref{eqn:notc2} holds, with strict $\sutheta < \gtheta$ whenever $\gtheta > 0$. Then, for all $\theta \in [\sutheta,\sotheta]$, pointwise maximization requires to optimally set $\pi^*_{\theta,g}=1$. Otherwise, for all $\theta < \sutheta$, pointwise minimization of \eqref{eqn:onedimensionalvirtualsurplus} requires setting $\pi_{\theta,g} \geq (2\mu-1)/\mu$ as low as possible while making sure $U(\theta) \geq 0$ which requires $\gamma(\pi_{\theta},\theta) \geq 0$ for all $\theta$. Then, seller optimally sets $\gamma(\pi_{\theta},\theta) = 0$ with $U(\theta)=0$. Thus, if $\mu>1/2$, seller can choose strictly informative $\pi^*_\theta$ such that $\gamma(\pi^*_{\theta},\theta)=0$, and be better off than choosing uninformative products. That is, seller chooses $\pi^*_\theta$ such that \[\bigl(\mu \pi^*_{\theta,g} + 1-2\mu\bigr) u'(\theta) - 2\mu\lambda_g \pi^*_{\theta,g} u(\theta) = 0 \quad \Leftrightarrow \quad \pi^*_{\theta,g} = \frac{2\mu-1}{\mu}\frac{u'(\theta)}{u'(\theta)-2\lambda_g u(\theta)},\] where \[u'(\theta)-2\lambda_g u(\theta) \geq u'(\theta)-2\lambda_b u(\theta) > 0, \quad \forall \theta \leq \gtheta.\] Since $\mu>1/2$, it follows that $\pi^*_{\theta,g}>0$. Moreover, $\pi^*_{\theta,g}<1$ since \[\frac{2\mu-1}{\mu} \frac{u'(\theta)}{u'(\theta) - 2\lambda_g u(\theta)} < 1 \quad \Leftrightarrow \quad u'(\theta)> \frac{2\mu}{1-\mu}\lambda_g u(\theta) = 2\lambda_b u(\theta).\] On the other hand, when $\mu=1/2$, \[\bigl(\mu \pi^*_{\theta,g} + 1-2\mu\bigr) u'(\theta) - 2\mu\lambda_g \pi^*_{\theta,g} u(\theta)>0, \quad \forall \theta \leq \gtheta.\] Thus, instead, seller can set $\gamma(\pi^*_{\theta},\theta)=0$ by using uninformative $\pi_{\theta}$ since $V_\theta^+(\pi_{\theta},\theta) = V_\theta^-(\pi_{\theta},\theta)=0$, with single-valued correspondence. That is, seller choose optimally $\pi^*_{\theta,g}=(2\mu-1)/\mu \big\rvert_{\mu=1/2}=0$. Note that \[\frac{2\mu-1}{\mu} \frac{u'(\theta)}{u'(\theta) - 2\lambda_g u(\theta)} \Big\rvert_{\mu=1/2} = 0,\] so, with small abuse of notation, I write for seller's optimal choice to be \[\pi^*_{\theta,g} = \frac{2\mu-1}{\mu} \frac{u'(\theta)}{u'(\theta) - 2\lambda_g u(\theta)}, \quad \forall \theta <\sutheta \text{ and } \forall \mu \geq 1/2.\] This completes the characterization of the menu $\mstar$ described in \autoref{thm:characterizationcontinuumtypes}, where the associated payments can be found using $t^*(\theta) = V(\pi^*_{\theta},\theta)-U^*(\theta)$, where \[U^*(\theta) = \int_{0}^{\theta} \gamma(\pi^*_{\Tilde{\theta}},\theta)d\Tilde{\theta}.\] Now, it only remains to show it satisfies IC.

\subsubsection{Showing IC}
One way to go is by showing $\pi^*_{\theta}$ satisfies integral monotonicity. Another direction is to simply show $\eqref{eqn:IC}$. 

First, fix $\theta' <\sutheta (\leq \gtheta)$, if any, which receives $\pistar_{b,\theta'}=1$ and \[ \pistar_{\theta',g} = \frac{2\mu-1}{\mu} \frac{u'(\theta')}{u'(\theta') - 2\lambda_g u(\theta')},\] having payments $\tstar(\theta')=V(\pistar_{\theta'},\theta')$. We check that no type $\theta\neq \theta'$ has a profitable deviation to $\theta'$. Clearly, no type $\theta < \theta'$ wants to deviate to $\theta'$ since the expected payoff from purchasing $\pistar(\theta')$ is strictly negative. To show the other direction, we proceed similarly as in the proof of \autoref{prop:fullsurpluscontinuum}. Specifically, the case with $\Delta = \theta-\theta'>0$ sufficiently large is similar. Thus, let $\theta$ be sufficiently close to $\theta'$ such that $\Ggg(\pi_{\theta'},\theta) > \Gbg(\pi_{\theta'},\theta)$ and \[\Ggg(\pi_{\theta'},\theta) + \Gbb(\pi_{\theta'},\theta) -\mu > 0.\] Then, it is sufficient to show that \begin{equation}\label{eqn:distorted_strictinformativedeviations} \frac{\partial}{\partial \theta}(\Ggg(\pi_{\theta'},\theta) + \Gbb(\pi_{\theta'},\theta) -\mu \bigr) u(\theta) \leq 0.
\end{equation} Taking partial derivative wrt. $\theta$, the inequality in \eqref{eqn:distorted_strictinformativedeviations} is given by \begin{align*}
    &\bigl(\mu \pistar_{\theta',g} P_{gg}(\Delta) + \mu (1-\pistar_{\theta',g}) P_{gb}(\Delta) + (1-\mu)P_{bb}(\Delta)-\mu\bigr)u'(\theta)\\
    & \leq -\bigl(\mu \pistar_{\theta',g} P'_{gg}(\Delta) + \mu(1-\pistar_{\theta',g})(-P'_{gg}(\Delta))+(1-\mu)P'_{bb}(\Delta) \bigr)u(\theta),
\end{align*} where we used $P'_{gb}(\Delta)=(1-P_{gg}(\Delta))'=-P'_{gg}(\Delta)$.  Substituting in equations \eqref{eqn:derivativePgg} and \eqref{eqn:derivativePbb}, it reduces to showing \begin{align*}
    & \bigl(\mu \pistar_{\theta',g}\bigl(2\mu -1 + 2(1-\mu)\expdelta\bigr)-(2\mu-1)\bigr)u'(\theta)\\
    & \leq 2 \mu \pistar_{\theta',g}\lambda_g \expdelta u(\theta)\\
    \Leftrightarrow \quad & \bigl(\mu \pistar_{\theta',g}-(2\mu-1)\bigr)u'(\theta) - 2(1-\mu)\mu \pistar_{\theta',g}\bigl(1-\expdelta\bigr)u'(\theta) \\
    & \leq 2 \mu \pistar_{\theta',g}\lambda_g \expdelta u(\theta)
\end{align*} Then, substituting for $\pistar_{\theta',g}$, one arrives at the following inequality \begin{align*}
    & \frac{2\lambda_g u(\theta')}{u'(\theta') - 2\lambda_g u(\theta')} u'(\theta) - 2(1-\mu)\frac{u'(\theta')}{u'(\theta') - 2\lambda_g u(\theta')}\bigl(1-\expdelta\bigr)u'(\theta) \\
    & \leq 2 \frac{u'(\theta')}{u'(\theta') - 2\lambda_g u(\theta')}\lambda_g \expdelta u(\theta)\\
    \Leftrightarrow \quad & \lambda_g u(\theta') u'(\theta) - (1-\mu)u'(\theta')\bigl(1-\expdelta\bigr)u'(\theta) \\
    & \leq u'(\theta')\lambda_g \expdelta u(\theta)
\end{align*} Dividing both sides by $u(\theta) u(\theta')$ and rearranging both sides gives \begin{align*} & \frac{u'(\theta)}{u(\theta)} \left( \lambda_g - (1-\mu) \frac{u'(\theta')}{u(\theta')}\bigl(1-\expdelta\bigr)\right) \leq \frac{u'(\theta')}{u(\theta')}\lambda_g \expdelta\\
\Leftrightarrow \quad & \frac{u'(\theta)}{u(\theta)} \left( \lambda_g - (1-\mu) \frac{u'(\theta')}{u(\theta')}\right) \leq \frac{u'(\theta')}{u(\theta')}\left(\lambda_g - (1-\mu) \frac{u'(\theta)}{u(\theta)}\right)\expdelta.
\end{align*} Now, note that the term on the (LHS) is negative since $\theta'<\gtheta$ implies \[u'(\theta) > 2\lambda_b u(\theta) > (\lambda_g + \lambda_b)u(\theta) = \frac{\lambda_g}{1-\mu}u(\theta).\] On the other hand, if $\theta$ is large enough so that the term on the (RHS) is nonnegative, then we are done. Thus, assuming that both are strictly negative, rearranging the inequality gives \[\frac{(1-\mu) \frac{u'(\theta')}{u(\theta')}\frac{u'(\theta)}{u(\theta)}-\lambda_g \frac{u'(\theta)}{u(\theta)}}{(1-\mu) \frac{u'(\theta')}{u(\theta')}\frac{u'(\theta)}{u(\theta)}-\lambda_g \frac{u'(\theta')}{u(\theta')}} \geq \expdelta,\] with both numerator and denominator being strictly positive. By monotonicity and log-concavity of $u$ we have \[\frac{u'(\theta')}{u(\theta')} \geq \frac{u'(\theta)}{u(\theta)},\] and since $\expdelta \leq 1$, we get the desired inequality.

The case for $\theta' \in [\sutheta ,\sotheta]$ is similar, where each type now gets fully informative $\pibar^*_{\theta',g}=1$, and pays a price $\tstar(\theta')$ which increases proportionally to the increasing marginal costs of imitations. Hence, I omit this case. Thus, I conclude that $\mtilde$ is IC, completing the proof. \qed

\end{document}